\newtheorem{theorem}{Theorem}[section]
\newtheorem{claim}[theorem]{Claim}
\newtheorem*{claim*}{Claim}
\newenvironment{claimproof}[1]{\par\noindent\emph{Proof.}\space#1}{\hfill $\blacksquare$\\}
\newtheorem{lemma}[theorem]{Lemma}
\newtheorem{corollary}[theorem]{Corollary}
\newtheorem{observation}[theorem]{Observation}
\theoremstyle{definition} 
\newtheorem{definition}[theorem]{Definition}
\newtheorem{remark}[theorem]{Remark}
\newtheorem{proposition}[theorem]{Proposition}
\numberwithin{theorem}{section}
\newcommand{\Int}{\mathrm{Int}}
\newcommand{\eps}{\varepsilon}
\newcommand{\Oh}{\mathcal{O}}
\newcommand{\dist}{\mathrm{dist}}
\newcommand{\tw}{\mathrm{tw}}
\newcommand{\Otilde}{\widetilde{\Oh}}
\newcommand{\Ot}{\Otilde}
\newcommand{\nat}{\mathbb{N}}
\newcommand{\Z}{\mathbb{Z}}
\newcommand{\Bb}{\mathcal{B}}
\newcommand{\Ff}{\mathcal{F}}
\newcommand{\Cc}{\mathcal{C}}
\newcommand{\Tt}{\mathcal{T}}
\newcommand{\Uu}{\mathcal{U}}
\newcommand{\Gg}{\mathcal{G}}
\newcommand{\bnd}{\partial}
\newcommand{\real}{\mathbb{R}}
\newcommand{\poly}{\mathrm{poly}}
\newcommand{\polylog}{\mathrm{polylog}}
\newcommand{\bd}{\partial}
\newcommand{\sph}{\mathbb{S}}
\newcommand{\etal}{\emph{et~al.}~}
\newcommand{\outcome}[1]{\textbf{Outcome~#1}}
\newcommand{\grid}{\mathrm{Grid}}
\renewcommand{\root}{\mathrm{root}}
\renewcommand{\leq}{\leqslant}
\renewcommand{\geq}{\geqslant}
\renewcommand{\le}{\leqslant}
\renewcommand{\ge}{\geqslant}
\renewcommand{\hat}{\widehat}
\title{Separator Theorem and Algorithms for Planar Hyperbolic Graphs
\thanks{This research was partially carried out during the Parameterized Algorithms Retreat of the University of Warsaw, PARUW 2022, held in B\k{e}dlewo in April 2022.}
}
\author{
    S\'andor Kisfaludi-Bak\footnote{Department of Computer Science, Aalto University, Finland, \textsf{sandor.kisfaludi-bak@aalto.fi}}
    \and
    Jana Masa\v r\'ikov\'a\footnote{Institute of Informatics, University of Warsaw, Poland, \textsf{jnovotna@mimuw.edu.pl}. 
    }
    \and
    Erik Jan van Leeuwen\footnote{Department of Information and Computing Sciences, Utrecht University, The Netherlands, \textsf{e.j.vanleeuwen@uu.nl}.}
    \and
    Bartosz Walczak\footnote{Department of Theoretical Computer Science, Faculty of Mathematics and Computer Science, Jagiellonian University, Krak\'ow, Poland, \textsf{bartosz.walczak@uj.edu.pl}.  Partially supported by the National Science Center of Poland under grant No.\ 2019/34/E/ST6/00443.}
    \and
    Karol W\k{e}grzycki\footnote{Saarland University and Max Planck Institute for Informatics,
        Saarbr\"ucken, Germany, \textsf{wegrzycki@cs.uni-saarland.de}. 
    This work is part of the project TIPEA that has
    received funding from the European Research Council (ERC) under the European Unions Horizon
    2020 research and innovation programme (grant agreement No.\ 850979).}
}
\date{}
\begin{document}
\maketitle

\thispagestyle{empty}
\begin{abstract}
The hyperbolicity of a graph, informally, measures how close a graph is (metrically) to a tree. Hence, it is intuitively similar to treewidth, but the measures are formally incomparable. Motivated by the broad study of algorithms and separators on planar graphs and their relation to treewidth, we initiate the study of planar graphs of bounded hyperbolicity.

Our main technical contribution is a novel balanced separator theorem for planar
$\delta$-hyperbolic graphs that is substantially stronger than the classic
planar separator theorem. For any fixed $\delta \geq 0$, we can find balanced
separator that induces either a single geodesic (shortest) path or a single
geodesic cycle in the graph.

An important advantage of our separator is that the union of our separator
(vertex set $Z$) with any subset of the connected components of $G - Z$ induces
again a planar $\delta$-hyperbolic graph, which would not be guaranteed with an
arbitrary separator. Our construction runs in near-linear time and guarantees
that size of separator is $\mathrm{poly}(\delta) \cdot \log n$.

As an application of our separator theorem and its strong properties, we obtain
two novel approximation schemes on planar $\delta$-hyperbolic graphs. We prove
that {\sc Maximum Independent Set} and the {\sc Traveling Salesperson} problem
have a near-linear time FPTAS for any constant $\delta$, running 
in $n\, \mathrm{polylog}(n) \cdot 2^{\Oh(\delta^2)} \cdot \eps^{-\mathcal{O}(\delta)}$ time.

We also show that our approximation scheme for {\sc Maximum Independent Set} has essentially the best possible running time under the Exponential Time Hypothesis (ETH). This immediately follows from our third contribution: we prove that {\sc Maximum Independent Set} has no $n^{o(\delta)}$-time algorithm on planar $\delta$-hyperbolic graphs, unless ETH fails.
\end{abstract}

\begin{picture}(0,0)
\put(462,-200)
{\hbox{\includegraphics[width=40px]{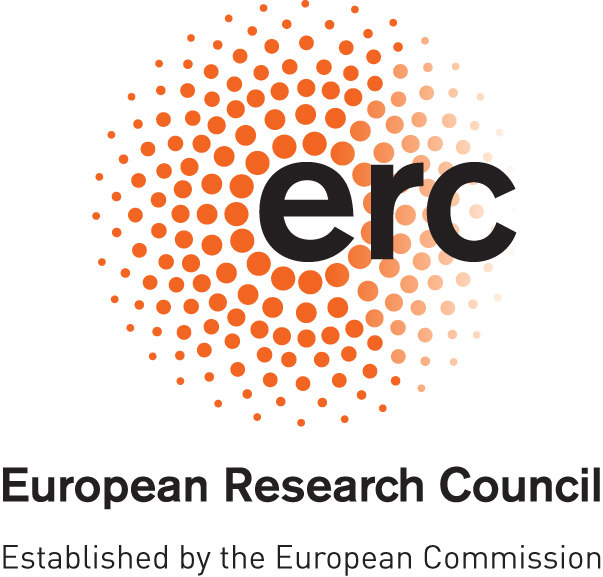}}}
\put(452,-260)
{\hbox{\includegraphics[width=60px]{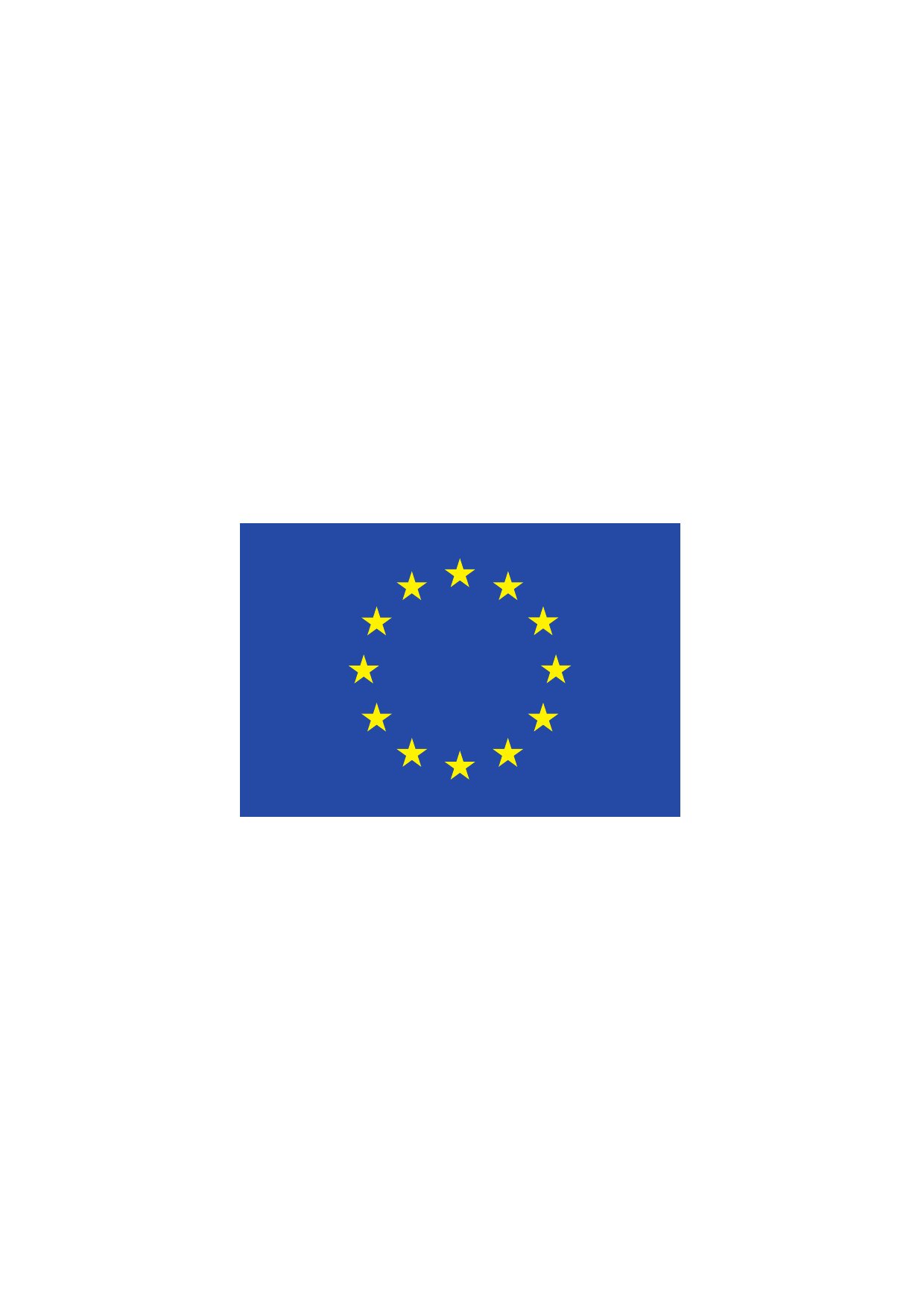}}}
\end{picture}

\clearpage
\setcounter{page}{1}
\section{Introduction}
Many graph problems are known to be efficiently solvable on trees. A substantial
research effort has long been underway to transfer this simple insight to more
complex graphs that are somehow `tree-like'. While many measures have been proposed (see e.g.~\cite{HlinenyOSG08,Vatshelle12}), one of the most successful measures
of tree-likeness has arguably been treewidth. We refer to the surveys of Bodlaender~\cite{Bodlaender93,Bodlaender98,Bodlaender05} or the recent book by Fomin et al.~\cite{2020bodlaender} for an overview of treewidth.
The study of graph structure and algorithms
on graphs of bounded treewidth has led to many celebrated results (see,
e.g.,~\cite{Baker94,Bodlaender96,Courcelle90,CyganNPPRW22,DemaineFHT05,RobertsonS86a,RobertsonS86b})
that are important in their own way or as a subroutine in other algorithms.
Treewidth has also been highly useful in practice, for example, for probabilistic inference in Bayesian networks~\cite{LauritzenS88}. For many other real-world networks and standard random models of them, treewidth is unfortunately very high~\cite{AdcockSM16,Gao12,ManiuSJ19,MontgolfierSV11}. This makes algorithms for graphs of bounded treewidth not very useful in this context and seems to cast doubt on the tree-likeness of such networks.

It has been shown, however, that many real-world networks are \emph{metrically} close to a tree. This idea can be formally cast to Gromov's notion of hyperbolicity. The graph $G$ is \emph{$\delta$-hyperbolic} if for all quadruples $x,y,z,w\in V(G)$ we have that the greater two among the sums 
\[\dist(x,y)+\dist(z,w),\;\dist(x,z)+\dist(y,w),\; \dist(x,w)+\dist(y,z)\]
differ by at most $2\delta$, where $\dist(.,.)$ denotes the shortest path
distance in $G$. The \emph{hyperbolicity} of a graph $G$ is the smallest $\delta$ such
that $G$ is $\delta$-hyperbolic. It is difficult to build good intuition for
this definition without diving into hyperbolic geometry, but the alternative
notion of $\delta'$-slimness can be used instead and is easier to picture.
Consider a triplet $(x,y,z)$ of vertices and take any three shortest paths
connecting the pairs $(x,y)$, $(y,z)$, and $(z,x)$. We call the vertices of these three paths the sides of the triangle $xyz$. We then say that the graph is $\delta'$-\emph{slim} if for each triangle the side $xy$ is within distance $\delta'$ from the union of the sides $yz$ and $zx$. 
The \emph{slimness} of a graph $G$ is then the smallest $\delta'$ such that $G$ is $\delta'$-slim. 
It is known that the hyperbolicity and slimness of a graph differ by a constant factor from each other~\cite{gromov87,BH99}, so they can almost be considered equivalent for the purposes of this section.

Since the notions of hyperbolicity and slimness are central to our work, let us develop some intuition for them through some simple examples. We can observe that the hyperbolicity and the slimness of a tree are both $0$, which is in line with the idea that hyperbolicity measures tree-likeness in a metric sense. On the other hand, cycles are not metrically close to trees. Indeed, the cycle $C_n$ is (roughly) $(n/4)$-hyperbolic and $(n/6)$-slim, which can be checked by placing the quadruple or the triangle vertices at equal distances along the cycle. In general, the hyperbolicity of a graph is bounded by its diameter or even its treelength\footnote{For readers familiar with tree decompositions, the \emph{treelength} of a tree decomposition of $G$ is the maximum diameter of any bag, where distances are measured in $G$ (and not in the metric of the subgraph induced by the bag). The \emph{treelength} of a graph is the minimum treelength of any of its tree decompositions.}~\cite{ChepoiDEHV08}. This is in a stark contrast with treewidth --- note that $C_n$ has treewidth~$2$. As another instructive example, observe that both the treewidth and the hyperbolicity of the $n \times n$ grid are $n$~\cite{MontgolfierSV11}. However, treewidth and hyperbolicity are not comparable, as the complete graph $K_n$ has treewidth $n-1$ but is $0$-hyperbolic and $0$-slim. Finally, it has been observed that many real-world networks are $\delta$-hyperbolic for some small constant~$\delta$ (see, e.g.,~\cite{Abu-AtaD16,AdcockSM16,ChenFHM12,MontgolfierSV11,NarayanS11} and the discussion in~\cite{Abu-AtaD16} for other relevant measures). There is evidence that the reason behind such a phenomenon is that real-world networks often have a large core of high-degree vertices through which most shortest paths pass~\cite{LeskovecLDM09,NarayanS11,ShavittT08}. Hence, hyperbolicity seems to be a useful measure by which to study networks (see also, e.g.,~\cite{BorassiCC15}).

While a substantial research effort has focused on algorithms for treewidth and
many other measures of tree-likeness, hyperbolicity has received comparatively limited
attention. Chepoi et al.~\cite{ChepoiDEHV08} studied spanners and the
computation of the center and diameter of $\delta$-hyperbolic graphs, while
Chepoi and Estellon~\cite{ChepoiE07} considered packing and covering problems
for balls. There have also been several investigations on how to compute the
hyperbolicity of a graph and improve on the naive $\Oh(n^4)$-time
algorithm~\cite{FournierIV15,BorassiCH16,CohenCDL17,CohenCL12,CoudertDP19,FluschnikKMNNT19}. 
Some research has gone into studying algorithms for objects in hyperbolic geometry, e.g., point sets or graphs embedded in hyperbolic space. For example, Krauthgamer and Lee~\cite{KrauthgamerL06} and Kisfaludi-Bak~\cite{Kisfaludi-Bak21} studied the {\sc Traveling Salesperson} problem in this context. Using
hyperbolic space, one can also define a graph where embedded vertices are
adjacent if they are `close' in hyperbolic space, which gives rise to hyperbolic ball graphs. Hyperbolic random graphs,
where the vertices are embedded randomly, are a particularly well-studied case of hyperbolic ball graphs (see,
e.g.,~the survey of Friedrich~\cite{Friedrich19} and the works of Bl{\"a}sius et
al.~\cite{BlasiusFFK20,BlasiusFK16,BlasiusFK18}).
Kisfaludi-Bak~\cite{Kisfaludi-Bak20} studied the complexity of several NP-hard
problems on hyperbolic ball graphs. These results based on hyperbolic geometry are related
but they have a different flavor. In particular, the geometric graphs are often but not always $\delta$-hyperbolic graphs. Moreover, in the geometric setting one can always rely on the homogeneity of the underlying hyperbolic space. In contrast, $\delta$-hyperbolic graphs (for small $\delta$) have no underlying space that can be utilized. We note however that embedding $\delta$-hyperbolic graphs into high-dimensional hyperbolic space with low distortion is possible~\cite{bonk2011embeddings}.

\begin{figure}
    \centering
    \includegraphics{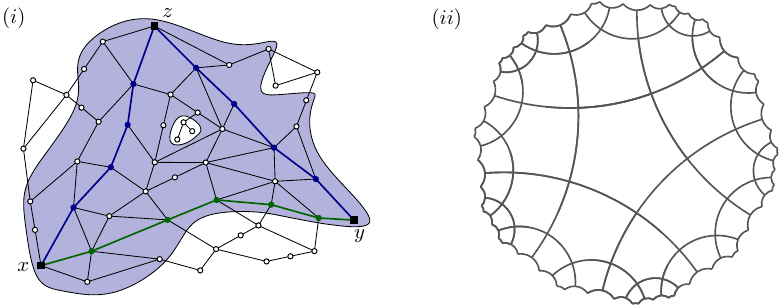}
    \caption{(i) The triangle $xyz$ is $2$-slim as the $2$-neighborhood of the sides $yz$ and $zx$ (vertices in the blue shaded region) covers the side $xy$ (green). (ii) Example of a planar constant-hyperbolic graph that does not directly resemble a tree. This graph is a part of the pentagonal tiling of the hyperbolic plane.}
    \label{fig:slimpatch}
\end{figure}

To advance research on algorithms for graphs of bounded hyperbolicity, we consider it in the context of planar graphs. Treewidth is already well-studied in this context. Indeed, some problems have more efficient algorithms on bounded treewidth graphs if the graph is planar see, e.g., Dorn et al.~\cite{DornPBF10}. (Recent work suggests that similar improvements may even be possible on general graphs~\cite{BodlaenderCKN15,CyganNPPRW22}.)
Treewidth is also an important tool in known approximation schemes for planar graphs~\cite{Baker94,DemaineH05,FominLS18}. 
For hyperbolicity, Cohen~\cite{CohenCDL17} developed an algorithm to compute the hyperbolicity of an outerplanar graph in linear time. We are unaware, however, of any studies on general planar graphs in relation to hyperbolicity. Motivated by this gap in our knowledge, this paper initiates research on planar $\delta$-hyperbolic graphs.

\subsection{Main Contribution: A Novel Separator Theorem}

A crucial tool for the algorithmic study of planar graphs has always been a
balanced separator. Recall that for a graph class ${\cal G}$ and a function
$\alpha : \mathbb{N} \rightarrow \mathbb{R}^+$ (possibly depending on ${\cal
G}$), we say that an $n$-vertex graph $G \in {\cal G}$ has a \emph{separator} of
\emph{balance} $\alpha(n)$ if there is a vertex set $Z \subset V(G)$ such that
the number of vertices in any connected component of $G- Z$ is at most
$(1-\alpha(n)) \cdot n$. Typically, the balance is a constant independent of
$n$. We say that a subgraph $H$ of $G$ is a geodesic path (cycle) if $H$ is a
path (cycle) where, for any $u,v\in V(H)$, we have $d_H(u,v)=d_G(u,v)$. 

Famously, Lipton and Tarjan~\cite{LiptonT79} proved that planar graphs have
a $\frac{1}{2}$-balanced separator of size~$\Oh(\sqrt{n})$. In fact, Lipton and
Tarjan~\cite{LiptonT79} showed these there exists such separator that consists
of \emph{two} geodesic paths.

In this paper, we develop a balanced separator theorem for planar
$\delta$-hyperbolic graphs. Importantly, our separator consists of single
geodesic path or geodesic cycle.  This shows that hyperbolic planar graphs offer
significantly more structure than general planar graphs.

\newcommand\separatorlogexponent{4}
\newcommand\separatorlogexponentplus[1]{\the\numexpr\separatorlogexponent+#1\relax}

\begin{restatable}[Separator for planar $\delta$-hyperbolic graphs]{theorem}{theoremsep} \label{thm:separator}
    Let $G$ be a connected planar $\delta$-hyperbolic graph on $n$ vertices.
    Then $G$ has a geodesic path separator $X$  and a constant balance or $G$
    has a geodesic cycle separator $Y$  and balance $2^{-\Oh(\delta)}/\log n$.
    Given $G$, such a separator $X$ or $Y$ can be computed in $\Oh(\delta^2
    n\log^\separatorlogexponent n)$ time.

    Additionally, $|X|=\Oh(\delta^2 \log n)$ and $|Y| = \Oh(\delta)$.
\end{restatable}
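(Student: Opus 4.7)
The plan is to perform a case analysis driven by the structure of geodesic triangles in $G$: either a short geodesic cycle separator exists, or a single geodesic path grown along BFS layers suffices as a balanced separator of length $\Oh(\delta^2\log n)$.

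First I would attempt to find a geodesic cycle separator $Y$. In a $\delta$-hyperbolic graph every geodesic cycle has length $\Oh(\delta)$, because two antipodal points on a long geodesic cycle violate the four-point condition; hence there are only $n\cdot\poly(\delta)$ candidate cycles, which can be enumerated in near-linear time by a bounded BFS from each vertex. For each candidate $C$, compute the balance between its inside and outside in the planar embedding using a planar point-location structure. If some such cycle achieves balance $2^{-\Oh(\delta)}/\log n$, output $Y:=C$ and stop.

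Second, if no balanced geodesic cycle exists, I would construct $X$ incrementally. Fix a planar embedding, pick an arbitrary root $r$, and maintain a single geodesic path $P$ starting at $r$. While there is a connected component $H$ of $G-V(P)$ whose size exceeds a constant fraction of $|V(G)|$, extend $P$ inside $H$ by the shortest geodesic segment from the current endpoint of $P$ to a centroid-like vertex of $H$. Two invariants must be preserved at each step: the concatenation of $P$ with the new segment is itself a geodesic, and the new path shrinks the heavy component by a constant fraction. The first invariant relies on $\delta$-slimness together with the failure of the previous step: any detour of the extension through prior vertices of $P$ would produce an almost-closed walk of length $\Oh(\delta)$, which standard straightening in $\delta$-hyperbolic graphs would convert into a geodesic cycle contradicting the absence of a balanced short cycle. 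The second invariant uses planarity and the centroid choice. Since each extension halves the heavy component, $\Oh(\log n)$ extensions suffice, each contributing $\Oh(\delta^2)$ vertices (the $\delta^2$ accounts for the width of the $\delta$-thin tube that must be traversed when re-orienting the path toward a new centroid), for a total of $\Oh(\delta^2\log n)$. The running time $\Oh(\delta^2 n\log^{\separatorlogexponent} n)$ then follows from a near-linear per-iteration implementation using dynamic BFS and planar-graph data structures, iterated $\Oh(\log n)$ times, with an extra $\log^{\separatorlogexponentplus{-2}} n$ factor absorbed in the planar balance queries.

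\emph{Main obstacle.} The principal difficulty will be proving that the greedy extensions of $P$ chain into a \emph{single} geodesic. In an arbitrary planar graph a single path is not a cut, and extending a geodesic by another geodesic segment is not automatically geodesic. Hyperbolicity together with the absence of a balanced short geodesic cycle must jointly rule out both the shortcut edges that would destroy the separator property and the non-geodesic detours back toward earlier portions of $P$. The delicate combinatorial core of the argument is to show that under these two constraints, a carefully chosen centroid-targeting segment always concatenates geodesically with $P$, producing a single balanced geodesic separator of the claimed size.
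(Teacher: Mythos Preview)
Your proposal has genuine gaps that would prevent it from going through.

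\textbf{The path construction does not separate.} A geodesic path grown from an arbitrary root toward successive centroids need not be a separator at all: in a $2$-connected planar graph, removing a path whose endpoints lie in the interior leaves the graph connected. The paper's path separator works because its two endpoints are chosen on a fixed outer face cycle $C$ (found via an iterated greedy filling), so the path together with $C$ cuts the disk into two pieces. Your construction has no analogue of this outer boundary, and nothing in the argument forces the growing path to reach a face that plays this role.

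\textbf{The geodesicity and length invariants are unsupported.} You assert that a detour of an extension ``would produce an almost-closed walk of length $\Oh(\delta)$'' that straightens to a geodesic cycle contradicting the first step. But the first step only rules out geodesic cycles of \emph{balance} at least $2^{-\Oh(\delta)}/\log n$; a short geodesic cycle with tiny balance is perfectly compatible with the failure of step one, so no contradiction arises. Likewise, the claim that each extension contributes $\Oh(\delta^2)$ vertices is not argued: a geodesic from the current endpoint to a centroid of the heavy component can have length $\Theta(\mathrm{diam}(G))$, and nothing you state bounds this by $\poly(\delta)$. In the paper, the $\Oh(\delta^2\log n)$ bound arises for a completely different reason: a BFS-layer argument on a greedy $\Oh(\delta)$-filling, combined with a new isoperimetric inequality over that filling (their Lemma~4.4), shows the filling has depth $\Oh(\delta\log n)$, and lifting a dual-tree path through faces of size $\Oh(\delta)$ gives the extra $\delta$ factor.

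\textbf{The cycle enumeration is not near-linear.} Enumerating all geodesic cycles by bounded BFS from each vertex costs $n$ times the size of a ball of radius $\Oh(\delta)$, which in a planar $\delta$-hyperbolic graph can be $2^{\Theta(\delta)}$ (think of the pentagonal tiling); this is not $n\cdot\poly(\delta)$. The paper never enumerates cycles; instead it produces a single candidate cycle via the filling procedure or via a treewidth separator (Outcome~2), and then shortens it to a geodesic cycle using two dedicated shortening lemmas (their Lemmas~4.6 and~4.7) that trade length for balance in a controlled way.

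In short, the paper's proof hinges on machinery you have not invoked: greedy fillings, an isoperimetric inequality tailored to those fillings, a three-outcome case split on the iterated filling, and cycle-shortening lemmas. Your centroid-extension idea would need, at minimum, a mechanism that forces the path to terminate on a boundary face and a genuine bound on extension length; neither follows from the hypotheses you use.
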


The proof of our separator theorem leans heavily on several novel techniques that we propose for planar $\delta$-hyperbolic graphs. We first present a new variant of the well-known isoperimetric inequality for $\delta$-hyperbolic graphs~\cite{BH99}. Then, we develop an iterative procedure that tries to construct a partition of a plane $\delta$-hyperbolic graph into regions that are bounded by cycles of length $\Oh(\delta)$ and that contain a small number of vertices. We either obtain a balanced geodesic cycle separator during the execution of this procedure or, if the procedure finishes, we show how we can exploit the imbalance of the cycles to find a geodesic path separator by cutting `straight through' the embedding. In the latter case, it is highly non-trivial to argue that this separator is both short and balanced; this argument crucially relies on our new isoperimetric inequality. If the procedure cannot do anything, because all faces have length $\Oh(\delta)$, then we show how any separator can be turned into a balanced cycle separator using deep insights into the structure of such separators. Finally, we prove that we can shorten certain balanced cycle separators into geodesic cycle separators without losing our balance. As such, we develop new tools that combine insights into both planar and $\delta$-hyperbolic graphs.

For a more detailed overview of the proof of \cref{thm:separator}, we refer to \cref{sec:overview}. The full proof is presented in \cref{sec:structure} and \cref{sec:sep}.

\paragraph{Usefulness of single geodesic-cycle/geodesic-path separator}

For a graph class ${\cal G}$, we say that a separator $Z$ of a graph $G \in
{\cal G}$ is \emph{in-class} if $G[Z \cup \bigcup_{C \in {\cal C}} V(C)] \in
{\cal G}$ for every subset ${\cal C}$ of the set of connected components of $G -
Z$. Note, that any separator for a \emph{hereditary} graph class is
automatically in-class; for example, any separator for planar graphs. However,
(planar) $\delta$-hyperbolic graphs are \emph{not hereditary}, as the deletion
of any vertex can substantially alter distances. Indeed, a wheel graph $W_{n+1}$
(consisting of $C_{n}$ plus a central universal vertex) has hyperbolicity at
most~$2$ (as hyperbolicity is upper bounded by diameter~\cite{ChepoiDEHV08}),
but removing its central vertex increases the hyperbolicity to (roughly) $n/4$.
Hence, we need a different separator theorem that does guarantee the in-class
property.

We now observe (and later prove formally) that if we separate a planar
$\delta$-hyperbolic graph along a geodesic path or cycle $Z$, then a shortest
path $P$ in $G$ between two vertices in a component $C$ of $G-Z$ such that $P$
is fully not contained in $C$ can be `rerouted' along $Z$ and remain shortest in
$G[V(C) \cup Z]$. Therefore, as a consequence of \cref{thm:separator}, we obtain
the desired in-class separator:

\begin{restatable}{corollary}{corollarypump} \label{cor:oursep-pump}
For any $\delta \geq 0$, the class of connected planar $\delta$-hyperbolic graphs has a $1/2$-balanced in-class separator of size $2^{\Oh(\delta)} \log n$ that can be computed in $2^{\Oh(\delta)} \cdot n\log^{\separatorlogexponentplus{1}} n$ time.
\end{restatable}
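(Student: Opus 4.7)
The plan is to iteratively apply \cref{thm:separator} to the current largest component of $G - Z$ (where $Z$ is an accumulating separator) and to take the union of the geodesic paths or cycles produced along the way. Concretely, initialize $Z := \emptyset$; while the largest component $C$ of $G - Z$ satisfies $|V(C)| > n/2$, form the subgraph $H := G[V(C) \cup Z]$, apply \cref{thm:separator} to $H$, and add the resulting geodesic path or cycle to $Z$. The loop invariant is that for every component $C'$ of the current $G - Z$ the subgraph $G[V(C') \cup Z]$ is planar $\delta$-hyperbolic, so that the theorem is always applied to an input in the correct class.

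For the size analysis, partition the iterations by the type of separator returned. A \emph{path iteration} adds $\Oh(\delta^2 \log n)$ vertices and multiplies the current largest component size by at most some constant factor $1 - c < 1$; hence after at most $\lceil \log 2 / \log(1/(1-c)) \rceil = \Oh(1)$ path iterations the loop terminates, so path iterations contribute $\Oh(\delta^2 \log n)$ vertices in total. A \emph{cycle iteration} adds $\Oh(\delta)$ vertices and reduces the current largest component size by a multiplicative factor of at most $1 - 2^{-\Oh(\delta)}/\log n$, under the inductive assumption $|Z| = o(n)$ that implies $|V(H)| \le (1+o(1))\,n$. Thus at most $\Oh(2^{\Oh(\delta)} \log n)$ cycle iterations can occur before the largest component drops below $n/2$, contributing $\Oh(\delta \cdot 2^{\Oh(\delta)} \log n) = 2^{\Oh(\delta)} \log n$ vertices. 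Summing, $|Z| = 2^{\Oh(\delta)} \log n$, closing the bootstrap. The running time is $\Oh(\delta^2 n \log^{\separatorlogexponent} n)$ per iteration times $\Oh(2^{\Oh(\delta)} \log n)$ iterations, which gives the claimed $2^{\Oh(\delta)} \cdot n \log^{\separatorlogexponentplus{1}} n$.

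For the in-class property we appeal to the rerouting observation highlighted just before the statement of the corollary: when a planar $\delta$-hyperbolic graph is split along a single geodesic path or cycle $S$, any shortest path in the original graph between two vertices on the same side of $S$ can be rerouted along $S$ without increasing its length. Hence distances in the subgraph induced by ``one side $\cup\; S$'' agree with those in the original graph, so planarity and $\delta$-hyperbolicity are inherited. Applied at every iteration, this maintains the invariant, and moreover for any subset $\Cc$ of components of the final $G - Z$ the subgraph $G[Z \cup \bigcup_{C \in \Cc} V(C)]$ inherits its distances from $G$ and is therefore planar $\delta$-hyperbolic, giving the required in-class property.

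The main obstacle is verifying the invariant across iterations: a single application of the rerouting argument handles one geodesic, but successive separators are geodesics only in \emph{nested} subgraphs rather than in $G$ itself. One therefore needs to argue inductively that the union of the geodesics grown by the procedure still preserves all distances inside each ``components $\cup\, Z$'' subgraph, so that $\delta$-hyperbolicity of $G$ transfers faithfully. Once this distance-bookkeeping is in place, the interleaving bound on path versus cycle iterations and the bootstrap $|Z| = o(n)$ both go through cleanly.
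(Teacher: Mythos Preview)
Your proposal is correct and mirrors the paper's proof: the paper packages the iteration as a generic pumping lemma (\cref{lem:sep-pump}) applied to the in-class separator obtained from \cref{thm:separator} via \cref{lem:geod-sep}, and then observes, exactly as you do, that path-type separators are invoked only $\Oh(1)$ times while cycle-type separators are invoked at most $2^{\Oh(\delta)}\log n$ times. One minor technical point: your $H:=G[V(C)\cup Z]$ need not be connected once $Z$ contains several pairwise disjoint geodesics, so \cref{thm:separator} does not apply verbatim; the paper sidesteps this by recursing on $G_{i+1}:=G_i[V(C)\cup Z_i]$ using only the most recent geodesic $Z_i$, which keeps each $G_i$ connected (your version is easily repaired by restricting to the component of $H$ containing $C$).
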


\paragraph{Discussion about the size of the separator}
In the applications of~\cref{thm:separator} and \cref{cor:oursep-pump}, discussed later, we will mainly use the property that our separator is \emph{in-class} and has \emph{sublinear size}. Note that \cref{thm:separator} even guarantees that the returned separator has size $\Oh(\delta^2 \log(n))$.
The separator of \cref{thm:separator} thus is significantly smaller (for small values of $\delta$) than the general planar separator of size~$\Oh(\sqrt{n})$~\cite{LiptonT79} (which is not even in-class in our case). On the other hand, the size bound is reminiscent of known separators for hyperbolic ball graphs and random graphs, which also have size~$\Oh(\log n)$ in certain regimes~\cite{BlasiusFK16,Kisfaludi-Bak20,kopczynski2020hyperbolic}. Recall, however, that hyperbolic
ball graphs differ substantially in nature from the (planar) $\delta$-hyperbolic graphs we study in this paper.

We observe that a separator of size $\Oh_\delta(\log(n))$ that is \emph{not in-class} can be easily obtained by combining two existing results. Chepoi et
al.~\cite[Proposition~13]{ChepoiDEHV08} bounded the treelength of
$\delta$-hyperbolic graphs, and Dieng and Gavoille~\cite{DiengG09} (see
also~\cite{Dieng09}) bounded the treewidth of a planar graph in terms of its
treelength, which gives the following bound on the treewidth of planar
$\delta$-hyperbolic graphs:

\begin{proposition}[\cite{ChepoiDEHV08} and \cite{DiengG09}] \label{prp:tw-hyper}
For any $\delta \geq 0$, the treewidth of any $n$-vertex planar $\delta$-hyperbolic graph is $\Oh(\delta \log n)$.
\end{proposition}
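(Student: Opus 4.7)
The statement is essentially a composition of two off-the-shelf results, so the plan is simply to chain them through the notion of treelength. I would state the proof as a two-line derivation, with treelength as the intermediate parameter, and the only content being to invoke the two cited bounds correctly.

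First, I would recall the definition of treelength (already spelled out in the footnote earlier in the excerpt): the minimum, over all tree decompositions $(T, \{B_t\}_{t \in V(T)})$ of $G$, of $\max_t \mathrm{diam}_G(B_t)$. Then I would invoke Chepoi et al.~\cite[Proposition~13]{ChepoiDEHV08}, which says that every $n$-vertex $\delta$-hyperbolic graph $G$ admits a tree decomposition in which every bag has diameter (in $G$) at most $\Oh(\delta \log n)$; that is, the treelength of $G$ is $\Oh(\delta \log n)$. This already uses only hyperbolicity, not planarity, and its proof goes via a layering of $G$ by distances from a fixed root together with the four-point condition; I would not reprove it but merely cite it.

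Second, I would invoke the result of Dieng and Gavoille~\cite{DiengG09} (see also~\cite{Dieng09}), which states that for a planar graph $H$, the treewidth is bounded by a constant factor times its treelength, i.e.\ $\tw(H) = \Oh(\mathrm{tl}(H))$. Applying this to $G$, whose treelength we just bounded by $\Oh(\delta \log n)$, gives $\tw(G) = \Oh(\delta \log n)$, which is exactly the claim.

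There is no real obstacle here: both steps are black boxes from the literature and the composition is immediate. If anything, the only care needed is to check that the treelength definition used by Chepoi et al.\ and by Dieng–Gavoille agree (they do: both measure bag diameter in the host graph $G$), so that the two bounds can be chained without a hidden loss. I would flag this compatibility briefly in the proof and then conclude.
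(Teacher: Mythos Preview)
Your proposal is correct and takes essentially the same approach as the paper: the paper gives no formal proof of this proposition, only the sentence immediately preceding it, which names exactly the two results you chain (Chepoi et al.\ for the treelength bound on $\delta$-hyperbolic graphs, and Dieng--Gavoille for the treewidth-versus-treelength bound on planar graphs). Your added remark about checking that both references use the same notion of treelength is a reasonable bit of hygiene that the paper does not spell out.
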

Observe that a constant-factor approximation of the treewidth $\tw$ of a planar
graph can be computed in $\Oh(n \cdot \tw^2 \log \tw)$
time~\cite{KammerT16,GuX19}. Using standard arguments (see,
e.g.,~\cite[(2.5)]{RobertsonS86a}), \cref{prp:tw-hyper} and the fact that
$\delta<n$ immediately implies the existence of a balanced separator:

\begin{corollary}\label{cor:tw-hyper-sep}
For any $\delta \geq 0$, the class of planar $\delta$-hyperbolic graphs has a $1/2$-balanced separator of size $\Oh(\delta \log n)$ that can be computed in $\Oh(\delta^2\, n\log^3 n)$ time.
\end{corollary}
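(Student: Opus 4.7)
The plan is to combine the treewidth upper bound from \cref{prp:tw-hyper} with a standard tree-decomposition-based extraction of a balanced separator, using a known polynomial-time constant-factor treewidth approximation on planar graphs to realize the stated running time.

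First, I would apply \cref{prp:tw-hyper} to conclude that any $n$-vertex planar $\delta$-hyperbolic graph $G$ has $\tw(G) = \Oh(\delta \log n)$. Then I would invoke the constant-factor treewidth approximation on planar graphs of Kammer--Tholey~\cite{KammerT16} (or Gu--Xu~\cite{GuX19}) to compute, in $\Oh(n \cdot \tw(G)^2 \log \tw(G))$ time, a tree decomposition $(T, \{X_t\}_{t \in V(T)})$ of $G$ of width $\Oh(\tw(G)) = \Oh(\delta \log n)$. Substituting the bound on $\tw(G)$ and absorbing the $\log\log$ factor into one of the $\log$ factors, the running time becomes $\Oh(\delta^2\, n \log^3 n)$, matching the claim.

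Next, I would extract the balanced separator from the tree decomposition using the classical argument of Robertson--Seymour~\cite{RobertsonS86a}: orient each edge of $T$ towards the larger side of $G$ it induces (breaking ties arbitrarily), and observe that the set of edges splits $T$ at a node $t^*$ whose bag $X_{t^*}$ is a separator such that every connected component of $G - X_{t^*}$ has at most $n/2$ vertices. The resulting separator has size at most $|X_{t^*}| \leq \tw(G) + 1 = \Oh(\delta \log n)$, and identifying $t^*$ takes linear time in the size of $T$, which is dominated by the earlier running time. Note that we rely on $\delta < n$, as remarked before the statement, so that the width bound $\Oh(\delta \log n)$ is a genuine upper bound on the separator size.

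There is no real obstacle here; the statement is an essentially routine combination of a treewidth bound with a standard construction. The only mildly delicate point is book-keeping the running time of the planar treewidth approximation: one must verify that plugging $\tw = \Oh(\delta \log n)$ into $\Oh(n \cdot \tw^2 \log \tw)$ indeed yields $\Oh(\delta^2 n \log^3 n)$ after absorbing the inner $\log(\delta \log n)$ factor, which holds since $\log(\delta \log n) = \Oh(\log n)$ whenever $\delta \leq n$.
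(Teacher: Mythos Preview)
Your proposal is correct and matches the paper's own argument essentially verbatim: the paper also combines \cref{prp:tw-hyper} with the planar treewidth approximation of~\cite{KammerT16,GuX19} and the Robertson--Seymour~\cite{RobertsonS86a} separator extraction, noting $\delta<n$ for the running-time bookkeeping.
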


We stress again that the effectiveness of \cref{cor:tw-hyper-sep} alone is
somewhat doubtful. When attempting to employ it in recursive algorithms (a
common approach for utilizing separators), the separator fails to guarantee that
its components are again $\delta$-hyperbolic.  \cref{thm:separator} guarantees
that the separator consists of a single geodesic path or a single geodesic cycle, which allows
us to develop novel algorithmic applications, which we discuss now.

\subsection{Applications of our Separator Theorem}
We present two applications of our separator theorem to well-known graph problems. Recall that an \emph{independent set} of a graph $G$ is a set $I \subseteq V(G)$ such that $uv \not\in E(G)$ for any $u,v \in I$. Then the {\sc Maximum Independent Set} problem asks to find an independent set of maximum size in a given graph $G$. We give a near-linear time FPTAS for {\sc Maximum Independent Set} on planar $\delta$-hyperbolic graphs (for any fixed $\delta$).

\begin{restatable}{theorem}{theoremmis} \label{thm:mis}
For any $\delta \geq 0$ and any $\eps > 0$, the class of planar
$\delta$-hyperbolic graphs has a $(1-\eps)$-approximation algorithm for {\sc
Maximum Independent Set} running in $2^{\Oh(\delta)}  n
\log^{\separatorlogexponentplus{2}} n + 2^{\Oh(\delta^2)} n/\eps^{\Oh(\delta)}$
time.
\end{restatable}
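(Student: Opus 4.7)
The plan is to combine the separator theorem with a Baker-style layering technique. The separator produces a geodesic ``spine'' of the graph from which BFS layering induces narrow slabs, and hyperbolicity is used to show that slabs of $k$ consecutive BFS layers have treewidth that depends only on $\delta$ and $\log k$, rather than on $k$ or $n$; solving MIS exactly on each such slab via tree-decomposition dynamic programming and averaging over shifts then yields the FPTAS.

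First I would apply Corollary~\ref{cor:oursep-pump} recursively to build a balanced separator decomposition tree of depth $\Oh(\log n)$. The in-class property is essential here: it guarantees that every subproblem arising in the recursion is itself planar and $\delta$-hyperbolic, so the separator theorem remains applicable at each level and the BFS layers below are computed in a graph that still satisfies the hyperbolic constraints we rely on. The total cost of producing this decomposition is dominated by $\Oh(\log n)$ applications of Corollary~\ref{cor:oursep-pump} and accounts for the $2^{\Oh(\delta)} n \log^{\separatorlogexponentplus{2}} n$ term in the final running time.

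Second I would run a Baker-style shifting argument anchored on the geodesic separator $Z$ of each subproblem. Perform BFS from $Z$ to obtain layers $L_0 = Z, L_1, L_2, \dots$; fix $k = \Theta(1/\eps)$ and for each residue $r \in \{0, 1, \dots, k-1\}$ delete $S_r = \bigcup_{j} L_{jk+r}$. Since the sets $S_r$ partition $V(G) \setminus Z$, a pigeonhole argument shows that some choice of $r$ intersects any fixed optimum independent set $\mathrm{OPT}$ in at most $|\mathrm{OPT}|/k \le \eps |\mathrm{OPT}|$ vertices, so it suffices to solve MIS exactly on the residual graph $G - S_r$ for each $r$ and return the best answer; each residual connected component lies within $k$ consecutive BFS layers off the spine~$Z$.

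The main obstacle is to prove the structural lemma that each such $k$-layer slab has treewidth $\Oh(\delta^2 + \delta \log k)$ (or similar, polynomial in $\delta$ and logarithmic in $k$) in a planar $\delta$-hyperbolic graph, rather than the $\Oh(k)$ bound that planarity alone would give. I expect this to follow from the same isoperimetric-type inequality that underlies the proof of Theorem~\ref{thm:separator}, together with the treelength-based treewidth bound of Proposition~\ref{prp:tw-hyper} applied locally to the $k$-neighborhood of~$Z$: thin triangles force BFS layers parallel to a geodesic to branch only along a tree-like structure of logarithmic depth, so the layered subgraph inherits a logarithmic-size tree decomposition instead of a linear one. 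Given this lemma, standard tree-decomposition dynamic programming solves MIS in $2^{\Oh(\delta^2 + \delta \log (1/\eps))} \cdot n = 2^{\Oh(\delta^2)} n / \eps^{\Oh(\delta)}$ time per shift, and summing over the $\Oh(1/\eps)$ shifts contributes only to constants absorbed in the $2^{\Oh(\delta^2)}$ factor.
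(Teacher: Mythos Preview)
Your proposal takes a Baker-style route, whereas the paper follows the Lipton--Tarjan/Frederickson recursive-separator route. Concretely, the paper uses the in-class separator of Corollary~\ref{cor:oursep-pump} to build a weak $r$-division with $r=2^{\Oh(\delta)}/\eps^2$ (Corollary~\ref{cor:ourdivision}); crucially, each group of this division is itself planar $\delta$-hyperbolic (this is exactly what ``in-class'' buys), so Proposition~\ref{prp:tw-hyper} applies \emph{to each group} and gives treewidth $\Oh(\delta\log r)=\Oh(\delta^2+\delta\log(1/\eps))$. One then solves MIS exactly on the interior of each group and takes the union; the loss is only the boundary $\bnd(\mathcal P)$, which has size $\Oh(n/\sqrt r)$, compared against $\mathrm{OPT}\ge n/4$ from $4$-colorability.

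Your plan has a genuine gap at the ``main obstacle'' you identify. The structural lemma you need---that a slab of $k$ consecutive BFS layers off a geodesic spine has treewidth $\Oh(\delta^2+\delta\log k)$---is not established anywhere in the paper, and there is good reason to doubt it. As the paper explicitly warns, removing BFS layers does \emph{not} preserve $\delta$-hyperbolicity, so Proposition~\ref{prp:tw-hyper} cannot be invoked on a slab. In a constant-hyperbolic graph such as the pentagonal tiling, an annular slab at distance $a$ from $Z$ of width $k$ behaves like a $k$-wide cylinder with circumference growing with~$a$; such a slab has treewidth $\Theta(k)$, not $\Oh(\log k)$. Thus your hoped-for bound fails in the very regime where Baker's technique would be doing the work. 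The isoperimetric inequality used in the separator proof bounds how many \emph{filling faces} a cycle can enclose; it does not yield the tree-like branching of BFS layers you are appealing to.

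There is also a secondary structural issue: the separator produced by Corollary~\ref{cor:oursep-pump} is a union of $2^{\Oh(\delta)}\log n$ geodesic paths/cycles obtained by repeated pumping, not a single geodesic, so ``BFS from $Z$'' does not have the clean spine geometry your argument presumes. And if you intend to run the shifting argument independently at every node of a depth-$\Oh(\log n)$ decomposition tree, the $\eps$-losses would accumulate multiplicatively along the tree. The paper avoids all of this: the in-class property lets it recurse until the pieces are \emph{small} while remaining $\delta$-hyperbolic, and then the $\Oh(\delta\log n)$ treewidth bound applied with $n\leftarrow r$ is what produces the $\eps^{-\Oh(\delta)}$ dependence.
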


It is important to compare our approximation scheme to the known EPTAS for {\sc
Maximum Independent Set} on planar graphs, which runs in time $2^{\Oh(1/\eps)}
n$~\cite{Baker94} and is asymptotically optimal~\cite{Marx07a}. Our algorithm will be
substantially faster for small values of $\delta$. We also observe that the
usual approach to planar approximation schemes that uses a treewidth bound (e.g.,~\cref{prp:tw-hyper}), as pioneered by Baker~\cite{Baker94}, is likely not possible here.
Indeed, recall that the class of $\delta$-hyperbolic graphs is not hereditary
and thus removing BFS-layers does not necessarily preserve hyperbolicity.
Hence, our algorithm uses the separator of \cref{thm:separator} in the same way 
Lipton and Tarjan~\cite{LiptonT80} did in their pioneering work. In particular,
we show that we can compute a (weak) $r$-division of which each part induces a planar
$\delta$-hyperbolic graph. Our algorithmic approach is actually more general (using
ideas of Chiba et al.~\cite{ChibaNS81}) and allows us to prove approximation
schemes for several other problems (including e.g.\ the {\sc Maximum Induced Forest} problem).

We next consider the {\sc Traveling Salesperson} problem. We only consider the variant on undirected, unweighted graphs. We define a \emph{tour} in a graph $G$ to be a closed walk in $G$ that visits every vertex of $G$ at least once. Then the {\sc Traveling Salesperson} problem (also known as the {\sc Traveling Salesman} problem or {\sc Graph Metric TSP}) asks, given an unweighted, undirected graph $G$, to find a shortest tour in $G$. We give a near-linear time FPTAS for the {\sc Traveling Salesperson} problem on planar $\delta$-hyperbolic graphs (for any fixed $\delta$).

\begin{restatable}{theorem}{theoremtsp}\label{thm:tsp}
For any $\delta \geq 0$ and any $\eps > 0$, the class of planar $\delta$-hyperbolic graphs has a $(1+\eps)$-approximation algorithm for the {\sc Traveling Salesperson} problem running in $2^{\Oh(\delta)} \cdot n \log^{\separatorlogexponentplus{2}} n + 2^{\Oh(\delta^2)} n/\eps^{\Oh(\delta)}$ time.
\end{restatable}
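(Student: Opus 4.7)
The plan is to mirror the structure of our near-linear FPTAS for {\sc Maximum Independent Set} (\cref{thm:mis}): compute a weak $r$-division of $G$ into planar $\delta$-hyperbolic pieces via \cref{cor:oursep-pump}, solve a local subset-TSP on each piece by dynamic programming on a small-width tree decomposition, and then patch the resulting sub-tours together along the boundary. The in-class property of \cref{cor:oursep-pump} is crucial here: after removing the separator, each union of components together with the separator again induces a planar $\delta$-hyperbolic graph, so the hyperbolicity-based bounds (in particular the treewidth bound of \cref{prp:tw-hyper}) can be applied to every piece of the decomposition.

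The first step is to choose $r = \mathrm{poly}(\delta/\eps)$ and compute an in-class weak $r$-division $(V_1,\dots,V_k)$ of $G$ by recursively applying \cref{cor:oursep-pump}. Each induced piece $G[V_i]$ is planar $\delta$-hyperbolic with at most $r$ vertices, and the total boundary $B=\bigcup_i\bnd V_i$ has size $\Oh\bigl((n/r)\cdot 2^{\Oh(\delta)}\log r\bigr)$. The construction runs in time $2^{\Oh(\delta)}\,n\log^{\separatorlogexponentplus{2}} n$, which matches the first summand in the claimed bound.

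The second step exploits the fact that any tour in a connected $n$-vertex graph has length at least $n-1$. We show that an optimal tour $T^\star$ can be transformed, at additive cost $\Oh(|B|)\leq \eps(n-1)\leq \eps |T^\star|$, into a \emph{locally contiguous} tour that, inside each piece $V_i$, consists of a small number of walks connecting prescribed pairs of boundary vertices and covering all internal vertices. Here we exploit the geodesic-path/geodesic-cycle structure given by \cref{thm:separator}: segments of $T^\star$ that cross the separator can be rerouted along the separator itself, and such reroutings are \emph{short} precisely because the separator is geodesic and, by the in-class property, remains geodesic inside each piece.

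For the third step, we enumerate over all possible interface patterns on a given piece, i.e.\ all ways of pairing up at most $|\bnd V_i|$ boundary vertices and assigning each interior vertex to a walk; the number of such patterns is at most $r^{\Oh(\delta)}$. For a fixed pattern, the optimal local walks can be computed by standard subset-TSP dynamic programming on a tree decomposition of width $\Oh(\delta\log r)$, which exists by \cref{prp:tw-hyper}, in time $r\cdot 2^{\Oh(\delta\log r)}=r^{\Oh(\delta)}$. Summing over all pieces and interface patterns gives a total DP cost of $(n/r)\cdot r^{\Oh(\delta)}\cdot 2^{\Oh(\delta^2)} = 2^{\Oh(\delta^2)}n/\eps^{\Oh(\delta)}$, matching the second summand. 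The main obstacle will be making the ``interface'' formal so that stitching per-piece optima back together, after the rerouting of the second step, yields a globally valid tour and retains the $(1+\eps)$ guarantee. In particular, we must show that the optimal tour can be modified so that it crosses each $\bnd V_i$ only a bounded number of times and only at boundary vertices, and then charge the rerouting cost to $|B|$ rather than to the length of $T^\star$; this is where the geodesic nature of our separator, rather than an arbitrary treewidth-based separator, is essential.
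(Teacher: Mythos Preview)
Your high-level plan (weak $r$-division via \cref{cor:oursep-pump}, solve locally on bounded-treewidth pieces, patch together) matches the paper, but the patching mechanism you propose is both more complicated than necessary and contains a real gap.

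\medskip

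\textbf{The gap.} In your third step you claim the number of interface patterns on a piece $V_i$ is at most $r^{\Oh(\delta)}$, where a pattern is a pairing of the boundary vertices $\bnd V_i$. But in a \emph{weak} $r$-division only the \emph{total} boundary $|\bnd(\mathcal P)|$ is controlled; an individual piece may have boundary of size up to $r$. The number of pairings of $\Theta(r)$ boundary vertices is super-polynomial in $r$, not $r^{\Oh(\delta)}$, so this enumeration is not affordable. Relatedly, your second step asserts that the optimal tour can be made ``locally contiguous'' with a \emph{small number} of walks per piece; you give no argument for this, and in fact it need not hold with a per-piece bound.

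\medskip

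\textbf{What the paper does instead.} The paper avoids interface enumeration entirely. On each piece $G[P]$ it computes an \emph{optimal full tour} (not subset-TSP with prescribed endpoints), and simply merges the resulting tours at a common separator vertex; this is possible because every separator produced by \cref{thm:separator} is \emph{connected}, so consecutive pieces share a vertex (\cref{lem:tour-sep}\ref{lem:tour-sep-merge}). The approximation analysis (\cref{lem:tour-sep}\ref{lem:tour-sep-split}) shows that any tour $R$ of $G$ can be split into tours $R_A,R_B$ of the two sides of a connected separator $Z$ with $|R_A|+|R_B|\le |R|+4|Z|$, via a $T$-join on a spanning tree of $G[Z]$. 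Iterating this through the recursion gives a tour that decomposes into per-piece tours with total overhead $\Oh(n/\sqrt{r})\le \eps\,\tau(G)$ (\cref{lem:tour-analysis}). Thus the union of per-piece optima is already a $(1+\eps)$-approximation; no interface guessing is needed.

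\medskip

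\textbf{A smaller point.} You emphasize that rerouting is cheap ``precisely because the separator is geodesic.'' In the paper the geodesic property is used only to guarantee the in-class property (so that hyperbolicity, and hence the treewidth bound, survives in each piece). The rerouting/splitting cost in \cref{lem:tour-sep} depends only on the \emph{connectedness} and \emph{size} of the separator, not on its being geodesic.
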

We again compare our approximation scheme to the known approximation schemes for the {\sc Traveling Salesperson} problem on planar graphs. A first PTAS for this problem, running in time $n^{\Oh(1/\eps)}$, was designed by Grigni et al.~\cite{GrigniKP95}. This later improved to an EPTAS running in time $2^{\Oh(1/\eps)}n$ by Klein~\cite{Klein08}. (For later generalizations, to the weighted case and $H$-minor-free graphs, see e.g.~\cite{Klein08,Le18,Cohen-AddadFKL20} and references therein.) Our scheme will be substantially faster for small values of $\delta$. A crucial element in all these schemes is the definition of appropriate subproblems and the patching of partial solutions to form a general solution. Grigni et al.\ and Klein use different approaches to address these challenges: the former uses a recursive separator approach whereas the latter combines a spanner with a Baker-style shifting technique. Like for {\sc Maximum Independent Set}, we must be careful that planar $\delta$-hyperbolic graphs are not hereditary. Therefore, our approach again relies on the recursive separator approach of Lipton and Tarjan~\cite{LiptonT80}, although some of its ideas feel reminiscent of those underlying the previous schemes~\cite{GrigniKP95,Klein08}.

Finally, we note that an FPTAS for {\sc Maximum Independent Set} or the {\sc
Traveling Salesperson} problem is generally not possible, unless P=NP. However,
the dependence on $\delta$ in \cref{thm:mis} means that our schemes do not
disprove the standard complexity theory assumptions. We note that our approximation schemes can also be seen as parameterized approximation schemes (see e.g.~\cite{FeldmannSLM20}), in particular as EPASes, with parameter~$\delta$.

\subsection{Connection to Exact Algorithms}
To build a connection to exact algorithms, we first observe that the following results are immediate from \cref{prp:tw-hyper} combined with known algorithms on graphs of bounded treewidth for {\sc Maximum Independent Set}~\cite{ArnborgP89} and the {\sc Traveling Salesperson} problem~\cite[Appendix~D]{Le18} respectively.

\begin{corollary} \label{cor:is-exact}
For any $\delta \geq 0$, the class of planar $\delta$-hyperbolic graphs has an algorithm for {\sc Maximum Independent Set} running in time $n^{\Oh(\delta)}$.
\end{corollary}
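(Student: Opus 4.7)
The plan is a direct pipeline: compute a tree decomposition of near-logarithmic width for the input graph, then run the standard bounded-treewidth dynamic program for \textsc{Maximum Independent Set}. No hyperbolic structure is used beyond the treewidth bound supplied by \cref{prp:tw-hyper}.

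First I would invoke \cref{prp:tw-hyper} to conclude that an $n$-vertex planar $\delta$-hyperbolic graph $G$ satisfies $\tw(G) = \Oh(\delta \log n)$. To obtain an actual decomposition of width $w = \Oh(\delta \log n)$, I would run the constant-factor planar treewidth approximation algorithm of Kammer and Tholey~\cite{KammerT16} (or Gu and Xu~\cite{GuX19}), which, as already noted in the excerpt preceding \cref{cor:tw-hyper-sep}, runs in $\Oh(n \cdot \tw^2 \log \tw) = \Oh(n \cdot \delta^2 \log^{3} n) = n^{\Oh(1)}$ time on such inputs.

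Next I would feed this tree decomposition into the classical bottom-up dynamic program for \textsc{Maximum Independent Set} on tree decompositions due to Arnborg and Proskurowski~\cite{ArnborgP89}: for each bag the table is indexed by the subsets of the bag that form an independent set, and each table entry is updated in time polynomial in the number of table entries. The total running time is $2^{\Oh(w)} \cdot n^{\Oh(1)}$, so plugging in $w = \Oh(\delta \log n)$ yields
\[
2^{\Oh(\delta \log n)} \cdot n^{\Oh(1)} \;=\; n^{\Oh(\delta)} \cdot n^{\Oh(1)} \;=\; n^{\Oh(\delta)},
\]
dominating the cost of constructing the decomposition. This matches the claimed bound and completes the proof.

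There is no real obstacle: the whole statement is an out-of-the-box composition of \cref{prp:tw-hyper} with well-known machinery. The only minor thing worth being explicit about is that the constant hidden in $n^{\Oh(\delta)}$ absorbs both the constant from the treewidth bound and the absolute constant from the dynamic program, and that the approximation factor of the treewidth algorithm is harmless since $\delta$ is part of the exponent anyway.
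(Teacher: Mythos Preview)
Your proposal is correct and follows exactly the route the paper takes: the corollary is stated as immediate from \cref{prp:tw-hyper} combined with the Arnborg--Proskurowski dynamic program~\cite{ArnborgP89}, yielding $2^{\Oh(\delta\log n)}\cdot n^{\Oh(1)}=n^{\Oh(\delta)}$. Your added detail about actually computing the decomposition via~\cite{KammerT16,GuX19} is a welcome elaboration but does not change the argument.
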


\begin{corollary} \label{cor:tsp-exact}
For any $\delta \geq 0$, the class of planar $\delta$-hyperbolic graphs has an algorithm for the {\sc Traveling Salesperson} problem running in time $n^{\Oh(\delta)}$.
\end{corollary}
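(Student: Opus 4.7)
The plan is to combine \cref{prp:tw-hyper} with a single-exponential dynamic programming algorithm for TSP parameterized by treewidth, exactly as stated in the sentence preceding the corollary. Concretely, I would proceed in three short steps.

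First, I would obtain a tree decomposition of the input planar $\delta$-hyperbolic graph $G$ of width $w = \Oh(\delta \log n)$. \cref{prp:tw-hyper} guarantees that $\tw(G) = \Oh(\delta \log n)$, and the constant-factor treewidth approximation on planar graphs by Kammer and Tholey~\cite{KammerT16} (or Gu and Xu~\cite{GuX19}) computes in time $\Oh(n \cdot \tw^2 \log \tw)$ a tree decomposition of width $\Oh(\tw)$. This yields a tree decomposition of width $\Oh(\delta \log n)$ in time polynomial in $n$ and $\delta$.

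Second, I would run the TSP algorithm of Le~\cite[Appendix~D]{Le18} (based on rank-based / cut-and-count style dynamic programming for connectivity problems on tree decompositions) on this decomposition. That algorithm solves {\sc Graph Metric TSP} in time $2^{\Oh(w)} \cdot n^{\Oh(1)}$ given a tree decomposition of width $w$. Plugging in $w = \Oh(\delta \log n)$ gives a running time of
\[
2^{\Oh(\delta \log n)} \cdot n^{\Oh(1)} \;=\; n^{\Oh(\delta)},
\]
which dominates the time of the preprocessing step and yields the claimed bound.

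The only mild subtlety, rather than a real obstacle, is ensuring that the bounded-treewidth algorithm we invoke truly handles the unweighted graph-metric version of TSP (which is not the same as, say, Hamiltonian cycle): here the tour may revisit vertices, and one has to track, for each bag, a suitable matching/state on its boundary together with a multiplicity budget for edges. The algorithm of \cite{Le18} is designed exactly for this setting and runs in single-exponential time in the width, so no additional work is needed. Combining the two ingredients gives \cref{cor:tsp-exact}.
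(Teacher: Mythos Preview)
Your proposal is correct and matches the paper's own justification essentially verbatim: the paper states that the corollary is immediate from \cref{prp:tw-hyper} together with the $2^{\Oh(\tw)} n^{\Oh(1)}$-time TSP algorithm of~\cite[Appendix~D]{Le18}, which is precisely your argument. Your extra remarks about computing the decomposition and about the graph-metric TSP variant are accurate and add useful detail but do not change the approach.
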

Note that, alternatively, these results follow from our approximation schemes (with an extra factor $2^{\Oh(\delta^2)}$ in the running time) by setting $\eps = 1/\Omega(n)$.

For {\sc Independent Set}, we prove a lower bound matching \cref{cor:is-exact}, conditional on the Exponential Time Hypothesis (ETH)~\cite{ImpagliazzoP01}, which asserts that there is no $2^{o(n)}$-time algorithm for the 3-\textsc{Satisfiability} problem. We prove:

\begin{restatable}{theorem}{theoremlower}\label{thm:lower_bound}
There is no $n^{o(\delta)}$-time algorithm for \textsc{Independent Set} in planar $\delta$-hyperbolic graphs, unless ETH fails.
\end{restatable}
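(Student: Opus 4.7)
I would reduce from \textsc{Grid Tiling}, a problem known to require $n^{\Omega(k)}$ time under ETH~\cite{Marx07a}. Recall that \textsc{Grid Tiling} asks, given $k^2$ sets $S_{i,j} \subseteq [n]\times[n]$ arranged in a $k \times k$ array, whether one can select $(a_{i,j}, b_{i,j}) \in S_{i,j}$ for each cell so that $a_{i,j} = a_{i,j+1}$ and $b_{i,j} = b_{i+1,j}$ for all valid indices. Taking $n = \poly(k)$, the ETH lower bound becomes $k^{\Omega(k)}$. The target of the reduction is a planar $\delta$-hyperbolic graph with $\delta = \Theta(k)$, matching the treewidth-based upper bound $n^{\Oh(\delta)}$ implied by \cref{prp:tw-hyper}.

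Next, I would construct a planar graph $G$ whose macroscopic structure is a geometric $k \times k$ grid of gadgets. Each gadget placed at position $(i,j)$ contains one vertex per element of $S_{i,j}$, with intra-gadget edges enforcing that at most one vertex per cell is chosen. Adjacent gadgets are joined by edge-gadgets encoding the row/column matching constraints so that $G$ contains an independent set of size $k^2+c$ (for a fixed $c$) if and only if the tiling instance is satisfiable. The construction, up to technical planarization details, follows standard gadget constructions from the parameterized complexity literature.

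The crux, and the main obstacle, is bounding the hyperbolicity of $G$ by $\Oh(k)$. A $k\times k$ grid on its own has hyperbolicity $\Theta(k)$, so $\Oh(k)$ is the best one can hope for with this skeleton. The four-point condition has to be verified across quadruples that potentially span four far-apart gadgets; naively, intra-gadget distances could push hyperbolicity up to $\Omega(k\cdot d)$, where $d$ is the gadget diameter. I would therefore design each gadget to be metrically ``thin''---ideally with diameter $\Oh(1)$ relative to $k$---using planar tree-like structures with a small number of ``port'' vertices on the gadget boundary through which all inter-gadget shortest paths are forced to travel. The key technical lemma to prove is that shortest paths between vertices in distinct gadgets always hug the grid backbone and never cut through gadget interiors in a non-grid-like fashion; once this is established, a routine case analysis (splitting by how the four vertices distribute among gadgets) reduces the four-point condition for $G$ to the four-point condition for the $k\times k$ grid plus an $\Oh(1)$-additive error.

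Assuming the construction achieves $\delta(G) = \Oh(k)$ with $N = |V(G)| = \poly(n,k) = \poly(k)$, an $N^{o(\delta)}$-time algorithm for \textsc{Independent Set} on planar $\delta$-hyperbolic graphs would solve our \textsc{Grid Tiling} instance in $N^{o(k)} = k^{o(k)}$ time, contradicting ETH. Carefully tuning the trade-off between gadget diameter, gadget size and the alphabet size $n$ is where most of the work will lie; the planar reductions from \textsc{Grid Tiling} in the literature are not calibrated for hyperbolicity, so a bespoke gadget designed around the ``port'' idea above seems necessary.
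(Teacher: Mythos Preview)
Your approach differs substantially from the paper's, and the central concern is that the parameters do not line up. The paper does \emph{not} go through \textsc{Grid Tiling}. It starts from the $2^{o(n)}$ ETH lower bound for \textsc{Independent Set} on induced subgraphs of $\grid_n$ and builds an \emph{exponentially large} planar $\Theta(\delta)$-hyperbolic host graph $B$ of size $2^{\Oh(n/\delta)}$: a finite patch of the (constant-hyperbolic) binary tiling of the hyperbolic plane, with each tile replaced by a $\hat\delta\times\hat\delta$ grid with diagonals. An even subdivision of the hard instance is then embedded as a subgraph of $B$, and all remaining vertices of $B$ are neutralised by attaching two pendant leaves. An $N^{o(\delta)}$ algorithm on $B$ would run in $(2^{\Oh(n/\delta)})^{o(\delta)}=2^{o(n)}$ time on the original instance, contradicting ETH. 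Crucially, there is no constraint-encoding gadgetry: the hard instance sits inside $B$ essentially unchanged (only subdivided), and the hyperbolicity bound comes entirely from the host, proved via a quasi-isometry argument with the binary tiling.

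The gap in your plan is the $O(1)$-diameter gadget. To encode a cell with $|S_{i,j}|\le n^2$ choices \emph{and} the coordinate-equality constraints with its four neighbours in a \emph{planar} graph, every known reduction uses a gadget whose diameter grows with $n$ (typically an $n\times n$ sub-grid or a long selection path). A constant number of ports cannot carry the $\Omega(\log n)$ bits needed to check equality between neighbouring cells, and a clique on the choice vertices is not planar; once the port count or gadget diameter grows with $n$, the whole graph is quasi-isometric to a $\Theta(kn)\times\Theta(kn)$ grid, so $\delta=\Theta(kn)$ rather than $\Theta(k)$. With $n=\poly(k)$ this gives $\delta=k^{c}$ for some $c>1$, and an $N^{o(\delta)}$ algorithm is then merely a $k^{o(k^{c})}$ algorithm, which does not contradict the $k^{\Omega(k)}$ Grid Tiling lower bound. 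The tension you flag between ``metrically thin gadgets'' and ``planar constraint-encoding'' is precisely the obstruction, and I do not see how to resolve it within a polynomial-size construction; the paper's exponential-size hyperbolic host is exactly what makes the trade-off go through.
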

This result immediately implies that the running time of \cref{thm:mis} is also essentially optimal, in the sense that there is no $(1-\eps)$-approximation scheme running in time $\poly(n)/\eps^{o(\delta)}$, unless ETH fails.

The lower bound of \cref{thm:lower_bound} also stands in contrast to our knowledge of graphs of bounded treewidth. It is known that {\sc Maximum Independent Set} can be solved in $2^{\Oh(\tw)} n$ time on $n$-vertex graphs of treewidth~$\tw$, but such a result (fixed-parameter tractability) will be unlikely by \cref{thm:lower_bound}.

\subsection{Organization}
We first give an overview of the main ideas of our paper in \cref{sec:overview}, particularly those behind \cref{thm:separator} and \cref{thm:lower_bound}. 
We then start the full paper with some preliminaries in \cref{sec:prelim}, including formal definitions of hyperbolicity and slimness. Then \cref{sec:structure} delivers several important structural analyses of $\delta$-hyperbolic graphs, such as the computation of a filling and the structure of geodesics. Using these ideas, we develop our separator theorem in \cref{sec:sep} and prove \cref{thm:separator}. In \cref{sec:schemes}, we apply our separator theorem. We first prove \cref{cor:oursep-pump} and then use it to prove \cref{thm:mis} and \cref{thm:tsp}. We prove the lower bound of \cref{thm:lower_bound} in \cref{sec:lower}. Finally, we discuss our results and ask open questions in \cref{sec:discussion}.

\section{Overview of Main Ideas and Techniques}\label{sec:overview}
In this section, we discuss the combinatorial observations and ideas behind the proofs of \cref{thm:separator} and \cref{thm:lower_bound}.

\subsection{Main Ideas and Techniques for the Separator Theorem}
For the sake of convenience, we briefly restate \cref{thm:separator}.

\theoremsep*

To grasp this theorem and how we prove it, it is important to understand why the dichotomy of the two types of separators in this theorem is unavoidable, and why the separator requires size $\Omega(\log n)$ and $\Omega(\delta)$, respectively. To this end, we give two illustrative examples.

First, there exist planar hyperbolic graphs of treewidth $\Omega(\log n)$. For example, Kisfaludi-Bak~\cite[Lemma~28]{Kisfaludi-Bak20-arxiv} showed that a size-$n$ patch of the pentagonal tiling of the hyperbolic plane (see \cref{fig:hyperbolic-examples}) contains a plane subgraph that is a subdivision of a $\log n \times \log n$ grid. Thus, a path separator of a large balance must be of length $\Omega(\log n)$. (A stronger lower bound of $\Omega(\delta\log n)$ can be derived from \cref{sec:lower}.) The pentagonal grid example (combined with the isoperimetric inequality~\cite{gromov87,BH99} discussed later) demonstrates that a geodesic cycle separator alone cannot always lead to a balanced separator, as any cycle of length $\ell$ in this graph can cut away only $\Oh(\ell)$ vertices.

Second, we consider the simple example of a $\delta$-cylinder: a graph consisting of $n$ copies of a circle $C = \{c_1,\ldots,c_\delta\}$ and for each $i$, a path through the vertices $c_i$ of each copy (see \cref{fig:hyperbolic-examples}). The $\delta$-cylinder has hyperbolicity $\Theta(\delta)$. In the $\delta$-cylinder, any geodesic cycle that would be a balanced separator has at least~$\delta$ vertices. This example also demonstrates that a geodesic path alone cannot lead to a balanced separator.

By these examples, our algorithm needs to output either a geodesic path or a geodesic cycle as a separator. Moreover, they need to be of the size as stated in the theorem, apart from a possible factor $\delta$ overhead in the size of the geodesic path separator. While the examples served as the starting point for our thinking in the proof of \cref{thm:separator}, arriving at the proof required significant technical effort.

\begin{figure}
\centering
\includegraphics{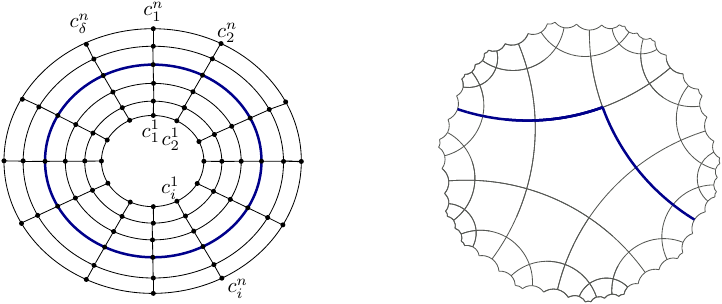}
\caption{Left: the $\delta$-cylinder with hyperbolicity $\Theta(\delta)$, and a typical geodesic cycle separator (in blue). Right: the pentagonal tiling graph and a shortest path separator (in blue).}
\label{fig:hyperbolic-examples}
\end{figure}

\medskip

We now give a high-level overview of the proof of \cref{thm:separator}. 
Let $G$ be a planar $\delta$-hyperbolic graph embedded on a sphere $\sph^2$. For simplicity, we assume first that $G$ is $2$-connected; we later argue how we can reduce to this case.

Our algorithm heavily relies on a known procedure to create $\Oh(\delta)$-fillings in $\delta$-hyperbolic graphs (see~\cite{BH99}). In the context of planar graphs, an $\Oh(\delta)$-\emph{filling} of a cycle $C$ of $G$ is a $2$-connected planar subgraph $H$ of $G$ that has $C$ as a face, and where each face of $H$ (except $C$) has length $\Oh(\delta)$. The so-called \emph{isoperimetric inequality}~\cite{gromov87,BH99} asserts that the minimum number of faces in a filling of $C$ is $\Oh(|C|)$. The procedure essentially `chops off' parts of the region enclosed by $C$ in a greedy manner, where each part is bounded by a cycle of length $\Oh(\delta)$.

We apply this filling procedure in an iterative manner with the goal of arriving at an $\Oh(\delta)$-filling where each face of the filling contains `few' vertices of $G$. First, we apply the filling procedure on the longest face in $G$ itself. If one of the faces of the resulting $\Oh(\delta)$-filling covers a region $F$ that contains `many' vertices of $G$, then we apply it iteratively on the longest face of this region. In this new iteration, the rest of the graph (formed by vertices in the interior of $\mathbb{S}^2 \setminus F$) is removed. This region becomes a face, which is assigned a weight equal to the number of vertices of $G$ outside $F$, of which there are `few' by the fact that there are `many' vertices of $G$ inside $F$. In this way, we slowly and iteratively proceed towards our stated goal. We only terminate prematurely if along the way a suitable separator is already found (see Outcome~1 below).

Let $G'$ be the graph after the final iteration of our algorithm (this may be after the above procedure fully finishes or is terminated prematurely). We can terminate with one of three outcomes:
\begin{itemize}
        \setlength\itemsep{0.2em}
    \item[]
\outcome{1}: One of the cycles of the current filling of $G'$ already has a good enough balance.
    \item[]
\outcome{2}: The maximum face length of $G'$ is $\Oh(\delta)$, i.e., a new greedy filling procedure would terminate with the trivial filling consisting only of the initial face cycle.
    \item[]
\outcome{3}: All faces of the current filling have few vertices of $G$ inside.
\end{itemize}
Note that when looking at the number of vertices of $G$ inside a face of the current filling, or in other words at the balance of this face, we also account for the newly assigned weights to (some of) the faces. We describe how we deal with each of these outcomes in turn.

\smallskip 

In \outcome{1}, there is a cycle in the filling with a good enough balance. In general, we can prove that if we encounter a cycle separator of length $\ell$ with balance $\alpha$, then we can compute a geodesic cycle separator of length $\Oh(\delta)$ with balance $\alpha/2^{\Oh(\ell)}$. This can be obtained by iteratively carving away a constant fraction of the vertices inside (or outside) the cycle while reducing the length of the separating cycle by at least~$1$. Applying this shortcutting procedure to the assumed cycle, we obtain a geodesic cycle separator.

\smallskip 

In \outcome{2}, all faces of $G'$ have length $\Oh(\delta)$. This is the case, for example, for $\delta$-cylinders. In a $\delta$-cylinder, we can directly find a geodesic cycle separator roughly in the middle of the cylinder. However, this intuition does not immediately carry over to  general planar $\delta$-hyperbolic graphs with short faces.

We first find a separator $S$ of size $\Oh(\delta \log n)$ by \cref{cor:tw-hyper-sep}. However, this separator is possibly not geodesic nor a path or cycle. Next, our goal is to transform the separator $S$ into a separator $\bar S$ that has a good \emph{split balance}, meaning that each face of the graph induced by $\bar S$ contains at most a constant proportion of the vertices of $G'$. This transformation is non-trivial and is done with the help of an auxiliary graph. Once the separator $\bar S$ with a good split balance is found, we can find some collection of faces in $G'[\bar S]$ whose union $U$ has a boundary $\bd U$ that gives a constant-balanced separator. However, the boundary $\bd U$ may consist of several cycles. We then find a single component cycle of $\bd U$ with a good split balance. Here, we need to offset the split balance of the cycle against its length. Hence, we select the component cycle $\gamma_i$ of $\bd U$ with the best ratio of balance to length. We then use a more involved shortening procedure on this cycle to find a geodesic cycle with the desired balance. For details, see~\cref{sec:outcome-2}.

\smallskip 

Finally, we can end up in \outcome{3}. In this case we think of the graph as embedded on the plane with the outer face being the starting cycle of the final filling. Recall that in this case, each filling face (except the outer face) has only a few vertices of $G$ inside. This case would be the outcome if the initial graph is a patch of the pentagonal grid, and the filling is based on the cycle around the perimeter of the patch, i.e., the boundary of the outer face. We now claim that we can find two `antipodal' vertices on the outer face such that some shortest path between them is a balanced separator.

It is far from clear in general why some shortest path connecting two `antipodal' vertices of the outer face has constant balance. We begin this proof by defining \emph{layers} on the filling faces: a face is in layer $i$ if its distance to the outer face is $i$. Roughly, we aim to show that there are only $\Oh(\log n)$ layers. 

To bound the number of layers, we prove a variant of the isoperimetric inequality for our purposes, which may be of independent interest. In general, consider a planar $\delta$-hyperbolic graph with a $\Oh(\delta)$-filling $H$ of a cycle $C$ of $G$, where $C$ is the cycle along a face in some fixed embedding of $G$. We then consider an arbitrary cycle $\gamma$ in $G$. In our \emph{isoperimetric inequality over greedy filling} we show that the interior of $\gamma$ intersects $\Oh(\delta |\gamma|)$ faces of the filling $H$. To bound the number of layers, one can show that the outer face cycle touches a constant proportion of all faces, i.e., the outermost layer has a constant proportion of all the faces of the filling. Iterating this argument shows that the number of layers is $\Oh(\log n)$.

Then, in the last layer, we find two vertices that are as far from the outer face as possible. Using an auxiliary tree in the planar dual graph, we can select a good balanced cut edge. The endpoints of the corresponding primal edge are connected to their respective nearest vertices $a$ and $b$ on the outer face. The path we obtain this way from $a$ to $b$ is a balanced separator, but unfortunately, it is not a shortest path. We need to argue about the balance of a shortest path from $a$ to $b$ in $G$ instead. To prove that some shortest $ab$ path is also a balanced separator, we crucially rely on the isoperimetric inequality on greedy fillings again. The relatively short closed walk given by the initial path and the shortest path can only interact with a small number of filling faces due to the isoperimetric inequality. Since we are in the case where each filling face contains only a few vertices inside, we can upper bound the balance shift between the initial path and the shortest path. This then gives the geodesic path separator with the desired balance. For details, see~\cref{sec:outcome-3}.

\medskip

With the above outcomes handled, the only missing piece of the proof of~\cref{thm:separator} is the case of graphs that are not $2$-connected. Again, since hyperbolicity is very sensitive to changes in the graph, we have a slightly more technical reduction from the general case to the $2$-connected case. Intuitively, it is enough to find a separator of a `central' 2-connected component, but this would not immediately account for the number of vertices in other components and thus potentially lead to an imbalanced separator. We represent all the non-central $2$-connected components of $G$ by attaching wheel graphs to the central $2$-connected component, which (i) ensures that a balanced separator of the reduced $2$-connected instance corresponds to a balanced separator in the original graph and (ii) does not increase the hyperbolicity significantly.

\subsection{Main Ideas and Techniques of the Lower Bound}
For the sake of convenience, we restate the lower bound:

\theoremlower*

The proof is based on embedding a subdivision of a Euclidean grid (with some diagonals) into a planar $\delta$-hyperbolic graph. It is known that \textsc{Independent Set} in subgraphs of the $n\times n$ grid (with some diagonals) has a $2^{o(n)}$-time lower bound under ETH~\cite{BergBKMZ20}, which we use as our starting point.

Let $G$ be a given subgraph of an $n\times n$ grid
with diagonals, which we denote by $\grid_n$. Observe that subdividing each edge of $G$ an even number of times gives an instance that is equivalent to $G$~\cite{Po74}. 
We are then left with two tasks: (a) create a planar $\delta$-hyperbolic host
graph of size $2^{\Oh(n/\delta)}$ that `surrounds' an even subdivision of
$G$, and (b) make sure that the surrounding parts of the host graph created in (a) do not impact the hardness proof.

\begin{figure}
    \centering
    \includegraphics{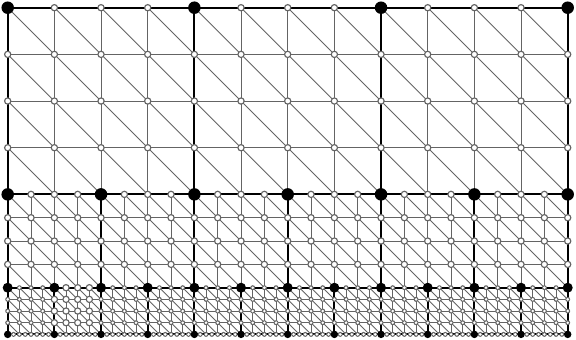}
    \caption{A part of a binary tiling (thick edges) with small $\delta\times\delta$ grids (with diagonals) embedded in each face except the outer face.}
    \label{fig:binarygrid_simple}
\end{figure}

Task (a) requires a thorough, technical approach. It is in fact easier to think of embedding $\grid_n$ itself instead of focusing on some custom graph $G$. Our construction is based on the so-called \emph{binary tiling} of B\"or\"oczky~\cite{Boro}, which is a tiling of the hyperbolic plane; see \cref{fig:binarygrid_simple}. The underlying infinite planar graph of this tiling is known to be constant hyperbolic.
We carefully choose a portion $B_1$ of this graph that has size $2^{\Oh(n/\delta)}$ and is still constant-hyperbolic. By inserting $\delta \times \delta$ grids into the faces of $B_1$ (except its outer face), we can show that we get an $\Oh(\delta)$-hyperbolic graph $B_\grid$, which contains some subdivision of $\grid_n$. A further modification of $B_\grid$ leads to a graph $B$ that is $\Oh(\delta)$-hyperbolic, is of size $2^{\Oh(n/\delta)}$, and contains a subgraph that is an even subdivision of $G$. It follows that $B$ contains a hard instance of \textsc{Independent Set} as a subgraph and we have achieved (a). 

To achieve (b), we cannot just remove unwanted parts of the $2^{\Oh(n/\delta)}$-hyperbolic host graph we just constructed, as that would change the hyperbolicity. However, in the case of {\sc Independent Set}, we show how to achieve (b) using a simple local modification, attaching a small gadget to vertices of $B$ that are not in the hard instance, that does not impact hyperbolicity. This gives the desired graph whose maximum independent sets can be related to the maximum independent sets of~$G$ and completes the reduction.

\section{Preliminaries} \label{sec:prelim}

For $n \in \nat$ we let $[n] \coloneqq \{1,\ldots,n\}$.  The $\poly(n)$ is a
shorthand notation for $n^{\Oh(1)}$.  The notation $\Ot(T)$ means $\Oh(T \cdot
\polylog(T))$. Sometimes, we use $o(1)$ notation which means a function in $n$
that asymptotically goes to $0$.  All the logarithms are base $2$ unless stated
otherwise.

For a path $P$ and two vertices $a$ and $b$ in $P$, we define subpath $P[a,b]$
to be the path between $a$ and $b$ inside $P$ (including vertices $a$
and $b$).  For two vertices $u$ and $v$ in $V(G)$, we denote $\dist(u,v)$ to be
the length of their shortest path in the graph $G$, and $G[u,v]$ to be a
shortest path itself. For a subgraph $H$ of $G$, we denote $\dist_H(u,v)$ to be
the distance between vertices $u$ and $v$ in $H$.
For $\ell \in \nat$, the $\ell$-neighborhood of $H$ is $\{ v \in G \mid
\exists u \in H \text{ such that } \dist(u,v) \le \ell\}$.

We use standard graph notation (see for example~\cite{graphtheoryDiestel}).  In
particular, $V(G)$ and $E(G)$ are sets of vertices and edges respectively.  For
a set $X \subseteq V(G)$ we use $G[X]$ to denote the subgraph induced in $G$ by
$X$.  Whenever, we are given a planar graph $G$, we also assume that it is also
equipped with the \emph{combinatorial embedding} or an embedding on a sphere $\sph^2$. We refer
the reader to the textbook~\cite{klein2014optimization} for introduction to
planar graphs and formal definition of combinatorial embedding.

\subparagraph*{Hyperbolicity}
Let $(X,\dist)$ be a metric space. The Gromov product of $y, z \in X$ with respect to $w \in X$ is 
\[(y,z)_w = \frac 1 2 \left( \dist(w, y) + \dist(w, z) - \dist(y, z) \right).\]
The metric space $(X,\dist)$ is $\delta$-\emph{hyperbolic} if and only if for all $x,y,z,w \in X$ we have
\[(x,z)_w \ge \min \left( (x,y)_w, (y,z)_w \right) - \delta.\]
The hyperbolicity of $(X,d)$ is the minimum value $\delta$ for which it is
$\delta$-hyperbolic. We note that this definition is equivalent to the definition given in the introduction~\cite{BH99}.

Consider now a graph $G=(V,E)$. This induces the metric space on $V$ with the shortest
path distance. A \emph{geodesic} is defined as the vertex set of a shortest
path, and a \emph{triangle of geodesics} is a set of three geodesics between three
vertices in the graph; each side of the triangle is a shortest path. A 
triangle of geodesics is said to be \emph{$\delta$-slim} if each of its sides is contained in
the $\delta$-neighborhood of the union of the other two sides, that is, for any
shortest paths $P:a\rightarrow b; Q:b \rightarrow c; R:c\rightarrow a$ and any
$v\in V(P)$ we have $\min_{x\in V(Q)\cup V(R)} \dist_G (v,x) \leq \delta$.  The
slimness of $(X,d)$ is defined as the minimum $\delta$ for which all triangles
of $(X,d)$ are $\delta$-slim.

A simple graph $G=(V,E)$ can be also considered as a metric space in the
following sense: We can consider $X$ to be the $1$-dimensional cell complex
$\Cc(G)$ defined by $G$, and here we set each edge to be a unit length interval.
The distance of two points is then the shortest curve connecting them in
$\Cc(G)$, i.e., the geodesic distance. This metric space we denote by
$(\Cc(G),\rho_G)$. Note that when we restrict $\rho$ to the vertices, then we
get $\dist_G$. In the metric space $(V,\dist_G)$ a geodesic is defined as the
vertex set of a shortest path. These two metrics are very similar.

It is well-known that in so-called \emph{geodesic metric spaces} (such as
$(\Cc(G),\rho_G)$) the space is $\delta$-hyperbolic if and only if all triangles
are $\delta'$-slim where $\delta' =\Theta(\delta)$~\cite{gromov87,BH99}. Notice that if
$(\Cc(G),\rho_G)$ is $\delta$-hyperbolic, then so is $(V,\dist_G)$, since its
distance function is a restriction of $\rho_G$. On the other hand, when
$(V,\dist_G)$ is $\delta$-hyperbolic, then $(\Cc(G),\rho_G)$ is
$\delta+\Oh(1)$-hyperbolic.  Similarly, the maximum slimness of a triangle in
$(V,\dist_G)$ is within $1/2$ additive distance of the maximum slimness of a
triangle in $(\Cc(G),\rho_G)$.

We remark, that there is a crucial distinction regarding $(\Cc(G),\rho_G)$ and
$(V,\dist_G)$. It is well known (and easy to verify) that $(\Cc(G),\rho_G)$ have
hyperbolicity and slimness $0$ if and only if $G$ is a tree, and both
hyperbolicity and slimness is at least $1/2$ otherwise~\cite{BH99}. This is not
true when it comes to the hyperbolicity and slimness of $(V,\dist_G)$: when $G$
is a triangle graph, then $(V,\dist(G))$ has hyperbolicity and slimness~$0$. It
is known that the hyperbolicity of $(V,\dist(G))$ is $0$ if and only if $G$ is a
block graph~\cite{BandeltM86}, i.e., a graph where all $2$-connected components
are cliques.

Because of that, in the remainder of this paper we will consider the simple graph $G$ with the
metric $(V,\dist_G)$. We will assume that the triangles of $(V,\dist_G)$ are
$\delta$-slim. As a consequence of $\delta$-slimness of $G$, we also have that
$G$ is $\delta'=\Theta(\delta)$-hyperbolic.

\section{Structures in Planar \texorpdfstring{$\delta$}{delta}-Hyperbolic Graphs} \label{sec:structure}

Now, we propose several properties of planar $\delta$-hyperbolic graphs.
Suppose that we are given a $2$-connected planar graph
$G$ with fixed combinatorial embedding. Moreover we assume that $G$ is
$\delta$-hyperbolic, where $\delta$ is given as a parameter. Without loss of
generality, we can assume that $\delta \geq 1$, as every $0$-hyperbolic graph is
also $1$-hyperbolic.

\subsection{Local Geodesics}
A path or cycle (or subgraph) $P$ of $G$ is a \emph{$k$-local geodesic} if for
any $u,v \in V(P)$ with $d_P(u,v)\leq k$, we have $d_P(u,v)=d_G(u,v)$.

One can show that in $\delta$-slim graphs, the $\Oh(\delta)$-local
geodesic cycles have bounded length. This lemma is well known and we include the proof for completeness.

\begin{lemma}[cf., Theorem 1.13 in Chapter III.H of \cite{BH99}]
    \label{lem:local-geo-paths}
    Let $G$ be a $\delta$-slim graph. Let $a$ and $b$ be arbitrary
    vertices of $G$, and let $Q$ be a shortest path from $a$ to $b$ in $G$. If a path
    $P$ from $a$ to $b$ is $(8\delta+2)$-local geodesic, then $P$ is contained
    in the $4\delta$-neighborhood of $Q$.
\end{lemma}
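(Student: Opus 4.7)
The plan is to argue by contradiction, following the classical pattern for showing that local geodesics stay close to global geodesics in hyperbolic spaces (cf.\ the quadrilateral-slimness argument in Chapter III.H of~\cite{BH99}), but keeping careful track of constants so that $\delta$-slimness of triangles, together with the $(8\delta+2)$-local property, yields exactly the promised $4\delta$-neighborhood. Suppose some vertex of $P$ lies at $G$-distance strictly greater than $4\delta$ from $V(Q)$, and pick $v\in V(P)$ maximizing $d:=\dist(v,V(Q))$; so $d>4\delta$.

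I would first dispose of the case where $v$ is near an endpoint: if $d_P(v,a)\leq 4\delta$, then since $4\delta\leq 8\delta+2$ the $(8\delta+2)$-local geodesic property gives $\dist(v,a)=d_P(v,a)\leq 4\delta$, and since $a\in V(Q)$ this contradicts $d>4\delta$; the case $d_P(v,b)\leq 4\delta$ is symmetric. So we may assume $d_P(v,a),d_P(v,b)\geq 4\delta+1$, and pick vertices $u_1,u_2\in V(P)$ on either side of $v$ with $d_P(u_1,v)=d_P(v,u_2)=4\delta+1$. Because $d_P(u_1,u_2)=8\delta+2$, the $(8\delta+2)$-local property makes the subpath $P[u_1,u_2]$ a shortest $G$-path; in particular $d_G(u_i,v)=4\delta+1$.

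Next, let $q_i\in V(Q)$ be a vertex of $Q$ closest to $u_i$; by the choice of $v$ we have $\dist(u_i,q_i)\le d$. Consider the closed walk formed by $P[u_1,u_2]$, a shortest $G$-path $R_2$ from $u_2$ to $q_2$, the sub-arc $Q[q_2,q_1]$ (a shortest path in $G$ since $Q$ is), and a shortest $G$-path $R_1$ from $q_1$ to $u_1$. This is a geodesic quadrilateral, and subdividing it into two triangles by a diagonal and applying $\delta$-slimness of each shows that it is $2\delta$-slim: every vertex on one side lies within $G$-distance $2\delta$ of the union of the other three sides. Applying this to $v\in V(P[u_1,u_2])$ yields a vertex $w$ with $\dist(v,w)\leq 2\delta$ on $R_1\cup Q[q_1,q_2]\cup R_2$. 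If $w\in V(Q[q_1,q_2])\subseteq V(Q)$, then $\dist(v,V(Q))\leq 2\delta<d$, a contradiction. Otherwise, say $w$ lies on $R_i$; writing $t=\dist(u_i,w)$ we have $\dist(w,q_i)\leq \dist(u_i,q_i)-t\leq d-t$ because $R_i$ is a geodesic, while the triangle inequality gives $t\geq d_G(u_i,v)-\dist(v,w)\geq (4\delta+1)-2\delta=2\delta+1$. Therefore
\[
\dist(v,V(Q))\leq \dist(v,w)+\dist(w,q_i)\leq 2\delta + (d-(2\delta+1)) = d-1,
\]
again contradicting the maximality of $d$.

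Having ruled out all cases, no such $v$ exists, and $P$ lies in the $4\delta$-neighborhood of $V(Q)$. The only genuinely delicate step is the constant bookkeeping in the last inequality, where the $2\delta$ coming from quadrilateral-slimness and the $2\delta+1$ coming from the chosen $P$-offset have to combine so that the $d-1$ strictly beats $d$; this is precisely why the local geodesic constant must be at least $8\delta+2$, and why $4\delta$ is the right radius (the stronger $2\delta$ bound one might hope for from the interior case is genuinely lost in the endpoint case, which only buys $d_P(v,a)\leq 4\delta$).
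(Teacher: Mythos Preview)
Your proof is correct and follows essentially the same approach as the paper: pick the vertex $v$ of $P$ at maximum distance $d$ from $Q$, take points at $P$-distance $4\delta+1$ on either side, project them to $Q$, and use slimness of the resulting geodesic quadrilateral to reach a contradiction. The only cosmetic differences are that you package the two-triangle argument as ``quadrilaterals are $2\delta$-slim'' and derive the contradiction as $\dist(v,V(Q))\leq d-1<d$, whereas the paper spells out the two triangles explicitly and concludes by showing $\dist(x,Q)<\dist(y,Q)$, contradicting maximality of $x$ directly.
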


\begin{proof}
    Let $x \in P$ be a vertex on $P$ with the maximum distance from $Q$. For
    the sake of contradiction, let us assume that $\dist(x,Q) > 4\delta$. This
    implies that $\dist_P(x,a)$ and $\dist_P(x,b)$ are both greater than
    $4\delta$. Therefore, there exist two distinct vertices $y$ and $z$ on $P$
    such that the subpaths $P[y,x]$ and $P[x,z]$ have length $4\delta+1$. Let
    $y',z' \in Q$ be the vertices on $Q$ that are closest to $y$ and $z$,
    respectively; see Figure~\ref{fig:local_geodesic}. Fix $G[y,y']$,
    $G[z,z']$ and $G[z,y']$ to be some shortest paths
    between $y,y'$, between $z,z'$ and between $z,y'$ respectively.

    \begin{figure}
        \centering
        \includegraphics{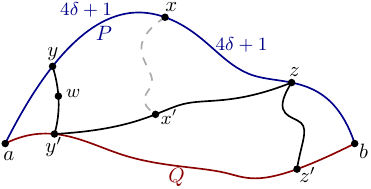}
        \caption{An $8\delta+2$-local geodesic $P$ stays within distance $4\delta$ of a shortest path $Q$ with the same endpoints.}
        \label{fig:local_geodesic}
    \end{figure}

    First, we will show that there exists $w \in G[y,y'] \cup
    G[z,z']$ with $\dist(x,w) \le 2\delta$. Consider the triangle
    $\Delta(y,y',z)$ determined by paths $P[y,z]$, $G[y,y']$, and
    $G[z,y']$. Observe that because $P$ is $(8\delta+2)$-local geodesic
    and $\dist_P(y,z) \le 8\delta+2$, this triangle is geodesic. Moreover,
    $x$ lies on $P[y,z]$, which is a side of this triangle. Therefore, by the
    $\delta$-slim triangle property, there exists a vertex $x'$ on either
    $G[y,y']$ or $G[z,y']$ with $\dist(x',x) \le \delta$. If $x' \in
    G[y,y']$, we are done as we can set $w=x'$. Therefore, from now on,
    let us assume $x' \in G[z,y']$.

    Next, consider the triangle $\Delta(z,z',y')$ that consists of paths
    $G[z,y']$, $G[z,z']$, and $Q[y',z']$. Observe that this is a
    triangle of geodesics, because $Q$ is a shortest path. Moreover, it holds that
    $x' \in \Delta(z,z',y')$ as it lies on $G[z,y']$. Therefore, by the $\delta$-slim triangle property, there exists $x''$
    on one of the opposite sides of this triangle with $\dist(x',x'') \le
    \delta$. By the triangle inequality, we have $\dist(x,x'') \le \dist(x,x') +
    \dist(x',x'') \le 2\delta$. If $x'' \in Q[y',z']$, then it means that
    $\dist(x,Q) \le 2\delta$, which contradicts the opening assumption
    $\dist(x,Q) > 4\delta$. Therefore, $x'' \in G[z,z']$, and we can
    simply select $w=x''$.

    Hence, there exists $w \in G[y,y'] \cup G[z,z']$ with
    $\dist(x,w) \le 2\delta$. Let us assume that $w \in G[y,y']$ (the
    remaining argument is symmetrical when $w \in G[z,z']$). 
    Next, we will show $\dist(x,y') < \dist(y,y')$. Indeed,
    \begin{align*}
        \dist(x,y') - \dist(y,y') & \le \left(\dist(x,w) + \dist(w,y')\right) -
        \left(\dist(y,w) + \dist(w,y')\right)\\
        & = \dist(x,w) - \dist(y,w) \\
        & \le \dist(x,w) - \left(\dist(y,x) - \dist(x,w) \right)\\
        & = 2 \cdot \dist(x,w) - \dist(x,y) \\
        & < 4\delta - 4\delta = 0.
    \end{align*}
    Since $y'$ is the closest vertex on $Q$ to $y$, this means that
    $\dist(x,Q) \le \dist(x,y') < \dist(y,Q)$. This contradicts the
    initial choice of $x$ and concludes the proof.
\end{proof}

By setting $a = b$ in the above lemma we get the following corollary.

\begin{corollary}\label{cor:local-geo-cycle}
    Any $(8\delta+2)$-local geodesic cycle in a $\delta$-slim graph has
    length at most $(8\delta+2)$ (and thus it is a geodesic cycle).
\end{corollary}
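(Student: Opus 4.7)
The plan is to apply Lemma \ref{lem:local-geo-paths} in the degenerate form suggested by the preamble to the corollary, namely with $a = b$ chosen as an arbitrary vertex of the given $(8\delta+2)$-local geodesic cycle $C$, and with the ``path'' $P$ interpreted as $C$ viewed as a closed walk based at $a$. Because the shortest path from $a$ to $a$ is the singleton $Q = \{a\}$, the conclusion of Lemma \ref{lem:local-geo-paths} specialises to the statement that every vertex of $C$ lies within distance $4\delta$ of $a$ in $G$, i.e., $\dist(x, a) \le 4\delta$ for all $x \in V(C)$.

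Granted that bound, the length estimate follows in one line. Suppose for contradiction $|C| \ge 8\delta+3$. Then $\lfloor |C|/2 \rfloor \ge 4\delta+1$, so there is $x \in V(C)$ with $\dist_C(a, x) = 4\delta+1$. Since $\dist_C(a, x) \le 8\delta+2$, the $(8\delta+2)$-local geodesic hypothesis forces $\dist(a, x) = 4\delta+1 > 4\delta$, contradicting the first paragraph. Hence $|C| \le 8\delta+2$. The ``geodesic cycle'' clause then follows because any pair $u, v \in V(C)$ satisfies $\dist_C(u, v) \le \lfloor |C|/2 \rfloor \le 4\delta+1 \le 8\delta+2$, so the local geodesic hypothesis promotes $\dist_C = \dist$ on all of $V(C) \times V(C)$.

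The step requiring care is verifying that the proof of Lemma \ref{lem:local-geo-paths} really does apply to a closed walk, not only to a simple path. The critical construction in that proof picks $y, z \in V(P)$ at $P$-distance $4\delta+1$ on either side of an extremal vertex $x$ and uses that the subwalk $P[y, z]$ of length $8\delta+2$ is a geodesic in $G$. On a cycle this is valid whenever $|C| \ge 16\delta+4$, so that both cyclic arcs from $x$ to $a$ exceed $4\delta$ and $P[y, z]$ remains the shorter $y$--$z$ arc of $C$. The leftover range $8\delta+3 \le |C| \le 16\delta+3$ is handled directly: there $\dist_C(u, v) \le \lfloor |C|/2 \rfloor \le 8\delta+1$ for every pair, so $C$ is already a geodesic cycle, and a standard $\delta$-slim-triangle argument applied to three equispaced vertices of $C$ gives $|C| \le 6\delta$, which already contradicts $|C| \ge 8\delta+3$.
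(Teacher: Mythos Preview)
Your proof is correct and follows essentially the same approach as the paper: apply Lemma~\ref{lem:local-geo-paths} with $a=b$ so that $Q=\{a\}$, deduce that every vertex of $C$ lies within distance $4\delta$ of $a$, and derive a contradiction from a vertex at cycle-distance $4\delta+1$. The paper simply asserts that setting $a=b$ in the lemma is legitimate; you go further and justify this by splitting off the range $8\delta+3\le|C|\le 16\delta+3$ (where the arc $P[y,z]$ through the extremal vertex $x$ need not be the shorter arc of $C$, so the triangle in the lemma's proof is not obviously geodesic) and handling it directly via a slim-triangle argument on three equispaced vertices. This extra care is not in the paper's proof and makes yours more rigorous. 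One minor quibble: the bound ``$|C|\le 6\delta$'' from three equispaced vertices should really read $|C|\le 6\delta+O(1)$ once integer rounding is tracked, but since you are contradicting $|C|\ge 8\delta+3$ this is harmless for $\delta\ge 2$; for $\delta=1$ you may want to use a near-bigon (three vertices at positions $0,\lfloor|C|/2\rfloor,\lfloor|C|/2\rfloor+1$) instead, which gives the sharper bound $\lfloor|C|/4\rfloor\le\delta$.
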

\begin{proof}
    Let $C$ be an $(8\delta+2)$-local geodesic cycle in a $\delta$-slim
    graph, and let $a \in C$ be an arbitrary vertex on $C$. For the sake of
    contradiction, let $x \in C$ be a vertex on $C$ such that $\dist_C(a,x) =
    4\delta+1$. Because $C$ is $(8\delta+2)$-local geodesic, it holds that
    $\dist(a,x) = \dist_C(a,x)$. If we set $a=b$ in \Cref{lem:local-geo-paths},
    we see that the cycle $C$ is contained in the ball of radius $4\delta$
    centered at $a$. Hence, we have $\dist(a,x) \le 4\delta$, which contradicts
    the choice of $x$.
\end{proof}

In hyperbolic graphs, geodesic separators are useful because they (unlike other separators) can preserve hyperbolicity in the following sense.

\begin{lemma}\label{lem:geod-sep}
    Let $G$ be a $\delta$-slim graph  and let $S$ be a geodesic subgraph
    of $G$. Let $\mathcal{C}$ be the set of connected components of $G -
    S$. For any $X \subseteq \mathcal{C}$ the graph $G[S \cup V(X)]$ is
    $\delta$-slim.
\end{lemma}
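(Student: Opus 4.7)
The plan is to reduce the claim to a single isometry statement: writing $H := G[V(S) \cup V(X)]$, I will show that $\dist_H(u,v) = \dist_G(u,v)$ for every pair $u,v \in V(H)$. Once this is established, $\delta$-slimness transfers from $G$ to $H$ essentially for free, because every geodesic triangle in $H$ is also a geodesic triangle in $G$ on the same vertex set, and witnesses of slimness lie in $V(H)$.

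The heart of the argument is a rerouting step. First I would note the trivial direction $\dist_H(u,v) \ge \dist_G(u,v)$. For the other direction, take a shortest path $P = u = v_0, v_1, \dots, v_k = v$ in $G$. If some $v_i \notin V(H)$, then $v_i$ lies in some component $Y \in \Cc \setminus X$ of $G-S$. Since $Y$ is a connected component of $G - S$, any neighbor of $Y$ outside $Y$ is in $V(S)$. Thus every maximal subpath $v_j, \dots, v_{j'}$ of $P$ contained in $Y$ is flanked by vertices $v_{j-1}, v_{j'+1} \in V(S)$. As $P$ is a shortest path in $G$, the length of $P[v_{j-1}, v_{j'+1}]$ equals $\dist_G(v_{j-1}, v_{j'+1})$, and because $S$ is a geodesic subgraph we have $\dist_S(v_{j-1}, v_{j'+1}) = \dist_G(v_{j-1}, v_{j'+1})$. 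Replacing $P[v_{j-1}, v_{j'+1}]$ by a shortest $v_{j-1}v_{j'+1}$-path inside $S$ produces a $uv$-walk of length at most $|P|$ whose vertices lie in $V(S) \subseteq V(H)$, and whose edges lie in $S \subseteq H$. Iterating this rerouting over all excursions of $P$ outside $H$ yields a $uv$-walk in $H$ of length $\le \dist_G(u,v)$, proving $\dist_H(u,v) \le \dist_G(u,v)$.

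With isometry in hand, I would finish as follows. Let $\Delta$ be any geodesic triangle in $H$ with corners $a,b,c \in V(H)$ and sides $P_{ab}, P_{bc}, P_{ca}$ that are shortest $H$-paths. By the isometry $\dist_H = \dist_G$ on $V(H)$, each side $P_{\cdot\cdot}$ has length $\dist_G(\cdot,\cdot)$ and is therefore also a shortest path in $G$. So $\Delta$ is a geodesic triangle of $G$, and by the hypothesis that $G$ is $\delta$-slim, for every $v \in V(P_{ab})$ there is some $x \in V(P_{bc}) \cup V(P_{ca})$ with $\dist_G(v,x) \le \delta$. Since both $v,x \in V(H)$, the isometry gives $\dist_H(v,x) = \dist_G(v,x) \le \delta$, and by symmetry the same bound holds for vertices on the other sides. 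Hence $\Delta$ is $\delta$-slim in $H$, so $H$ is $\delta$-slim.

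The main obstacle, as far as I can see, is just making the rerouting argument fully rigorous: one must be careful that (i) every excursion of a shortest $G$-path outside $H$ is bordered by vertices of $S$ (which is where connectivity of components of $G-S$ is used), and (ii) the replacement subpath inside $S$ actually consists of vertices and edges of $H$, which follows because $V(S) \subseteq V(H)$ and $E(S) \subseteq E(H)$. Beyond this, no hyperbolicity-specific tools are needed — the proof is just a combination of isometry preservation and the trivial transfer of the $\delta$-slim property between isometric graphs.
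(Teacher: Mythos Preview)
Your proposal is correct and follows essentially the same approach as the paper: first establish that $H$ is an isometric (geodesic) subgraph of $G$ by rerouting shortest paths through $S$, then transfer $\delta$-slimness using that geodesic triangles in $H$ are geodesic triangles in $G$ with the same witnesses. The only cosmetic difference is that the paper reroutes in one shot---replacing the entire segment of $P$ between its first and last $S$-vertex by a single shortest path in $S$---whereas you reroute each excursion into $\Cc\setminus X$ separately; both variants are valid.
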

\begin{proof}
    Since $S$ is geodesic, it is also connected.
    Let $G' = G[S \cup V(X)]$. We begin by proving the following claim.
    \begin{claim*}\label{claim:local-sep}
        For every $a,b \in V(G')$ it holds that $\dist_{G'}(a,b) = \dist_G(a,b)$, i.e., $G'$ is a geodesic subgraph.
    \end{claim*}
    \begin{claimproof}
        Since $G'$ is a subgraph of $G$, we have $\dist_{G'}(a,b) \ge
        \dist_G(a,b)$. Therefore, it suffices to show that $\dist_{G'}(a,b) \le
        \dist_G(a,b)$.

        Fix $a,b \in V(G')$ and some shortest path $P$ between them in $G$. Let
        $x \in P \cap S$ be the vertex with minimum $\dist_P(a,x)$ and $y \in P
        \cap S$ be the vertex with minimum $\dist_P(y,b)$. Note that it is
        possible that such $x,y$ do not exist. In this case, observe that $P$ is
        a path in $G'$ and the claim follows.

        Consider a path $P' = P[a,x] \oplus S[x,y] \oplus P[y,b]$ (here $\oplus$
            denotes the join of the paths, and $H[a,x]$ is a shortest path in
        graph $H$). Observe that since $S$ is a separator and $V(X)$ are
        vertices in some connected components, it follows that $P[a,x]$ and
        $P[y,b]$ are paths in $G'$. Therefore, $P'$ is a path inside the graph
        $G'$.

        Finally, observe that because $S$ is a geodesic subgraph, it holds that
        $\dist_S(x,y) \le \dist_P(x,y)$. 
        Therefore, the length of $P'$ is at most the length
        of $P$, and the claim follows.
    \end{claimproof}

    Let $P,Q$, and $R$ be the sets of vertices of the three sides of a triangle of geodesics in $G'$.  We need to show that for any $x\in P$ there is some $y\in Q\cup R$ such that $\dist_{G'}(x,y)\leq \delta$. Notice first that $P,Q,R$ form shortest paths also with respect to $G$, and thus they form a triangle of geodesics also with respect to $G$. Thus there exists $y_G\in Q\cup R$ such that $\dist_G(x,y_G)\leq \delta$. It follows that $y_G\in V(G')$ as well, and since $G'$ is geodesic, we have $\dist_{G'}(x,y_G)=\dist_G(x,y_G)\leq \delta$. Consequently $y=y_G$ satisfies the desired property.
\end{proof}

\subsection{Fillings and greedy fillings}

\begin{definition}[$k$-filling]
    Let $C$ be a cycle in a graph $G$, and let $k > 0$ be a positive
    integer. We say that a $2$-connected planar subgraph $H$ of $G$ is an
    \emph{$k$-filling of $C$} if $H$ admits a planar embedding where all faces except $C$ have size at most $k$.

    The \emph{area} of a $k$-filling is defined as the number of edges in $H$.
\end{definition}

Note that when $|C| \leq k$, then $C$ trivially admits a
$k$-filling of area at most $k$ by taking $H = C$. The notion of a
$k$-filling is usually defined in the general context of metric spaces
(not just planar graphs). For these more general definitions, it is true that
when $G$ is $\delta$-hyperbolic, then the minimum $k$-filling of any loop
$C$ has area $\Oh(\delta |C|)$ (see Proposition 2.7 in Chapter III.H
of~\cite{BH99}). Next, based on~\cite{BH99} we present a simple procedure to construct
fillings in near-linear time for face cycles of planar graphs.

\subparagraph*{Greedy Filling Procedure}

We begin by describing the following \emph{greedy filling procedure}. This
procedure is given a planar $2$-connected graph $G$ that is $\delta$-hyperbolic. 
For a given face cycle $C$ of $G$, it constructs a graph
$H$ that is a \emph{greedy filling} of $C$. We fix a combinatorial embedding of $G$, i.e., an embedding on the unit sphere $\sph^2$.

The procedure is as follows: Initially, we set $H = C$, $C_1 = C$, and $G_1 =
G$. The procedure is iterative, and $i$ denotes the current iteration (initially
set to $1$). If a cycle $C_i$ is $10\delta$-local geodesic, we add $C_i$ to $H$
and terminate.

\begin{figure}
    \centering
    \includegraphics[width=\textwidth]{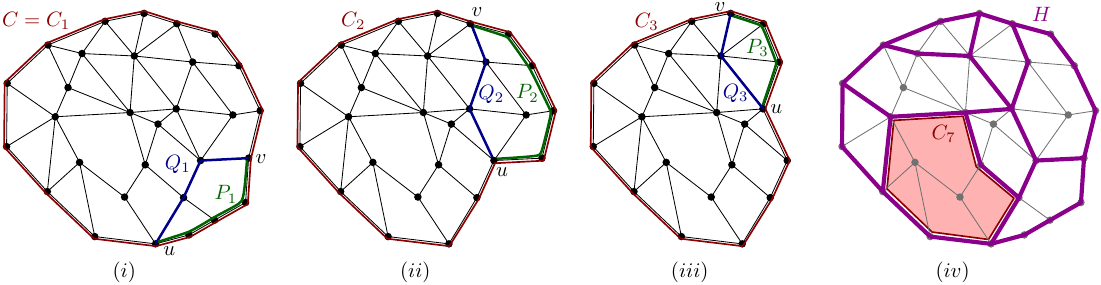}
    \caption{Greedy filling for the face with boundary cycle $C$. (i)-(iii) depicts the first few steps of the filling. (iv) shows the final filling $H$. The greedy filling procedure terminated when it found $C_7$, which is a  geodesic cycle.}
    \label{fig:greedyfilling}
\end{figure}

Otherwise, $C_i$ is not $10\delta$-local geodesic, and we find a path $P_i$ that
is a minimal subpath of $C_i$ that is not a shortest path in $G_i$, see \cref{fig:greedyfilling}. Note that
$|P_i| < 10\delta$. Let $u,v \in C_i$ denote the endpoints of $P_i$, and let
$Q_i$ be a shortest path from $u$ to $v$ in $G_i$. Observe that, by the
minimality of $P_i$, the paths $P_i$ and $Q_i$ are vertex disjoint (except for
the endpoints $u,v$). Next, we add the path $Q_i$ to the filling graph $H$.
Then, we shorten the cycle based on $Q_i$. Namely, we set $C_{i+1} = (C_i
- P_i) \cup Q_i$. We set $G_{i+1}$ to be the subgraph of $G_i$ with the
boundary $C_{i+1}$ (the side that does not contain the internal vertices of $P_i$)
and continue the procedure with $G_{i+1}$ and $C_{i+1}$. 
Notice that $P_i$ is a geodesic path separator of $G_i$, thus by \Cref{lem:geod-sep} the $\delta$-slimness of $G$ is retained in each $G_i$. 

This concludes the description of the greedy filling procedure. We say that the
graph $H$ returned by this procedure is a \emph{greedy filling} of $C$. Next, we show
that this procedure can be executed efficiently, and that it indeed returns an
$\Oh(\delta)$-filling of $C$ with small area.

\begin{lemma}[Greedy-Filling Procedure]\label{lem:compute_filling}
    Let $G$ be a $2$-connected planar $\delta$-slim graph on $n$ vertices, and let $C$ be a face of $G$.
    There exists a data structure that 
    for a cycle $C$ inside $G$ given to it as a query, can construct a
    greedy-filling $H_C$ of $C$ inside $G$. The
    returned graph $H_C$ is $21\delta$-filling of $C$ and has area
    $\Oh(\delta |C|)$.
    The initialization time of the data structure is $\Oh(\delta^2 n \log^4 n)$.
    The query time is $\Oh(\delta |C| \log\log n)$ if a cycle $C$ is given as a query.

    The data structure can also answer shortest-path queries about distances of
    length at most $10\delta$ in $\Oh(\delta+\log\log(n))$ time.
\end{lemma}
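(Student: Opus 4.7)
The plan splits into (a) verifying the structural guarantees of the output of the greedy procedure, and (b) implementing each iteration fast enough using a planar distance oracle restricted to short distances. For (a), the three facts to verify are that every iteration strictly shrinks the boundary cycle (so the procedure terminates in at most $|C|$ iterations), that every face of the output has length at most $21\delta$, and that the area is $\Oh(\delta|C|)$. Termination is immediate from $|Q_i|<|P_i|$, which yields $|C_{i+1}|=|C_i|-|P_i|+|Q_i|\le|C_i|-1$. For the face-length bound, each intermediate face $P_i\cup Q_i$ has length $|P_i|+|Q_i|<20\delta$, because $P_i$ is chosen as a minimal non-geodesic subpath of a cycle that failed the $10\delta$-local-geodesic test, forcing $|P_i|<10\delta$. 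The final face is the cycle $C_m$ that passed this test; since $10\delta\ge 8\delta+2$ for $\delta\ge 1$, \cref{cor:local-geo-cycle} gives $|C_m|\le 8\delta+2\le 10\delta\le 21\delta$. The area bound then follows from $e(H)=|C|+\sum_{i=1}^{m}|Q_i|\le|C|+10\delta\cdot m\le(1+10\delta)\,|C|=\Oh(\delta|C|)$. Along the way I would also confirm $2$-connectivity of $H$ by induction on $i$ (each iteration glues a new simple cycle onto the current filling along an internal path of length at least $2$), and use \cref{lem:geod-sep} to inherit $\delta$-slimness from $G_i$ to $G_{i+1}$ along the geodesic path separator $Q_i$, so that subsequent shortest-path queries remain well-defined in the smaller subgraph.

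For (b), the key ingredient is a short-range planar distance oracle that, for any query pair $(u,v)$, returns $\dist_G(u,v)$ whenever this value is at most $10\delta$ in $\Oh(\log\log n)$ time, together with a path-extraction routine that outputs a corresponding shortest path edge by edge in $\Oh(\delta+\log\log n)$ time. Such an oracle can be built in $\Oh(n\log^4 n)$ time using known near-linear planar distance oracles; the $\delta^2$ factor in the preprocessing budget absorbs any additional work required to tune the oracle to the truncation threshold $10\delta$, and immediately supplies the two standalone shortest-path query modes claimed by the lemma.

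To answer a filling query for $C$, I would maintain $C_i$ as a doubly linked list and slide a window of length $10\delta$ along it, issuing one oracle call per window step to check whether the two window endpoints realize a shortest path in $G_i$. When a window fails the test, a minimal offending $P_i$ is located inside it with $\Oh(\delta)$ further oracle calls, $Q_i$ is obtained via the path-extraction routine, $Q_i$ is spliced into the linked list in place of $P_i$, and the scan resumes $\Oh(\delta)$ positions before the splice point. The main obstacle I expect is the amortization: one must argue that each edge of the (shrinking) cycle is examined only $\Oh(1)$ times before being either removed from the boundary or certified locally geodesic in its entire $10\delta$-neighborhood. Once this is in place, the total number of oracle calls across the whole run is $\Oh(\delta|C|)$, which gives query time $\Oh(\delta|C|\log\log n)$ plus $\Oh(\delta|C|)$ time to write out the filling edges, matching the bound claimed in the lemma.
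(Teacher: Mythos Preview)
Your proposal is correct and follows the paper's approach: Thorup's planar distance oracle for preprocessing, the per-iteration length drop $|C_{i+1}|<|C_i|$ to bound the number of iterations, and the same face-length and area accounting. You are in fact more careful than the paper in two places---identifying $Q_i$ (rather than $P_i$) as the geodesic separator when invoking \cref{lem:geod-sep}, and spelling out the sliding-window amortization for locating each $P_i$, which the paper leaves implicit in its per-iteration $\Oh(\delta+\log\log n)$ claim.
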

\begin{proof}
    During the initialization, we preprocess the planar graph $G$ and create a
    shortest-path oracle from Thorup~\cite{Thorup04}. After
    $\Oh(\delta^2 n \log^4 n)$ time preprocessing, this data structure allows us to check in
    $\Oh(\delta+\log\log n)$ time if a given pair of vertices are at a distance at most
    $10\delta$ in $G$. (In case of very small $\delta$, see~\cite{DBLP:journals/talg/KowalikK06} for a faster distance oracle.)

    During a query, our data structure invokes the \emph{greedy-filling procedure}.
    Observe that in each iteration, the length of the cycle $C_i$ decreases.
    Therefore, the total number of iterations is $\Oh(|C|)$. In each
    iteration, we need to find a path $P_i$ and find a shortest path between the
    endpoints of $P_i$. This step takes $\Oh(\delta+\log\log n)$ time as it is a query to
    distance oracle. Therefore, the total runtime of the algorithm is
$\Oh(\delta |C|\log\log n)$. 

    It remains to argue about the correctness of the algorithm, i.e., that $H$
    is a $21\delta$-filling of $C$. Notice that after $i$ iterations
    the faces of $H$ (with the exception of $C$) are $P_i \cup Q_i$ or $C_\ell$ (where
    $\ell$ is the number of the last iteration). The size of $P_i \cup Q_i$ is
    at most $20\delta$. After the last iteration, $C_\ell$ is a $10\delta$-local
    geodesic of the $\delta$-slim graph $G_\ell$.
    By~\cref{cor:local-geo-cycle}, $C_\ell$ has size at most $10\delta$.
    Hence, $H$ is a $20\delta$-filling of $C$. The area of $H$ is at most
    $21\delta |C|$ because, in each of the $|C|$ iterations, we add at most
    $20\delta$ edges to the graph $H$.
\end{proof}

We will call the data structure of \cref{lem:compute_filling} the \emph{greedy
filling data structure}. 

A \emph{filling face} of the greedy filling $H$ of $C$ is a face $F$ of $H$ that is not $C$. We say that a cycle $\gamma$ in $G$ \emph{interacts} with a filling face $F$ of $H$ (where $F$ is now thought of as a region of $\sph^2$) if $F$ does not lie between $C$ and $\gamma$; that is, either $\gamma$ intersects the interior of the region $F$ or $\gamma$ separates the interior of $F$ from $C$. We prove the following variant of an isoperimetric inequality for cycles in the plane graph where a fixed greedy filling serves as the measure of `area'. See ~\cite[Chapter III.H Proposition 2.7]{BH99} for a similar claim about fillings.

\begin{lemma}[Isoperimetric Inequality for Greedy Filling]\label{lem:isoperimetry}
    Let $G$ be $2$-connected planar graph\footnote{We note that the lemma does not use the hyperbolicity of $G$ directly, only indirectly, namely in the fact that the greedy filling $H$ turned out to be a $k$-filling.} and let $H$ be
    some \emph{greedy $k$-filling} for some face $C$ of $G$ inside some subgraph\footnote{We need to consider a subgraph $G'$ for technical reasons; in a typical application one should think of $G'=G$.} $G'$ of $G$. Then any simple cycle in $G$ of
    length $\ell$ interacts with at most $(k+1)\cdot\ell$ faces of $H$. Moreover, if the cycle is in $H$, then it interacts with at most $\ell$ faces.
\end{lemma}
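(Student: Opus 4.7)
The plan consists of two parts: first prove the moreover statement for cycles lying in $H$, and then use it to deduce the general $(k+1)\ell$ bound by projecting the cycle into $H$.

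For the moreover part, I exploit the chord-cut structure of the greedy filling. Let $\gamma \subseteq H$ be a simple cycle of length $\ell$, let $D$ denote the closed region bounded by $\gamma$ on the side not containing $C$, and let $N$ be the number of filling faces in $D$; the goal is $N \le \ell$. Recall that at iteration $i<m$ the procedure adds a geodesic chord $Q_i$ across the current disk $R_i$ (bounded by $C_i$), creating the face $F_i = P_i \cup Q_i$ and leaving a smaller disk $R_{i+1}$. The key structural observation is: if $i^\star = \max\{i : F_i \subseteq D\}$, then by maximality all later faces $F_{i^\star+1},\ldots,F_m$ lie in $R_{i^\star+1}$ and outside $D$, so $R_{i^\star+1} \cap D = \emptyset$, which forces the chord $Q_{i^\star}$ separating $F_{i^\star}$ from $R_{i^\star+1}$ to lie on $\gamma$. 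This yields at least one edge of $\gamma$ that can be uniquely charged to $F_{i^\star}$. Iterating this observation while carefully preserving injectivity of the charging gives $N \le \ell$.

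For the general case, I project $\gamma$ to a closed walk $\gamma' \subseteq H$ as follows. Each maximal portion of $\gamma$ consisting of edges in the interior of a single filling face $F$ of $H$ has endpoints on $\partial F$; I replace it with an arc of $\partial F$ between these endpoints, chosen on the ``$C$ side'' of that portion so that after replacement the face $F$ is enclosed by the new curve. Since each such arc has length at most $|\partial F| \le k$, the total length satisfies $|\gamma'| \le (k+1)\ell$. By construction, every face interacting with $\gamma$---both those entirely enclosed by $\gamma$ and those traversed by $\gamma$---is enclosed by $\gamma'$. Decomposing $\gamma'$ into simple cycles in $H$ and applying the moreover part to each cycle then bounds the total number of enclosed filling faces, and hence the number of faces interacting with $\gamma$, by $|\gamma'| \le (k+1)\ell$.

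The hardest part will be the iterated charging in the moreover part. The structural observation handles only the ``latest'' face $F_{i^\star}$ in $D$; extending the injective charging to all $N$ faces requires a careful multi-step argument, because some faces in $D$ can be strictly interior to $\gamma$ (with no boundary edges on $\gamma$) and can only be charged indirectly via the chord-cut nesting---specifically, by pushing the $Q_{i^\star} \subseteq \gamma$ conclusion through successive restrictions to the subregions of $D$ created by peeling off $F_{i^\star}$. A naive Euler-formula count gives only $N \le \ell + O(V_{\text{int}})$, so the combinatorial structure of the greedy filling is essential and cannot be dispensed with.
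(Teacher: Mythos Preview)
Your overall architecture—prove the ``in $H$'' case by exploiting the chord-cut structure of the greedy filling, then reduce the general case by projecting $\gamma$ onto $H$—is the same as the paper's. The reduction step is essentially equivalent: the paper splits the interacting faces into $\Ff_\gamma$ (faces whose interior $\gamma$ crosses) and $\Ff_{\Int}$ (faces strictly enclosed), bounds $|\Ff_\gamma|\le\ell$ trivially, and then surrounds $\Ff_{\Int}$ by a collection $\Psi$ of cycles in $H$ of total length at most $k\ell$. Your projection to a closed walk $\gamma'\subseteq H$ of length at most $(k+1)\ell$ and decomposition into simple cycles accomplishes the same thing.

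The gap is in the ``moreover'' part. Your structural observation that for the \emph{maximum}-index face $F_{i^\star}$ in $D$ one has $Q_{i^\star}\subseteq\gamma$ is correct. But the iteration you propose does not close: peeling off $F_{i^\star}$ replaces $Q_{i^\star}$ by $P_{i^\star}$ on the boundary, and since $|P_{i^\star}|>|Q_{i^\star}|$ the boundary length \emph{increases}. At the next step, the new maximum-index face $F_{i^{\star\star}}$ has $Q_{i^{\star\star}}$ on the \emph{new} boundary, and $Q_{i^{\star\star}}$ can lie entirely on the freshly exposed $P_{i^\star}$ (indeed, $P_{i^\star}\subseteq C_{i^\star}$ consists of edges of $C$ and of earlier $Q_j$'s, so $Q_{i^{\star\star}}\subseteq P_{i^\star}$ is perfectly possible). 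Thus there is no edge of the original $\gamma$ to charge to $F_{i^{\star\star}}$, and injectivity of the charging fails. You flag this as ``the hardest part'', but the proposed mechanism cannot deliver $N\le\ell$.

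The paper fixes this by taking the \emph{minimum}-index face $F_j$ inside the cycle instead. Then every edge of $P_j$ has $F_j$ on one side and some $F_t$ with $t<j$ (or $C$) on the other, so $P_j\subseteq\gamma$. Replacing $P_j$ by $Q_j$ now \emph{shortens} the boundary by at least one while removing exactly one face from the interior, and a clean induction on total boundary length (over a collection of edge-disjoint simple cycles of $H$ with pairwise disjoint bounded regions) gives the bound immediately. Switching your max-index observation to its min-index dual makes your plan go through and makes it coincide with the paper's proof.
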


\begin{proof}
    Let $F_1,F_2,\ldots$ be the
    filling faces of $H$ in the order that they are added to $H$ in the
    greedy filling procedure. Note that filling face $F_i$ is a simple cycle in $G$,
    and $Q_i \subseteq F_i$ is the shortest path replacing the path $P_i$.
    Consider a plane embedding of $G$ where $C$ is fixed to be the outer face. As a result, each cycle $\psi$ of $G$ defines a bounded and an unbounded open region.
    
    Consider now a simple cycle $\gamma$ of length $\ell$ in $G$. Let $\Ff_{\gamma}$
    be the set of filling faces of $H$ whose interior is intersected by $\gamma$ or share an edge with $\gamma$.
    Let $\Ff_{\Int}$ be the set of faces disjoint from $\gamma$ that $\gamma$ separates from $C$,
    i.e., the set of filling faces contained entirely in the interior of the bounded region of $\gamma$.
    Our goal now is to show that $|\Ff_\gamma|+|\Ff_\Int|\leq (k+1)\ell$. First,
    notice that $\gamma$ must have at least one edge in the interior of each
    $F\in \Ff_{\gamma}$, thus $|\Ff_{\gamma}|\leq |\gamma| =\ell$.

    It remains to bound $|\Ff_{\Int}|$. Consider the region given by the union of the faces
    $R=\bigcup_{F\in \Ff_{\gamma}} F$. The region $R$ is not necessarily
    connected and $\bd R$ may have several components. Let $\Psi$ be the set of
    connected components (cycles) in $\bd R$ such that each $\psi\in \Psi$ is
    the boundary shared between some connected component of $\bigcup_{F\in
    \Ff_{\Int}} F$ and $R$. Notice that the cycles of $\Psi$ interact 
    with exactly the set of faces $\Ff_\Int$.
    Moreover, observe that $\sum_{\psi\in \Psi} |\psi| \leq k\ell$. Indeed, $\bigcup_{\psi\in \Psi} \psi\subset \bd R$ and $|\bd R|\leq \sum_{F\in \Ff_{\gamma}} |F| \leq k \ell$.
    
    We now prove a more general statement for any set $\Phi$ of edge-disjoint simple cycles of $H$ whose bounded regions are pairwise disjoint. 
    The set $\Psi$ defined in the previous paragraph is such a set. We will prove by induction on $L(\Phi):=\sum_{\phi\in \Phi} |\phi|$ that altogether the cycles of $\phi$ interact with at most $L(\Phi)$ filling faces of $H$.
    
    If $|\phi|=  3$ for each $\phi\in \Phi$, then each of them interacts
    with exactly one face, and altogether they interact with $L(\Phi)/3<L(\Phi)$ faces. 
    Suppose now that there is some $\phi\in \Phi$ with $|\phi|\geq 4$ and it interacts with at least two faces of $H$. Let $F_j$ be the filling face with the minimum index $j$ that interacts with $\phi$. Recall that $F_j$ is bounded by the paths $P_j$ and the shortest path $Q_j$. Since $\phi$ separates $F_j$ from $C$ (i.e., $F_j$ is inside the bounded region of $\phi$), and $j$ is the smallest index of a face with this property, we know that $F_1,\dots,F_{j-1}$ lie outside $\phi$. Since all edges of $P_j$ are edges with $F_j$ on one side and some $F_t$ with $t<j$ on the other, we have that $P_j\subset \phi$. Consider now the closed walk $\phi'$ obtained by replacing $P_j$ with $Q_j$ in $\phi$; this move essentially removes $F_j$ from the bounded region of $\phi$. (Note that since $Q_j$ is in the bounded region of $\phi$, it cannot contain vertices from the cycles $\Phi\setminus \{\phi\}$.) We can decompose the remainder of $\phi'$ into several simple cycles that are edge-disjoint and their bounded regions are pairwise disjoint from each other (as well as from all other cycles $\Phi\setminus \{\phi\}$). Let $\Phi'$ be the cycle collection where we replace $\phi$ in $\Phi$ with these new cycles. Since $|Q_j|<|P_j|$, we have $L(\Phi')<L(\Phi)$. On the other hand, the cycles of $\Phi'$ interact with the same faces as the cycles of $\Phi$ except $F_j$. Thus $\Phi$ interacts with $F_j$ and (by induction) at most $L(\Phi')\leq L(\Phi)-1$ other filling faces, so altogether with at most $L(\Phi)$ filling faces.

    This concludes the proof of the first part of the lemma statement: we have $|\Ff_{\gamma}|\leq |\gamma| =\ell$
    and by the above argument $|\Ff_\Int|\leq L(\Psi)\leq k\ell$.

    For the second part of the lemma statement, if the cycle $\gamma$ is in $H$, then $\gamma$ itself is a simple cycle of $H$ and $\Psi = \{\gamma\}$ by the definition of $F_\gamma$. Moreover, we can bound $L(\Psi)$ by $\ell$. Following the above proof, we obtain that $\gamma$ intersects with at most $\ell$ faces.
\end{proof}

\subsection{Geodesic Cycle Separators}
To analyze the structure of (geodesic) cycle separators, we start with two definitions. 
Let $G$ be a $2$-connected graph with a fixed embedding on $\sph^2$. Let $F$ be a face of $G$ in this
embedding, and let $C$ be a cycle where $C$ is not the boundary of $F$. Notice
that in the embedding $C$ cuts the plane into two parts, one containing the face
$F$ and the other not containing $F$. We say that $C$ \emph{covers $k$ vertices
opposite $F$} if the part not containing $F$ has $k$ vertices of $G$ in its
interior.

Let $G$ have a fixed embedding on the unit sphere $\sph^2$. For a parameter $\alpha \in
(0,1)$, we say that a cycle $C$ \emph{splits} the set of vertices
$V(G)\setminus V(C)$ with \emph{balance} $\alpha$ into two regions $R_1,R_2 \subseteq V(G) \setminus
V(C)$ if $|R_1|$ and $|R_2|$ are both at most $(1-\alpha) \cdot |V(G)|$. Observe that
such a cycle is also a separator of balance at least $\alpha$, since each
connected component of $G-V(C)$ is either contained in $R_1$ or $R_2$. The converse is not true: a cycle separator of balance $\alpha$ need not induce a split of balance $\alpha$, as all connected components of $G-C$ may be in $R_1$.
More generally, a subgraph $H$ of $G$ has \emph{split balance} $\alpha$ if in each region defined by the faces of $H$ there are at most $(1-\alpha)n$ vertices from $V(G)\setminus V(H)$.

\begin{figure}[t!]
    \centering
    \includegraphics%[width=0.6\textwidth]
    {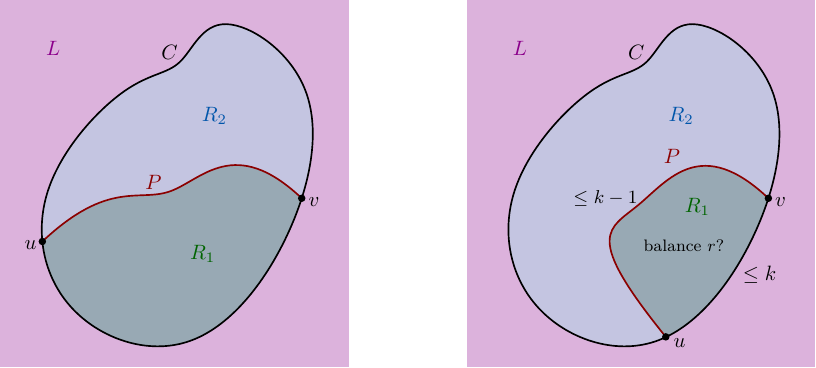}
    \caption{Left: the cycle shortening process from
    \cref{claim:cycle_to_geodesiccycle_aux}. Right: In 
\cref{claim:cycle_to_geodesiccycle_aux2}, by finding two vertices $u$ and $v$
with a shorter path $P$ outside the cycle, we can create two cycles, one with a
length $\leq 2k$. If that cycle has split balance $\geq r$, then we return it. Otherwise, we
return the second cycle and lose at most a constant factor in the split balance.}
    \label{fig:cycle-shortening}
\end{figure}

\begin{lemma}\label{lem:cycle_to_geodesiccycle}
    Let $\alpha \in (0,1)$. Let $C$ be a cycle separator of length $\ell$ that splits the $n$ vertices
    of a planar graph $G$ with balance $\alpha$. If $\alpha n \ge
    \ell 4^\ell$, then there is a \emph{geodesic} cycle that splits
    $G$ with balance at least $\alpha/2^{\Oh(\ell)}$. Given $G$ and $C$, such a geodesic
    cycle can be found in 
         $\Oh(\ell^2 n\log n)$ time.
\end{lemma}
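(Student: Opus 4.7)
The plan is to build the geodesic cycle by iteratively shortening $C$, losing at most a constant factor in the balance per step while strictly decreasing the length, so that after at most $\ell$ rounds the process terminates with a geodesic cycle of balance $\alpha/c^\ell=\alpha/2^{\Oh(\ell)}$. For a single step on a non-geodesic cycle $C^\star$ of length $\ell^\star$ and balance $\alpha^\star$, the first task is to produce what I will call a \emph{one-sided shortcut}: vertices $u,v\in V(C^\star)$ together with a shortest $uv$-path $P$ in $G$ such that $|P|<d_{C^\star}(u,v)$ and $P$ lies entirely on one side of $C^\star$. To find one, start from any shortcut $Q$ between some $u',v'\in V(C^\star)$ and decompose $Q$ along its intersections with $V(C^\star)$ into maximal subpaths $Q_1,\dots,Q_t$; since $C^\star$ is a simple plane cycle and therefore separates $V(G)\setminus V(C^\star)$ into two sides in the embedding, each $Q_i$ has its interior on one of the two sides. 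If no $Q_i$ were itself a shortcut, then $|Q|=\sum|Q_i|\ge\sum d_{C^\star}(\text{endpoints of }Q_i)\ge d_{C^\star}(u',v')$ by concatenating shortest arcs along the cycle, contradicting $|Q|<d_{C^\star}(u',v')$; so at least one $Q_i=P$ is a one-sided shortcut.

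Given such a $P$, let $A_1,A_2$ be the two $uv$-arcs of $C^\star$ with lengths $\ell_1\le\ell_2$, and form the two candidate cycles $C_1=A_1\cup P$ and $C_2=A_2\cup P$. Since $|P|\le\ell_1-1$, both have length at most $\ell^\star-1$, strictly shorter than $C^\star$. The path $P$ partitions the $P$-side region of $C^\star$ (containing $n_1^\star$ non-cycle vertices) into two subregions of sizes $a$ and $b$ with $a+b=n_1^\star-|P|+1$; assume $a\le b$, so $b\ge(n_1^\star-\ell^\star)/2$. Keep the cycle (say $C_2$) that bounds the larger subregion. Its two sides contain $b$ and at least $n_2^\star$ vertices, respectively. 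From the balance of $C^\star$ we have $n_1^\star,n_2^\star\ge\alpha^\star n-\ell^\star$, so under the precondition $\alpha^\star n\ge 4\ell^\star$ both sides have size at least $\alpha^\star n/4$, giving the new cycle balance at least $\alpha^\star/4$.

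Iterating for at most $\ell$ rounds produces a geodesic cycle of balance $\ge\alpha/4^\ell=\alpha/2^{\Oh(\ell)}$, provided the precondition $\alpha_i n\ge 4\ell_i$ holds at every step. Using $\alpha_i\ge\alpha/4^i$ and $\ell_i\le\ell-i$, this reduces to $\alpha n\ge 4^{i+1}(\ell-i)$, which is comfortably implied by the hypothesis $\alpha n\ge\ell 4^\ell$ (the right-hand side is maximized near $i=\ell-1$ at $\Theta(4^\ell)$). For the running time, each round finds a one-sided shortcut by running BFS from each of the $\Oh(\ell)$ vertices on the current cycle in $\Oh(n)$ time per BFS in the planar graph, comparing with the stored arc-distances on $C^\star$, and maintaining the cycle's arc decomposition with an auxiliary data structure supporting updates in $\Oh(\log n)$ time; this amounts to $\Oh(\ell n\log n)$ per round and $\Oh(\ell^2 n\log n)$ overall. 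The main subtlety is that $P$ could conceivably carve off only a negligible sliver of the $n_1^\star$-side, in which case the cycle bounding that sliver has essentially no balance; this is exactly why we always keep the cycle bounding the \emph{larger} subregion, and why the hypothesis $\alpha n\ge\ell 4^\ell$ is calibrated so that the additive $\ell^\star$ slack terms in the size lower bounds never dominate the $\alpha^\star n$ terms across all $\ell$ shortening rounds.
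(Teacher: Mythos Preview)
Your proof is correct and follows essentially the same approach as the paper: iteratively replace a non-geodesic arc of the current cycle by a one-sided shortest-path shortcut, show that one of the two resulting shorter cycles retains balance at least $\alpha^\star/4$ under the precondition $\alpha^\star n\ge 4\ell^\star$, and iterate at most $\ell$ times. The only notable difference is how you establish that the shortcut can be taken to lie on one side of the cycle---the paper picks the pair $(u,v)$ whose non-geodesic arc $C[u,v]$ has minimum length and argues that a shortest $uv$-path then cannot touch $C$ internally, whereas you take an arbitrary shortcut, decompose it along its intersections with the cycle, and use the triangle inequality in the cycle metric to locate a one-sided sub-shortcut; both arguments achieve the same end.
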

\begin{proof}
    We start with the following subroutine.

    \begin{claim}\label{claim:cycle_to_geodesiccycle_aux}
         Assume $\alpha n \ge 4\ell$, and let $C$ be a non-geodesic cycle of length
         $\ell$ that splits $G$ with balance $\alpha$.  In 
         $\Oh(\ell n\log n)$ time, we can find cycle $C'$ with length $\ell-1$ that splits $G$
         with balance at least $\alpha/4$.
    \end{claim}
    \begin{claimproof}
        Since $C$ is not a geodesic, there exist vertices $u$ and $v$ in $C$
        such that $C[u,v]$ is not a shortest path in $G$. Let $u,v$ be such a pair where $|C[u,v]|$ is 
        minimized. Let $P$ be a shortest path between $u$ and $v$ in $G$.
        We construct two cycles: $C_1 \coloneqq P \cup C[u,v]$ and $C_2
        \coloneqq P \cup (C - C[u,v])$. We select $C'$ to be the cycle
        among $C_1,C_2$ with higher split-balance.

        This concludes the description of the procedure
        (see~\cref{fig:cycle-shortening}). Observe that the
        running time follows, since we can compute $\ell$ shortest path trees in
        $\Oh(\ell n\log n)$ time, then compute the number of vertices that
        are split by a given separator in linear time. It remains to prove
        the correctness.

        Both $C_1$ and $C_2$ are cycles of smaller length than $C$,
        so it remains to argue about the split balance of~$C'$.
        Let $L,R \subseteq V(G) \setminus V(C)$ be the vertices split by cycle
        $C$. Note that, by the minimality, the path $P - \{u,v\}$ is
        contained in either $L$ or $R$. Without loss of generality, assume that
        it is contained in $R$.

        Let $R_1,L_1$ and $R_2,L_2$ be the vertices split by $C_1$ and $C_2$
        respectively. Observe that $R_1,R_2 \subseteq R$ and thus they both have size at most $|R|\leq (1-\alpha)n$. Therefore, it remains to show that at least one of $|L_1|$ and $|L_2|$
        is at most $(1-\alpha/4)n$. Since $L_1 \subset V(G)\setminus R_2$ and $L_2 \subset V(G)\setminus R_1$, it is sufficient to show that $\max(|R_1\cup C|, |R_2\cup C|)\leq \alpha/4$.

        Observe that $R_1$ and $R_2$ are disjoint, and $R_1 \cup R_2 \cup (P
        - \{u,v\}) = R$, and also $|R|\geq \alpha n$. Therefore, $|R_1| + |R_2| = |R| - |P| + 2$. The
        larger of $R_1$ or $R_2$ has size at least $|R|/2 - |P|/2+1 \ge \alpha n/2
        - \ell/2 +1 \ge \alpha n /4$.
    \end{claimproof}
    
    Now, we apply~\cref{claim:cycle_to_geodesiccycle_aux} exhaustively on $C$.
    The process terminates when $C$ is a geodesic cycle. Each iteration
    of~\cref{claim:cycle_to_geodesiccycle_aux} decreases the length of the
    cycle; therefore, we will perform less than $\ell$ iterations. The initial
    assumption, $\alpha n \ge \ell 4^\ell$, guarantees that $\alpha$ is
    large enough in each iteration. Therefore, the final cycle has split balance
    at least $\alpha/4^\ell$.

    The running time bound follows, as each iteration takes 
    $\ell n\log n$
    time, and the number of iterations is at most $\ell$.
\end{proof}

\begin{lemma}\label{lem:cycle_to_geodesiccycle2}
    Let $r \in (0,1)$ and let $t$ be an integer.
    Assume that there exists a cycle $C$ of length $t$ in a planar $\delta$-slim graph $G$ with $n$
    vertices that
    splits $G$ with balance at least $2rt$, where $rn \ge 2^{\Omega(\delta)}$. Then, there is a
    geodesic cycle separator with split balance $r/2^{\Oh(\delta)}$ in $G$. Given $G$ and $C$, such a geodesic
    cycle can be found in $\Oh(\delta n\log\log n)$ time if we are given an
    access to the greedy filling data structure from~\cref{lem:compute_filling}.
\end{lemma}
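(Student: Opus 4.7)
My plan is to invoke the greedy filling procedure of~\cref{lem:compute_filling} on the cycle $C$, treating the region bounded by $C$ on one chosen side (the ``inside'') as the face to be filled. This produces a sequence of shortenings $C = C_0, C_1, \dots, C_m$, where at step $i$ a subpath $P_i$ of $C_i$ (of length $\leq 10\delta$) is replaced by a strict shortcut $Q_i$, creating a filling face $F_i$ of boundary length $\leq 20\delta$; the procedure ends at a cycle $C_m$ that is $10\delta$-local geodesic and thus geodesic by~\cref{cor:local-geo-cycle}, with $|C_m| \leq 8\delta+2$. Let $L_i$ denote the vertices strictly inside $C_i$ and $F_i^\circ$ the vertices strictly inside the filling face $F_i$. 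A simple bookkeeping argument shows
\[
|L_i| = |L_{i-1}| - |F_i^\circ| - (|Q_i|-1),
\]
so that the initial interior mass $|L_0|$ distributes among $|L_m|$, the strict interiors of the filling faces, and the internal edges of the filling. Note that $|L_0| \geq 2rtn - t$ for either choice of the inside, since the balance assumption forces both sides of $C$ to contain at least $2rtn-t$ vertices.

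From this decomposition I plan to extract the geodesic cycle by a dichotomy. \emph{Case 1 ($|L_m| \geq (r/2^{\Oh(\delta)})n$):} Return $C_m$ itself. Its ``inside'' has $|L_m| \in [(r/2^{\Oh(\delta)})n, (1-2rt)n]$, and its opposite side contains at least $|R| \geq 2rtn - t$ vertices, so both sides lie between $(r/2^{\Oh(\delta)})n$ and $(1-r/2^{\Oh(\delta)})n$. \emph{Case 2 ($|L_m| < (r/2^{\Oh(\delta)})n$):} The bookkeeping identity combined with $\sum_i (|Q_i|-1) \leq \mathrm{Area}(H) \leq 21\delta t$ gives
\[
\sum_i |F_i^\circ| \;\geq\; (2rtn - t) - (r/2^{\Oh(\delta)})n - 21\delta t \;\geq\; rtn,
\]
where the last inequality uses $rn \geq 2^{\Omega(\delta)}$ with the constant chosen large enough to dominate $\Oh(\delta)$. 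Since there are at most $m \leq t$ filling faces, pigeonhole yields an index $i^\ast$ with $|F_{i^\ast}^\circ| \geq rn$. The boundary $\partial F_{i^\ast}$ is then a cycle of length $\leq 20\delta$ with one side of size $\geq rn$ and the other of size $\geq n - rn - \Oh(\delta)$, hence split balance $\Theta(r)$.

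The final step is to turn the (non-geodesic) short cycle $\partial F_{i^\ast}$ into a geodesic one by invoking the iterated-shortening argument of~\cref{lem:cycle_to_geodesiccycle}: its hypothesis $\alpha n \geq \ell \cdot 4^\ell$ with $\ell = \Oh(\delta)$ and $\alpha = \Theta(r)$ reduces to $rn \geq 2^{\Oh(\delta)}$, which is exactly our assumption (with constants chosen appropriately). The split balance degrades by at most a factor $2^{\Oh(\delta)}$, leaving $r/2^{\Oh(\delta)}$ as required. For the running time, the greedy-filling query on $C$ costs $\Oh(\delta t \log\log n) \leq \Oh(\delta n \log\log n)$ by~\cref{lem:compute_filling}; locating $F_{i^\ast}$ and computing $|L_m|$ and every $|F_i^\circ|$ takes a single $\Oh(n)$-time traversal of the filling; and the shortening of $\partial F_{i^\ast}$ can be re-implemented on top of the greedy-filling data structure, using $\Oh(\delta^2)$ distance-oracle queries (each $\Oh(\delta + \log\log n)$) to find each shortcut plus one $\Oh(n)$ BFS per iteration to recount the two sides, over $\Oh(\delta)$ iterations, totalling $\Oh(\delta n)$. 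The overall cost is then $\Oh(\delta n \log\log n)$.

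The principal obstacles I foresee are two. First, making the pigeonhole argument tight: the $21\delta t$ loss from internal filling edges must be absorbed into the initial interior mass $|L_0| \approx 2rtn$, and this is exactly where the assumption $rn \geq 2^{\Omega(\delta)}$ is essential. Second,~\cref{lem:compute_filling} is formulated for the case when $C$ is a face of a $2$-connected planar graph, whereas here $C$ is only a cycle in $G$; the adaptation is to restrict attention to the subgraph induced by one side of $C$ together with $V(C)$, temporarily augmenting it (if necessary) to restore $2$-connectivity and make $C$ bound the outer face, while keeping the underlying distance-oracle queries in the original $G$ so that the hyperbolicity bounds still apply.
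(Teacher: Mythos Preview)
Your approach is closely related to the paper's—both iteratively shorten $C$ via $\Oh(\delta)$-length shortcuts and inspect the small cycles peeled off—but your phrasing through the greedy filling procedure introduces a real gap that the paper's argument avoids.

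The issue is precisely the one you flag as your second obstacle, and your proposed fix does not resolve it. If you restrict to $G' = G[L \cup V(C)]$ so that $C$ becomes a face but answer distance queries in $G$, then the shortcut $Q_i$ (a shortest $u$--$v$ path in $G$) may lie in the \emph{other} side $R$. In that case the filling face $F_i = P_i \cup Q_i$ sits outside your chosen region, and the bookkeeping identity $|L_i| = |L_{i-1}| - |F_i^\circ| - (|Q_i|-1)$ fails: the new ``inside'' actually \emph{grows}. This breaks your Case~1 bound $|L_m| \leq (1-2rt)n$ and with it the pigeonhole in Case~2. Conversely, if you use distances in $G'$ so that shortcuts stay inside, then $G'$ need not be $\delta$-slim (the cycle $C$ is not geodesic, so \cref{lem:geod-sep} does not apply), and you lose the guarantee from \cref{cor:local-geo-cycle} that the terminal cycle $C_m$ is short or even geodesic in $G$.

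The paper sidesteps this by tracking the \emph{split balance} of $C_i$ rather than a one-sided mass. At each step it forms both cycles $C_1 = P_i \cup Q_i$ (length $\leq 20\delta$) and $C_2 = (C_i \setminus P_i) \cup Q_i$, and shows directly that either $C_1$ already has split balance $\geq r$, or $C_2$ has split balance at least $\alpha_i - 2r$---regardless of which side $Q_i$ lies on. Since the cycle length drops by at least one each step, after at most $t-1$ iterations either some short $C_1$ had balance $\geq r$ (then apply \cref{lem:cycle_to_geodesiccycle}), or the terminal $10\delta$-local-geodesic cycle in $G$ still has balance $> 2rt - 2r(t-1) > r$. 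Your mass-distribution and pigeonhole argument is morally the same computation, but making it rigorous requires exactly this two-sided balance formulation; once you adopt it, your proof becomes the paper's.
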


\begin{proof}
    We start with the following claim.

    \begin{claim}\label{claim:cycle_to_geodesiccycle_aux2}
        Let $\alpha,r \in (0,1)$ with $r < \alpha$ and $k$ be an integer such that $rn \ge k$. Let
        $C$ be a cycle with split balance $\alpha$ that is not a $k$-local
        geodesic. Then, one of the following exists:
        \begin{itemize}
            \item A cycle with split balance at least $r$ and length at most $2k$, or
            \item A cycle with split balance at least $(\alpha - 2r)$ and shorter length than $C$.
        \end{itemize}
    \end{claim}
    \begin{claimproof}
        Because $C$ is not a $k$-local geodesic, there exist two vertices $u,v
        \in C$ with $\dist_C(u,v) \le k$ and $\dist_G(u,v) < \dist_C(u,v)$. Let
        $u,v$ be a closest (w.r.t. $\dist_C$) pair of such vertices, and let $P$ be a
        shortest path between $u$ and $v$ in $G$
        (see~\cref{fig:cycle-shortening}).

        Consider the cycle $C_1 \coloneqq P \cup C[u,v]$. Note that by the
        minimality of $|C[u,v]|$, this is a simple cycle of length at most
        $2k$. If $C_1$ splits $G$ with balance $r$, then we can just return $C_1$ and
        satisfy the first point of the statement. Otherwise, consider a cycle
        $C_2 \coloneqq P \cup (C - C[u,v])$. We will show that $C_2$
        splits $G$ with a balance of at least $(\alpha - 2r)$, which will
        conclude the proof, as $C_2$ is shorter than $C$.

        Let $L,R \subseteq V(G) \setminus V(C)$ be the set of vertices split by
        $C$. Observe that $P- \{u,v\}$
        is contained in either $R$ or $L$. Without loss of
        generality, we can choose the regions such that $P - \{u,v\}
        \subseteq R$. 

        Let $R_1,L_1$ and $R_2,L_2$ be the vertices split by $C_1$ and $C_2$
        respectively. 
        Observe that $R_1,R_2\subset R$.  
        We also have $|R|\leq(1-\alpha)n$ and thus both $R_1$ and $R_2$ have size less than $(1-\alpha)n<(1-r)n$. Since $C_1$ has balance less than $r$, we must have $|L_1|>(1-r)n$. It follows that $|R_1|<rn$. It remains to prove that $L_2$ is not too big: indeed, $|L_2|<|L|+|R_1|+ C[u,v]| \leq (1-\alpha)n + rn +k <(1-\alpha+2r)n$. Hence, $C_2$ splits $G$ with balance at least $(\alpha-2r)$.
    \end{claimproof}

    We exhaustively invoke~\cref{claim:cycle_to_geodesiccycle_aux2}. Initially,
    we set $C$ to be the original cycle, $\alpha$ to be the split balance of $C$,
    $k = 10\delta$ (as in~\cref{cor:local-geo-cycle}), and $r$ to be the
    original $r$.

    If, at some point, \cref{claim:cycle_to_geodesiccycle_aux2} returns a cycle
    of length at most $2k$ and split balance $r$, then we terminate and use~\cref{lem:cycle_to_geodesiccycle}. This returns a
    geodesic cycle with a split balance of $r/2^{\Oh(\delta)}$. Otherwise,
    \cref{claim:cycle_to_geodesiccycle_aux2} progressively returns cycles of
    smaller lengths, until it terminates with a $10\delta$-local geodesic
    cycle $C'$, which is by \cref{cor:local-geo-cycle} a geodesic cycle. In each iteration the split balance changes by at most $2r$, and the number of iterations is at most $t-1$. Hence, the final geodesic cycle $C'$
    has a split balance more than~$r$.

    In order to achieve our running time, observe that \cref{claim:cycle_to_geodesiccycle_aux2} is essentially a greedy filling step on the cycle $C$, so we can use the same data structure and algorithm as in \cref{lem:compute_filling}, and use that we may assume that $t<n$.    
\end{proof}

\section{Separator for Planar Hyperbolic Graphs} \label{sec:sep}

In this section we provide a proof for~\cref{thm:separator}. First, we note that if $G$ has no cycles, then it has a $1/2$-balanced separator consisting of a single vertex, which is a shortest path of length $0$. Hence, we may assume that $G$ has a cycle. We will also assume that $n$ is large enough. If $n$ is below some constant threshold, then since $G$ has a cycle, by \cref{lem:cycle_to_geodesiccycle} it has a geodesic cycle of length at least $3$, which can be used as a geodesic cycle separator of balance $\frac{1}{2^{\Oh(\delta)}\log n}$ by increasing the hidden constant.
The same argument shows that without loss of generality, we may assume that $\delta$ is small, i.e., $\delta=\Oh(\log n)$, as otherwise the balance requirement for cycles is satisfied by any cycle.

We start our arguments by
proving the separator in the case when the input graph is $2$-connected. Namely, we show the following.

\begin{lemma}\label{lem:separator-2-connected}
    Let $G$ be a $2$-connected planar $\delta$-hyperbolic graph on $n$ vertices.
    Then $G$ has a geodesic path separator $X$ of size $|X|=\Oh(\delta^2 \log
    n)$ and balance $1/7$ or $G$ has a geodesic cycle separator $Y$ of
    size $|Y| = \Oh(\delta)$ and balance $2^{-\Oh(\delta)}/\log n$. Given $G$,
    such a separator $X$ or $Y$ can be computed in $\Oh(\delta^2 n\log^4 n)$ time.
\end{lemma}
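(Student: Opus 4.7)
The plan is to follow the overview in \cref{sec:overview}: fix a planar embedding of $G$ on $\sph^2$ and run an iterative filling procedure that refines a candidate partition of $G$ until one of three outcomes is reached. Initially, I would pick a face of $G$ (say the longest one) and apply the greedy filling data structure of \cref{lem:compute_filling} to its bounding cycle, obtaining an $\Oh(\delta)$-filling $H$. Each filling face $F$ of $H$ bounds a region $F^\ast$ of $\sph^2$; if some $F^\ast$ contains more than $(1-\tfrac{1}{7})n$ vertices of $G$, I restrict attention to the subgraph contained in $F^\ast$, now viewing the previously explored region as a single weighted face of weight equal to the number of vertices outside $F^\ast$, and recursively apply the greedy filling to the longest face inside. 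The procedure stops when one of the following three outcomes occurs, and the rest of the proof shows how each outcome yields a separator of the advertised kind.

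In \textbf{Outcome 1}, some cycle appearing during the filling (either an intermediate $C_i$ or a filling face boundary) already has constant balance; note such a cycle has length $\Oh(\delta)$ by construction. I would feed this cycle into \cref{lem:cycle_to_geodesiccycle2} (with $r = 2^{-\Oh(\delta)}$ and $t=\Oh(\delta)$) to convert it into a geodesic cycle separator of length $\Oh(\delta)$ and balance $2^{-\Oh(\delta)}$, which is even stronger than the required balance. In \textbf{Outcome 2}, the current working graph $G'$ has all faces of length $\Oh(\delta)$. Here I would first invoke \cref{cor:tw-hyper-sep} to obtain an arbitrary $\tfrac12$-balanced separator $S$ of size $\Oh(\delta\log n)$, and then, via an auxiliary planar graph argument, ``pump'' $S$ into a subgraph $\bar S$ whose \emph{split balance} (not just separator balance) is constant; among the boundary cycles of the face union realising this balance, I pick the one $\gamma_i$ with the best ratio of balance to length and apply \cref{lem:cycle_to_geodesiccycle2} again, yielding the geodesic cycle separator.

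The most delicate case is \textbf{Outcome 3}, where every filling face of the final $H$ contains only $\Oh(1)\cdot n/k$ vertices of $G$ for some large $k$, yet no individual face cycle is a balanced separator. Here I would argue there are only $\Oh(\log n)$ BFS layers of filling faces around the outer cycle: letting $L_i$ be the set of filling faces at distance $i$ from the outer face, the outer face cycle $C$ has length $\Oh(\delta)$ times the number of faces it touches, and the isoperimetric inequality for greedy fillings (\cref{lem:isoperimetry}) implies that a constant fraction of all filling faces sit in $L_0$; iterating this shows there are $\Oh(\log n)$ layers and hence the deepest layer contains a face at distance $\Omega(\log n)$ from $C$. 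Taking a good balanced cut edge in the dual tree of $H$, its primal endpoints $p,q$ lie deep inside; connecting each to its nearest vertex on $C$ produces a path that is a balanced separator but is not geodesic. To upgrade it to a geodesic separator of length $\Oh(\delta\log n) + \Oh(\delta^2\log n)$, I would replace the non-geodesic part by a shortest path $P$ between the two endpoints $a,b$ on $C$ and use \cref{lem:isoperimetry} on the closed walk formed by the original and new path: since the walk has length $\Oh(\delta\log n)$ and each filling face has few vertices, the symmetric difference of the two sides of the separator is small, so balance is preserved up to the constant~$1/7$. The resulting geodesic $ab$-path has length $\Oh(\delta^2\log n)$ because each edge can be charged to one filling face along a geodesic from $C$ to the deep dual edge, and has balance $\geq 1/7$ as required.

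The hard part is justifying the balance bound in Outcome~3: the shortest path $P$ from $a$ to $b$ is not constrained to stay inside any specific layer of the filling, so controlling which filling faces it can cross is exactly what requires the tailored isoperimetric inequality \cref{lem:isoperimetry} rather than the classical bound; together with the vertex-count bound per face that characterises Outcome~3, this gives the additive slack needed to move from the non-geodesic balanced separator to its geodesic replacement without losing more than a constant factor in balance. The running time in each outcome is dominated by the initialisation of the greedy filling data structure (which is $\Oh(\delta^2 n\log^4 n)$) plus near-linear post-processing, so the overall bound of \cref{lem:separator-2-connected} follows.
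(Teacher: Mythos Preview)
Your plan follows the paper's architecture (iterated greedy filling, three outcomes, and the same tools in each), but you have set the key threshold wrong, and this breaks Outcome~3. You recurse whenever a filling face holds more than $(1-\tfrac17)n$ vertices, so in Outcome~3 each filling face can still contain up to $n/7$ vertices. In your isoperimetric step you form the closed walk $W=P\cup\bar P$ and bound the balance shift between the non-geodesic separator $\bar P$ and the geodesic $P$ by $(\text{interacting faces})\times(\text{vertices per face})$. By \cref{lem:isoperimetry} the walk touches $\Theta(\delta|W|)$ filling faces, and since $|W|=\Theta(\delta^2\log n)$ (see below), the shift is $\Theta(\delta^3\log n)\cdot n/7\gg n$: the balance of the shortest $ab$-path is not controlled at all. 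The paper repairs this by taking the threshold to be $r=\Theta\!\bigl(\tfrac{1}{\delta^3\log n}\bigr)$; then each filling face carries at most $rn$ vertices, the shift is $rn\cdot O(\delta^3\log n)=O(n/c)<n/20$, and the $1/7$-balance survives. The cost is that Outcome~1 now only guarantees a filling-face cycle of split balance $\geq r$ (not a constant), and after \cref{lem:cycle_to_geodesiccycle} you get a geodesic cycle of balance $r/2^{O(\delta)}=2^{-O(\delta)}/\log n$. This is exactly where the $1/\log n$ loss in the cycle case of \cref{lem:separator-2-connected} comes from; your claim of ``constant balance'' in Outcome~1 is incompatible with making Outcome~3 work.

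A related miscount: the number of layers is $O(\delta\log n)$, not $O(\log n)$. The isoperimetric inequality gives $|L_{\geq i}|\leq k|L_i|$ with $k=\Theta(\delta)$, so each layer peels off only a $1/\Theta(\delta)$ fraction of the remaining faces, not a constant fraction. Hence the BFS depth in the triangulated graph $H_\Delta$ is $O(\delta\log n)$, and lifting the path $\bar P_0$ from $H_\Delta$ back to $H$ costs another factor $O(\delta)$ (each step through a face centre becomes a walk of length $O(\delta)$ around that face), giving $|\bar P|=O(\delta^2\log n)$. This, and not any charging argument, is why the geodesic $ab$-path has length $O(\delta^2\log n)$; it also forces $|W|=O(\delta^2\log n)$ in the isoperimetric step, which is what drives the required threshold down to $1/(\delta^3\log n)$.
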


Set $r=\frac{1}{c\delta^3 \log n}$, where $c$ is a large enough constant that will be specified later. 

Next, for the fixed embedding of $G$ in the
unit sphere $\sph^2$ (where vertices are identified with points on the sphere) we will be able to associate integer weights with the faces
of some subgraph $G'$ of $G$. In our setting the weight of a (closed) face region $F$ of $G'$ is
the number of vertices in the original graph $G$ that lie in the
interior of $F$. Note that when referring to the vertex count or
weight of a particular region, we include the sum of face weights, i.e., we are
thinking of the number of vertices of the original graph $G$ that lie in the region. This influences our definition of split balance, that is, we are always considering split balance of cycles with respect to the original graph $G$.

Our algorithm will begin by setting up a data structure for greedy fillings in
$\Oh(\delta^2 n \log^4 n)$ time as described in~\cref{lem:compute_filling}. We note that setting up this data structure is the dominant term in the running time of the separator computation.
We then proceed with the following iterated greedy filling procedure.
Recall that a filling face of a greedy filling $H$ for a cycle $C$ is any face of $H$ that is not the face bounded by~$C$.

\subparagraph{Iterated greedy filling:}

Let $G_1=G$, and assign a weight of $0$ to each face. We repeat the following
steps, starting at $i=1$: Let $F_i$ be a face cycle of maximum size in $G_i$. If $|F_i|\leq 20\delta$ then terminate; otherwise let $H_i$ be a greedy
$10\delta$-filling of $F_i$ in $G_i$. Since $|F_i|>20\delta$, we know that $F_i$ is not a $10\delta$-local geodesic cycle and thus $H_i\neq F_i$. If the boundary cycle of some filling face of $H_i$ has balance at least $r$ in $G$, then terminate.  If
each filling face of $H_i$ has fewer than $rn$ vertices of
$G$ in its interior, then terminate. Otherwise, there is a filling face $F$ in
$H_i$ such that $F$ has more than $(1-r)n$ vertices in its interior.
Then let $G_{i+1}$ be the subgraph of $G_i$ induced by the vertices of $G_i$ inside and on the boundary of $F$, i.e., the closure of $\sph^2\setminus F$ now becomes a face in $G_{i+1}$ assigned weight
$|V(G)\setminus F|$, which is by definition less than $rn$. Notice that $|F_i|>20\delta$ and the procedure assigns positive weights only to filling faces that have length at most $20\delta$. Consequently, the fillings are made only for faces $F_i$ of weight $0$, and in particular, only for faces of the original graph $G$.

We note that the above definition of $G_i$ preserves $2$-connectivity, thus all $G_i$ are $2$-connected.
Observe that the total number of vertices in $G_i$ plus the total face weight of
$G_i$ remains $n$ throughout the procedure. Since the faces $F_i$ on which we do
greedy fillings are distinct faces of $G$, their total length is
$\Oh(n)$, thus by \cref{lem:compute_filling} the entire procedure (including the
initialization of the data structure) can be run in $\Oh(\delta n\log n)$ time.

The possible outcomes of the above procedure are the following:

\begin{itemize}
    \item[] \outcome{1}: The boundary cycle of some filing face $F$ has split balance at least $r$.
    \item[] \outcome{2}:  $|F_i|\leq 20\delta$. This means that the longest face
        cycle of $G_i$ has length $\Oh(\delta)$. Consequently, all faces of $G_i$
        have length $\Oh(\delta)$, and an integer weight between $0$ and $rn$.
    \item[] \outcome{3}: Every filling face of $H_i$ has fewer than $rn$ vertices in its interior. 
\end{itemize}

It remains to find a suitable separator in each of these possible outcomes.
Notice that in case of \outcome{1} we can directly invoke
\Cref{lem:cycle_to_geodesiccycle}. 

Now, we analyse the other outcomes and show
that in each of them a suitable separator can be found. This will conclude the
proof of~\cref{lem:separator-2-connected}.

\subsection{\outcome{2}}\label{sec:outcome-2}

\begin{lemma}[\outcome{2}]\label{lem:out-1}
    Let $G'$ be the face-weighted graph generated by the iterated greedy filling of~$G$. 
    If each face of $G'$ has length $\Oh(\delta)$, then there is a geodesic
    cycle separator of balance at least $\frac{1}{2^{\Oh(\delta)}\log n}$ in $G$,
    and it can be computed in $\Oh(\delta^2 n \log^3 n)$ time
    (given access to the data structure from~\cref{lem:compute_filling}).
\end{lemma}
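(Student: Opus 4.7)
The plan is to translate a treewidth-based separator of $G$ into a cycle of $G'$ with a favorable balance-to-length ratio, and then invoke \cref{lem:cycle_to_geodesiccycle2} to shorten it into a geodesic cycle separator. First I would apply \cref{cor:tw-hyper-sep} to $G$ to obtain a vertex separator $S$ of size $\Oh(\delta\log n)$ and balance $\tfrac12$; this step dictates the overall running-time budget of $\Oh(\delta^2 n\log^3 n)$. Next I would lift $S$ to a subgraph $\bar S$ of $G'$: for $v\in S\cap V(G')$ keep $v$, and for every $v\in S\setminus V(G')$ (lying in the interior of some filling face $F$ of $G'$) add the cycle $\partial F$ to $\bar S$. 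Since by the Outcome~2 hypothesis every face of $G'$ has length $\Oh(\delta)$, this yields $|\bar S|=\Oh(\delta^2\log n)$. With the help of an auxiliary weighted dual graph, I would argue that $\bar S$ is a \emph{split}-balanced separator of the weighted $G'$: every face of $G'[\bar S]$ carries at most a $(1-\Omega(1))$-fraction of the total weight $n$, the key point being that each filling face carries weight at most $rn$, which is negligible compared to the balance threshold.

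Having $\bar S$, I would then traverse a spanning tree of the planar dual of $G'[\bar S]$ and greedily grow a union $U$ of faces whose total weight lies in $[\tfrac n3,\tfrac{2n}3]$. Its boundary $\partial U\subseteq E(\bar S)$ decomposes into pairwise disjoint cycles $\gamma_1,\dots,\gamma_k$ with $\sum_i|\gamma_i|\le|\bar S|=\Oh(\delta^2\log n)$. Among these I would pick the cycle $\gamma^*$ maximizing the ratio $\alpha_i/|\gamma_i|$, where $\alpha_i$ denotes the split balance of $\gamma_i$ viewed as a cycle separator in the original graph $G$. A pigeonhole-style averaging that tracks the nesting of the $\gamma_i$ in $\sph^2$ and attributes the balance of $\partial U$ across them is expected to yield
\[
  \frac{\alpha^*}{|\gamma^*|}\;=\;\Omega\!\left(\frac{1}{|\bar S|}\right)\;=\;\Omega\!\left(\frac{1}{\delta^2\log n}\right).
\]

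Finally I would feed $\gamma^*$ into \cref{lem:cycle_to_geodesiccycle2} with parameter $r\coloneqq\alpha^*/(2|\gamma^*|)=\Omega(1/(\delta^2\log n))$; the hypothesis $rn\ge 2^{\Omega(\delta)}$ is met for $n$ large enough (we may assume $\delta=\Oh(\log n)$, else the target balance is trivial). The lemma returns a geodesic cycle of length $\Oh(\delta)$ with split balance $r/2^{\Oh(\delta)}$, and since the polynomial factor $\delta^2$ is absorbed into $2^{\Oh(\delta)}$ this meets the target $\Omega(1/(2^{\Oh(\delta)}\log n))$. Running-time-wise, Steps~2--4 are near-linear using the greedy filling data structure from \cref{lem:compute_filling}, and Step~5 costs $\Oh(\delta n\log\log n)$, so the $\Oh(\delta^2 n\log^3 n)$ cost of Step~1 dominates. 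The hard part will be the averaging step that extracts a single cycle $\gamma^*$ of good balance-to-length ratio: individual $\gamma_i$ may be very unbalanced even while $\partial U$ is globally balanced, so one must exploit the nested (tree-like) structure of the cycle decomposition of $\partial U$ in $\sph^2$ in order to charge the aggregate balance onto a single short component cycle.
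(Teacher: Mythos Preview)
Your overall architecture matches the paper's: lift a treewidth separator to $G'$, obtain a region $U$ whose boundary is short, pick a component cycle of $\partial U$ with good balance-to-length ratio, and apply \cref{lem:cycle_to_geodesiccycle2}. However, you have underestimated where the real difficulty lies.

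The genuine gap is your Step~3. After lifting, $\bar S$ is a $\tfrac12$-\emph{balanced} separator (each connected component of $G'-\bar S$ has weight at most $n/2$), but this does \emph{not} make it $\tfrac12$-\emph{split-balanced}: a single face of $G'[\bar S]$ may contain many connected components of $G'-\bar S$, whose weights add up to far more than $n/2$. Your justification, ``each filling face carries weight at most $rn$'', conflates faces of $G'$ with faces of $G'[\bar S]$; the latter are much larger regions. Without split balance you cannot greedily accumulate faces into a region $U$ of weight in $[\tfrac n3,\tfrac{2n}3]$, because a single face could already exceed $\tfrac{2n}{3}$. This is precisely the step on which the paper spends the bulk of its effort: it builds an auxiliary bipartite graph $H$ whose two sides are the components of $G'-\bar S$ and the faces of $G'$ touching at least two such components, then augments $\bar S$ further so that $H$ becomes a \emph{forest}, and finally uses a centroid of that forest to achieve split balance (at the cost of inflating $|\bar S|$ to $\Oh(\delta^3\log n)$). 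Your ``auxiliary weighted dual graph'' gestures in this direction but supplies none of the mechanism.

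Conversely, the step you flag as hard---the averaging to extract $\gamma^*$---is in fact routine once you track the right quantity. Instead of the split balances $\alpha_i$, use the one-sided weights $w_i$ (the weight of the hole bounded by $\gamma_i$); then $\sum_i w_i\in[\tfrac n4,\tfrac{3n}{4}]$ and $\sum_i|\gamma_i|\le|\bar S|$, so the index $j$ maximizing $w_j/|\gamma_j|$ satisfies $w_j/|\gamma_j|\ge(\sum_i w_i)/(\sum_i|\gamma_i|)=\Omega(n/(\delta^3\log n))$. Since the other side of each $\gamma_i$ contains $U$ and hence has weight at least $n/4$, the split balance of $\gamma_j$ is $\min(w_j/n,\tfrac14)=\Omega(|\gamma_j|/(\delta^3\log n))$, which is exactly what \cref{lem:cycle_to_geodesiccycle2} needs. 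No nesting or charging argument is required.
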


\begin{proof}
Observe that by construction $G'$ has total weight $n$ (i.e., the number of vertices plus the weight of all faces is equal to the number of vertices of $G$).
Moreover, each weight is an integer between $0$ and $rn$.

\begin{claim}\label{cl:initial_sep}
There is a $\frac 12$-balanced separator $S$ in $G'$ of size $\Oh(\delta^2\log n)$ such that $G'[S]$ has $\Oh(\delta\log n)$ connected components.
Moreover, such a separator can be computed in $\Oh(\delta^2 n \log^3 n)$ time.
\end{claim}

\begin{claimproof}
By Corollary~\ref{cor:tw-hyper-sep}, since $G$ is a planar $\delta$-hyperbolic graph, a $\frac 12$-balanced separator $S_G$ of $G$ of size $\Oh(\delta\log n)$ can be computed in $\Oh(\delta^2 n \log^3 n)$ time.
We will now turn the separator $S_G$ of $G$ into a requested separator $S$ of $G'$.
Initially, we let $S=S_G\cap V(G')$.
Then, for each vertex $v\in S_G\setminus V(G')$, we pick the face of $G'$ that covers $v$ and add all vertices of its boundary cycle to $S$.
Clearly, $S$ is a separator of $G'$ with balance (defined with respect to $G$) at least $\frac 12$.
The size of $S$ is $\Oh(\delta^2\log
n)$, because each vertex of $S_G\setminus V(G')$ has been replaced by at most $\Oh(\delta)$ vertices of a face cycle of $G'$.
Moreover, $G'[S]$ has at most $|S_G|=\Oh(\delta\log n)$ connected components.
The procedure that turns $S_G$ into $S$ can be clearly performed in $\Oh(n)$ time.
\end{claimproof}

\begin{figure}[t]
    \centering
    \includegraphics{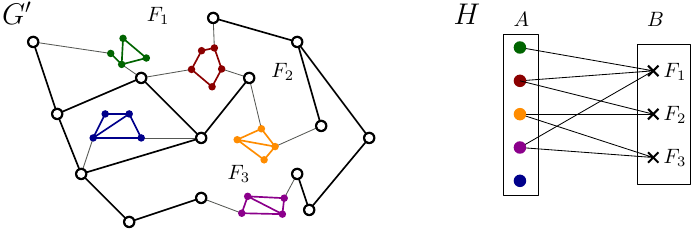}
    \caption{The graph $G'$ with separator vertices $S$ marked with black circles. The components of $G'-S$ and faces of $G'$ incident to at least two such components form the two parts of the auxiliary graph $H$.}
    \label{fig:auxgraph}
\end{figure}

For any separator $S$ in $G'$, we define an auxiliary bipartite graph $H=H(S)$ as follows.
The vertices of $H$ are partitioned into two sets $A$ and $B$.
The vertices $a\in A$ correspond to the connected components $X_a$ of $G'-S$, and the vertices $b\in B$ correspond to the faces $F_b$ of $G'$ that are incident to at least two distinct components of $G'-S$.
There is an edge $ab$ in $H$ whenever the corresponding component $X_a$ and the corresponding face $F_b$ share a vertex.
See \cref{fig:auxgraph} for an illustration.

\begin{claim}\label{cl:aux_forest}
Let $S$ be a separator of $G'$ such that $G'[S]$ has $k$ connected components.
Then $G'$ has a separator $S'$ such that $S\subseteq S'$, $|S'|\leq|S|+\Oh(k\delta)$, and the graph $H(S')$ is a forest.
Moreover, such a separator $S'$ can be constructed from $S$ in $\Oh(n)$ time.
\end{claim}

\begin{figure}
    \centering
    \includegraphics{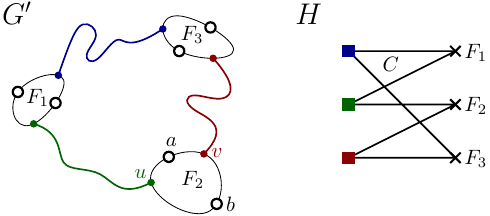}
    \caption{Illustration of the proof of \cref{cl:aux_forest}. By the choice of $S'$ the vertices $a$ and $b$ must be in the same component of $G'[S']$, but this is not possible as the union of the red, green, and blue components of $G'-S'$ separate $a$ and $b$.}
    \label{fig:aux_cycle}
\end{figure}

\begin{claimproof}
We construct $S'$ by the following greedy procedure.
Initially, let $S'=S$.
While $G'[S']$ has at least two connected components $X$ and $Y$ such that some face $F$ of $G'$ contains a vertex from both $X$ and $Y$, we add the face cycle of $F$ to $S'$.
Since the number of connected components of $G'[S']$ decreases with each face cycle added to $S'$, we add fewer than $k$ face cycles in total, each of length at most $\Oh(\delta)$.
Thus $|S'|\leq|S|+\Oh(k\delta)$, and $G'[S']$ has at most $k$ connected components.
Clearly, this greedy procedure can be performed in $\Oh(n)$ time.

We claim that the graph $H(S')$ is a forest.
Suppose to the contrary that there is a cycle $C$ in $H(S')$.
Let $A$, $B$, $X_a$, and $F_b$ be as in the definition of $H(S')$.
Consider a vertex $b\in V(C)\cap B$.
Since the corresponding face $F_b$ is incident to two distinct components of $G'-S'$ (say, one containing a vertex $u$ and another containing a vertex $v$), at least two vertices $a$ and $b$ in $S'$ must separate these
two components along the face cycle of $F_b$, i.e., without loss of generality the
vertices $a,u,b,v$ appear in this order along the face cycle of $F_b$; see Figure~\ref{fig:aux_cycle}. Notice that
$a$ and $b$ must be in the same component of $S$ as otherwise the boundary cycle
of $F$ would have been added to $S'$. On the other hand, notice that the union
of the components corresponding to the $A$-side vertices of $C$ form a separator
of $G'$ where $a$ and $b$ are in different components; in any fixed embedding, the
separation is demonstrated by a closed curve made up of paths inside these
components connecting the two neighboring faces along $C$ and curves inside the
faces corresponding to the $B$-vertices of $C$. Since the separator separates
$a$ and $b$, and it is disjoint from $S'$, it follows that $a$ and $b$ cannot
be in the same component of $G'[S']$, which is a contradiction.
\end{claimproof}

Combining \cref{cl:initial_sep} and \cref{cl:aux_forest}, we can find, in $\Oh(\delta^2 n \log^3 n)$ time, a $\frac 12$-balanced separator $S$ in $G'$ such that $|S|=\Oh(\delta^2\log n)$ the graph $H(S)$ is a forest.
Let $H=H(S)$, and let $A$, $B$, $X_a$, and $F_b$ be as in the definition of $H(S)$.

We assign weights to the vertices of $H$ as follows.
The weight of a vertex $b\in B$ is simply the weight of the corresponding face $F_b$ of $G'$.
The weight of a vertex $a\in A$ is defined as the number of vertices in the component $X_a$ corresponding to $a$ plus the total weight of the faces of $G'$ that are incident only to $X_a$ among the components of $G'-S_1$.
Such faces are said to \emph{belong to} $X_a$.
For a component (tree) $T$ of $H$, a face $F$ of $G'$ is said to \emph{belong to} $T$ if it is represented by some $b\in V(T)\cap B$ or if it belongs to the component $X_a$ for some $a\in V(T)\cap A$.
Observe that the faces of $G'[S]$ are exactly the unions of the faces that belong to the individual components of $H$, and the weights of the faces of $G'[S]$ correspond to the weights of the components of $H$.

\begin{claim}
In $\Oh(n)$ time, the separator $S$ can be turned into a vertex set $\bar S$ of
size $\Oh(\delta^3\log n)$ such that $G'[\bar S]$ has split balance at least
$\frac 13$ (given access to the data structure from~\cref{lem:compute_filling}).
\end{claim}

\begin{claimproof}
Consider a component $T$ of $H$.
We claim that the union of the faces belonging to $T$ is a connected region $R_T$ whose boundary vertices are in $S$.
Since $T$ is a component of $H$, no face of $G'$ can contain simultaneously a vertex in $X_a$ and a vertex in $X_{a'}$ for some $a\in V(T)\cap A$ and $a'\in V(H-T)\cap A$, because such a face would give rise to a vertex in $B$ connected to both $a$ and $a'$ in $H$.
Thus, if $v\in X_a$ for some $a\in V(T)\cap A$, then all faces incident to $v$ belong to $R_T$, so $v\notin\bd R_T$.
Similarly, if $v\in X_{a'}$ for some $a'\in V(H-T)\cap A$, then no face incident to $v$ belongs to $R_T$, so $v\notin\bd R_T$.
This shows that indeed $\bd R_T\subseteq S$.

We conclude that a face of $G'[S]$ is either a face of $G'$ and thus has weight at most $rn<\frac n2$, or it belongs to a component of $H$.
One can check that constructing $H$ and finding its maximum weight component can be implemented in $\Oh(n)$ time.
Consequently, when all components $T$ of $H$ have weight at most $\frac n2$, then $\bar S=S$ is a suitable separator.

Suppose that this is not the case, and we have a component $T$ of $H$ that has weight greater than $\frac n2$.
In the vertex-weighted tree $T$ (with non-negative weights) of total weight $w>\frac n2$, we can greedily find a vertex $v$ whose removal cuts $T$ into parts of weight at most $\frac w2\leq\frac n2$.
(This can be done by rooting $T$ somewhere and computing all subtree weights in $\Oh(n)$ time.)
We distinguish two cases based on the type of $v$.

\begin{description}
\item[Case 1.] $v\in B$.\\
Then there is a face $F_v$ corresponding to $v$.
By adding the vertices of the cycle of $F_v$ to $S$, we get a new separator $\bar S$ that is larger than $S$ by at most $\Oh(\delta)$, and its split balance is at least $\frac12$.

\item[Case 2.] $v\in A$.\\
Then the corresponding component $X_v$ of $G'-S$ has weight at most $\frac n2$, since it is a component of $G'-S$.
We claim that $v$ has degree $\Oh(|S|)$ in $H$.
Recall that each face that is incident to at least two components of
$G'-S$ must contain at least two vertices of $S$ that are separated
along the face cycle by two vertices from distinct components of $G'-S$,
i.e., there are two vertices of $S$ on the face that are not adjacent in
$G'$. We can draw a curve inside each such face that connects two vertices
of $S$ that are not adjacent in $G'$. Since faces are pairwise internally disjoint,
these curves form a planar graph on $S$. By a consequence of Euler's
formula, there are at most $3|S|-6$ edges in this graph. It follows that
there are at most $3|S|-6$ faces that contain simultaneously a vertex from $X_v$ and a vertex from
another component of $G'-S$.
Thus the degree of $v$ in $H$ is $\Oh(|S|)$.

We add the vertices on the faces corresponding to the neighbors of $v$ in $H$ to the separator $S$, resulting in a new separator $\bar S$.
Since $v$ has degree $\Oh(|S|)$, the set $\bar S$ is at most $\Oh(\delta)$ times larger than $S$.
Moreover, the split balance of $\bar S$ is at least $\frac12$.
\end{description}

In both cases we arrived at a separator $\bar S$ with size $\Oh(\delta^3\log n)$ and split balance at least $\frac12$, and all the manipulations could be executed in $\Oh(n)$ time.
\end{claimproof}

We can now continue the proof of \Cref{lem:out-1} with the separator $\bar S$ computed above.
Thus every face of $G'[\bar S]$ has weight at most $\frac n2$.
If there is a face of $G'[\bar S]$ of weight at least $\frac n3$, then let $U$ be that face.
Otherwise, start with any face of $G'[\bar S]$, and greedily add adjacent (i.e., edge-sharing) faces of $G'[\bar S]$ until a collection $\mathcal F$ of faces is found with sum of weights exceeding $\frac n3$.
The collection $\mathcal F$ can be built in $\Oh(n)$ time using a
breadth-first search on the dual graph of $G'[\bar S]$.
Since each new addition changes the total weight by less than $\frac n3$, the sum of the weights of the faces in $\Ff$ is between $\frac13 n$ and $\frac23 n$.
Let $U$ be the union of the faces in $\Ff$.
Thus the weight of $U$ is at most $\frac 23n +|\bar S| =\frac23 n + o(n)< \frac34 n$.
Note that $\bd U$ is covered by edges of $G'[\bar S]$ and therefore has at most $\Oh(\delta^3\log n)$ edges.

If $\bd U$ is a simple cycle, then it is a cycle separator of length at most $\Oh(\delta^3\log n)$ with split balance at least $\frac 14$.
We can apply \Cref{lem:cycle_to_geodesiccycle} to get the desired separator.

Suppose now that $U$ has `holes', i.e., its boundary is not a simple cycle. Let
$\gamma_1,\dots, \gamma_k$ be the boundaries of the connected components of $\sph^2
\setminus U$ (as simple cycles induced by $G'[\bar S]$), with the corresponding
region covering total weight $w_1,\dots,w_k$, respectively. Notice that
$\sum_{i=1}^k w_i \in [\frac14 n, \frac 34 n]$ and $\sum_{i=1}^k |\gamma_i| = \Oh(\delta^3\log n)$. Thus $\frac{\sum_i w_i}{\sum_i |\gamma_i|} =\Omega \left(\frac{n}{\delta^3\log n}\right)$.

Let $j\in \{1,\dots,k\}$ be the index that maximizes the fraction
$\frac{w_i}{|\gamma_i|}$. Then $w_j|\gamma_i|\geq w_i|\gamma_j|$ for all $i$. Summing over all
$i$, we get $w_j\sum_i |\gamma_i|\geq |\gamma_j|\sum_i w_i$, and thus
$\frac{w_j}{|\gamma_j|}\geq  \frac{\sum_i w_i}{\sum_i |\gamma_i|}
=\Omega\left(\frac{n}{\delta^3\log n}\right)$. Since the balance of $C_j$ is at
least $\min(\frac{w_i}{n}, \frac14)$, the previous inequality gives that its
balance is $\Omega\left(\frac{|\gamma_j|}{\delta^3\log n}\right)=\Omega(r |\gamma_j|)$.
We now apply Lemma~\ref{lem:cycle_to_geodesiccycle2} on the cycle $\gamma_j$ in the
original graph $G$. This yields a geodesic cycle separator $\gamma$ that splits with
balance $r/2^{\Oh(\delta)}=\frac{1}{2^{\Oh(\delta)}\log n}$, as required.
\end{proof}

\subsection{\outcome{3}}\label{sec:outcome-3}

In this section we prove the following lemma.

\begin{lemma}[\outcome{3}]\label{lem:out-3}
    Let $G'$ be the face-weighted graph and $H$ be the greedy filling returned
    by the last iteration of the iterated greedy filling procedure. Let $C$ be
    the face-cycle of $G$ and $G'$ based on which $H$ was created, where each
    filling face of $H$ has at most $rn$ vertices of $G$ in its interior, and
    the boundary of each filling face has length at most $k=\Oh(\delta)$. 

    Then in the original graph $G$, there exists a shortest path of length
    $\Oh(\delta^2 \log n)$ that is a $1/7$-balanced separator. Moreover, this path can
    be found in $\Oh(\delta^2 n \log n)$ time
    (given access to the data structure from~\cref{lem:compute_filling}).
\end{lemma}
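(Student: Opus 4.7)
The plan is to build a short path $P$ from one vertex of $C$ to another that separates $G$ with constant balance, and then replace $P$ by a shortest $ab$-path $P^{\ast}$ in $G$ and argue via the isoperimetric inequality for greedy fillings (\Cref{lem:isoperimetry}) that the shift in balance is negligible. I would first re-embed $G$ so that $C$ is the outer face; then the filling $H$ tiles the interior of $C$, every filling face has boundary length at most $k=\Oh(\delta)$, and each filling face carries at most $rn$ vertices of $G$ in its interior, where $r=1/(c\delta^3\log n)$ and $c$ is a sufficiently large constant to be chosen at the end.

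The first technical step is to control the primal BFS depth from $V(C)$. I would define layers of filling faces via BFS in the weak dual $H^{\ast}$: let $L_j$ be the faces at dual distance $j$ from $C$, and set $T_j=\sum_{i\ge j}|L_i|$. The boundary of the region $R_{\ge j}$ covered by layers $\ge j$ decomposes into cycles in $H$ whose total length is at most $k|L_{j-1}|$, since each face of $L_{j-1}$ contributes at most $k$ edges. Applying the ``moreover'' clause of \Cref{lem:isoperimetry} to each such cycle gives $T_j\le k|L_{j-1}|$, hence $T_{j-1}\ge T_j(1+1/k)$; iterating bounds the number of dual layers by $d=\Oh(k\log T_0)=\Oh(\delta\log n)$. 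Because any two vertices on the boundary of a filling face are at distance at most $k=\Oh(\delta)$ inside that face, the primal graph $H$ has BFS depth $\Oh(\delta d)=\Oh(\delta^2\log n)$ when rooted at the contracted $V(C)$.

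Next, I would compute a BFS tree $T$ in the graph $\tilde H$ obtained by contracting $V(C)$ to a super-root $\rho$. Assign weight $1$ to each vertex of $\tilde H$ (with $\rho$ getting weight $|V(C)|$) and weight equal to the number of $G$-vertices inside the face to each face of $\tilde H$; the total weight is exactly $n$. Invoking the weighted variant of the Lipton--Tarjan separator for BFS trees, I would find a non-tree edge $e=uv$ whose fundamental cycle $F(e)=T[u,\rho]\cdot e\cdot T[\rho,v]$ separates the weighted planar graph into two sides of weight at most $2n/3$ each. Uncontracting $\rho$ back into $V(C)$, this fundamental cycle becomes a path $P$ in $H$ from some $a\in V(C)$ through $u,v$ to some $b\in V(C)$ of length $\Oh(\delta^2\log n)$. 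Since $C$ is the outer boundary of $\sph^2$, removing $P$ splits $G$ into two parts corresponding to the two sides of $F(e)$, so $P$ is a separator of $G$ with balance at least $1/3$.

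The main obstacle is the final step: showing that a shortest path $P^{\ast}$ from $a$ to $b$ in $G$ retains constant balance, even though it may differ significantly from $P$. I would consider the closed walk $W$ consisting of $P$ followed by the reverse of $P^{\ast}$, of length at most $2|P|=\Oh(\delta^2\log n)$, and decompose it into simple cycles $\gamma_1,\dots,\gamma_m$ in $G$ with $\sum_i|\gamma_i|\le|W|$. By \Cref{lem:isoperimetry} applied to each $\gamma_i$ (a simple cycle in $G$, not necessarily in $H$), together they interact with at most $\sum_i(k+1)|\gamma_i|=\Oh(\delta\cdot\delta^2\log n)=\Oh(\delta^3\log n)$ filling faces. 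Any vertex of $G$ that moves from one side of the separator to the other when replacing $P$ by $P^{\ast}$ either lies on $P\cup P^{\ast}$ or in the interior of a filling face that interacts with some $\gamma_i$. Since each filling face carries at most $rn$ vertices of $G$ in its interior, the total shift in balance is at most $\Oh(\delta^3\log n)\cdot rn+\Oh(\delta^2\log n)=\Oh(n/c)+o(n)$. Choosing the universal constant $c$ large enough, this shift is at most $n/42$, so $P^{\ast}$ has balance at least $1/3-1/42>1/7$. The running time bound follows from $\Oh(n)$ BFS and separator computations together with one shortest-path query in $G$ answered via the data structure of \Cref{lem:compute_filling}.
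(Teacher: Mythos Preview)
Your overall strategy is the same as the paper's: bound the dual layer depth via the isoperimetric inequality, extract a balanced path anchored on $C$ from a tree--cotree decomposition, and then pass to a genuine shortest path while controlling the balance shift through \Cref{lem:isoperimetry}. The gap is in the middle step. You invoke ``the weighted variant of the Lipton--Tarjan separator for BFS trees'' to produce a single non-tree edge $e$ whose fundamental cycle is $2/3$-balanced, but that statement needs the cotree to be subcubic, i.e., the primal graph to be \emph{triangulated}. Your graph $\tilde H$ is not: its bounded faces are the filling faces, of length up to $k=\Oh(\delta)$, so the cotree has vertices of degree up to $k$. In such a weighted tree the centroid vertex may have $k$ subtrees each of weight roughly $n/k$, and then no single edge achieves balance better than $\Theta(1/\delta)$; the asserted $2n/3$ bound on both sides of $F(e)$ does not follow.

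The paper closes exactly this gap by first triangulating: it inserts a vertex inside every face (including the outer face, whose inserted vertex plays the role of your super-root $\rho$) and joins it to the entire face boundary, obtaining $H_\Delta$. Now the cotree is cubic, so a $1/4$-balanced edge is guaranteed; moreover the BFS depth in $H_\Delta$ is only $\Oh(\delta\log n)$. The resulting fundamental cycle lives in $H_\Delta$, not in $H$, and is then \emph{lifted} back to $H$ by replacing each face-center vertex with a walk along that face's boundary; this lifting is what supplies the second factor of $\delta$ and yields the final $\Oh(\delta^2\log n)$ length. Your write-up omits this triangulate-then-lift step, and without it the constant balance of $P$ is not established. (A minor separate point: the oracle of \Cref{lem:compute_filling} only answers distance queries of length at most $10\delta$, so it cannot return the shortest $ab$-path; one simply runs a single-source shortest-path computation in $G$ in $\Oh(n\log n)$ time.)
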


\begin{proof}
    Through the proof we assume the black-box access to the greedy filling data
structure from~\cref{lem:compute_filling}).
Consider now the plane embedding of $H$ where $C$ is the boundary of the unbounded face.
In the first step, we add a vertex to the interior of each face, and connect it to all vertices of the face; let $\root$ denote the vertex added to the unbounded face. Note that this operation preserves planarity; let $H_\Delta$ denote the resulting graph. Let $\Tt$ be the breadth-first search tree of $H_\Delta$
starting from $\root$.

\begin{claim}\label{claim:depth-bfs}
    The depth of $\Tt$ is $\Oh(\delta \log n)$.
\end{claim}
\begin{claimproof}
    We will iteratively construct \emph{layers} in $H$, see \cref{fig:path-sep} (i).
    Initially, set $C_1 = C$ and $H'_1=H$, and let $L_1$ be the set of faces of $H'_1$ that share an edge with $C_1$.
    Next, we iterate over index $i=2,3\dots$ with the following procedure. Let $H'_i$ be
    the graph obtained by removing the edges that are shared only by the faces
    in $L_{i-1}$ from $H'_{i-1}$. Let $C_i$ be the set of edges on the unbounded face of $H'_i$. We
    continue this procedure until the graph $H'_i$ is empty. Let $\ell$ be the
    largest layer index in this procedure, and let $L_{\geq i}=L_i\cup L_{i+1}\cup\dots\cup L_\ell$ denote the set of faces of $H'_i$.
    Since each face of $L_{i+1}$ shares a vertex with some face of $L_i$, and we can get from any vertex of a face to any other vertex of the face in two steps inside $H_\Delta$, we have that any vertex of $L_i$ is within distance $2i+1$ to root.

    Observe that it remains to prove that $\ell = \Oh(\delta \log n)$.
    Since each filling face of $H$ has length at most $k$ and each edge of $C_i$ is shared with some face of $L_i$,  and face of $L_i$ are no longer than $k$, it means that $|C_i|\leq k \cdot |L_i|$.
    On the other hand, we can apply \cref{lem:isoperimetry} on each component cycle of $C_i$ and get $|L_{\geq i}| \leq |C_i|$. Thus $|L_{\geq i}|\leq k |L_i|$, which gives $|L_i|\geq\frac{1}{k} |L_{\geq i}|$ and $|L_{\geq i+1}|\leq \left(1-\frac{1}{k}\right) |L_{\geq i}|$. Consequently, after $\ell = \log_{\frac{1}{1-1/k}} n = \Oh(k \log n)=\Oh(\delta\log n)$ iterations, the graph $H_\ell$ is empty, and the procedure terminates.
\end{claimproof}

\begin{figure}[t]
    \centering
    \includegraphics[width=\textwidth]{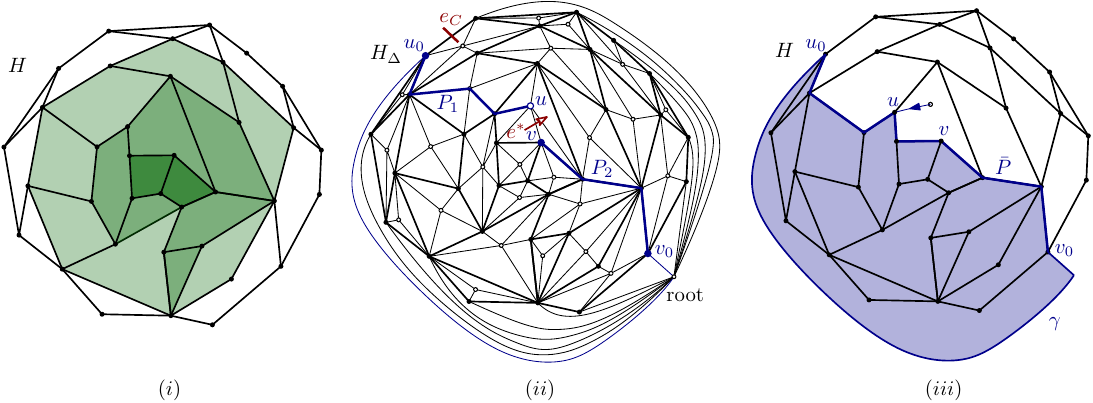}
    \caption{(i) The layers of $H$. (ii) The graph $H_\Delta$ and a balanced separator edge $e^*$ of the dual tree $\Tt^*$. The paths $P_1$ and $P_2$ are connecting $u$ and $v$ inside the BFS tree $\Tt$ to $\root$. (iii) Since $u\not \in V(H)$, it had to be moved to the next vertex on $P_1$. The bounded region of $P^\circ =\bar P\cup \gamma$ (shaded blue) has those faces of $H_\Delta$ whose dual vertices are behind $e^*$ in $\Tt^*$, except some of those where the containing $H$-face shares an edge with $\bar P_0$.}
    \label{fig:path-sep}
\end{figure}

Next, we continue our construction. Let $H_\Delta^\ast$ be the dual graph of $H_\Delta$. For each filling face $F$ of $H$, let $F^\ast$ be some fixed vertex in $H_\Delta^\ast$ whose dual triangle is inside $F$. Similarly, for each vertex $v\in V(H)$ let $v^\ast$ be the vertex of $H_\Delta^\ast$ corresponding to some triangular face that is incident to $v$ in $H_\Delta$. Initially we set the weight of all vertices in $H_\Delta^\ast$ to $0$, and then for each $F$ with weight $w(F)\leq rn$ (recall that the weight of $F$ is the number of $G$-vertices in its interior on $\sph^2$) we increase the weight of $F^\ast$ by $w(F)$. Similarly, for each $v\in V(H)$ we increase the weight of $v^\ast$ by one. Note that the vertices $F^\ast$ are pairwise distinct, but the vertices $v^\ast$ may coincide with each other and with the vertices $F^\ast$. Since each dual vertex can receive at most $3$ unit weights (from the vertices of the corresponding triangle) and at most one face weight of $H$, the maximum weight of a vertex is at most $rn+3$.

Recall that each edge $e$ in $H_\Delta$ corresponds to an edge $e^\ast$ in the
dual graph. The endpoints of $e^\ast$ are the dual vertices corresponding to the
faces on opposite sides of the edge $e$. Let $\Tt^\ast$ be the graph
spanned by the edges $e^\ast$ of the dual graph $H^\ast_\Delta$ whose corresponding
edges in the primal graph are not in $\Tt$. 

\begin{claim}\label{clm:cubic-tree}
    $\Tt^\ast$ is a cubic tree on the faces of $H_\Delta$ whose dual edges are not incident to $\root$.
\end{claim}
\begin{claimproof}
    The maximum degree of the graph $\Tt^\ast$ is three because it is a
    subgraph of the dual graph of a triangulated graph.
    Observe that $\Tt$ is a spanning tree of $H_\Delta$, therefore,
    by~\cite[Theorem XI.6]{tutte2001graph} $\Tt^\ast$ is also a spanning
    tree of $H_\Delta^\ast$. Finally, since $\Tt$ is a BFS tree from $\root$, all edges incident to $\root$ are included in it, thus $\Tt^\ast$ cannot include the dual of such an edge, which concludes the proof.
\end{claimproof}

\begin{claim}\label{claim:dual-tree-balanced}
    There exists an edge $e^\ast$ in $\Tt^\ast$ that is a $(1/4)$-balanced
    separator in $\Tt^\ast$.
\end{claim}
\begin{claimproof}
    According to Claim~\ref{clm:cubic-tree}, $\Tt^\ast$ is a cubic tree.
    Each vertex in this tree has weight at most $rn + 3<3+ n/\log n$ and the total weight
    is $n$.  Therefore, there exists a vertex $v$ such that each connected component of
    $\Tt^\ast-v$ has at most $2/3$ fraction of the total weight. (Such a vertex could be found for example by traversing $\Tt^\ast$ from some arbitrary leaf, moving towards the heavier subtree). Among the three edges incident to such a vertex select the one that points towards the heaviest component which therefore has weight at least $(n-3-n/\log n)/3>(n-3n/\log n)/3$, leaving weight at most $2n/3+3n/\log n<3n/4$ weight for the other component (as $n$ is large enough).
\end{claimproof}

Now, consider an edge $e^\ast$ that serves as a balanced separator in
$\Tt^\ast$, see \cref{fig:path-sep} (ii). Let $e$ be the edge corresponding to $e^\ast$ in the primal
graph $H_\Delta$, and let $u$ and $v$ be the endpoints of $e$. Let $u_0 \in V(C)$ be
the vertex on the outer face that lies on the path from the vertex
$\root$ to $u$ in the BFS tree $\Tt$. Note
that, due to $\Tt$ being a BFS tree, there exists exactly one such
vertex. Similarly, let $v_0 \in V(C)$ be the vertex on the outer face that lies
on the path from $u$ to $\root$ in $\Tt$.

\begin{claim}\label{cl:isoperimetry}
    Let $P$ be the shortest path between $a$ and $b$ in $G$, and let $\bar{P}$
    be an $\alpha$-balanced path separator of $G$ with endpoints $a,b$. 
    If $\bar{P}$ has length $\Oh(\delta^2 \log n)$ then $P$ is an
    $(\alpha-1/20)$-balanced separator.
\end{claim}
\begin{claimproof}
    Consider the starting face of the greedy filling $H$ as the outer face.
    Define the closed walk $W = P \cup \bar{P}$ in $G$. Let $\gamma_1,\ldots,\gamma_k$ be
    the boundaries of faces of $W$, which are simple cycles. By assumption we
    have that $|P|\leq |\bar P|=\Oh(\delta^2\log n)$, thus $|W| = \Oh(\delta^2 \log n)$.  Based on
    Lemma~\ref{lem:isoperimetry}, the cycle $\gamma_i$ interacts
    with at most $\Oh(\delta |\gamma_i|)$ faces of $H$. Each face has a weight of at
    most $rn$. Hence, at most $w^* = rn\cdot \Oh\left(\sum_{i=1}^k \delta|\gamma_i|\right) \le \Oh(rn\delta |W|)$ weight is placed on different sides of the
    separators $P$ and $\bar{P}$. Since $rn\delta|W| < \Oh(n/c)$. By setting $c$ large enough, we have that $w^*=\Oh(n/c)<n/20$, and each
    connected component of $G - P$ has weight at most $(1-\alpha)n + w^*< (1-\alpha+1/20)n$.
\end{claimproof}

Consider a shortest path $P$ between $u_0$ and $v_0$. Next, we show that this
path is indeed a balanced separator.

\begin{claim}\label{claim:balance-in-g'}
    The shortest path from $u_0$ to $v_0$ in $G$ is a $(1/7)$-balanced separator
    of length $\Oh(\delta^2 \log n)$.
\end{claim}
\begin{claimproof}
    Let $P_1$ be the path from $u$ to $u_0$, and let $P_2$ be the path from $v$
    to $v_0$ (both in $H_\Delta$). If $u$ or $v$ is not a vertex of $H$ (i.e., it is the vertex added to the middle of the face in $H$), then exchange it for its neighbor along $P_1$ or $P_2$; as a result we may assume that $u,v$ are on the boundary cycle of the same filling face of $H$. Let $P_3$
    be the shortest path between $u$ and $v$ along this filling face.

    By \cref{claim:depth-bfs} the walk $\bar{P}_0 = P_1 \cup P_2 \cup P_3$ has length
    $\Oh(\delta \log n)$, 
    with endpoints $u_0$ and $v_0$ in $H$.
    We now lift $\bar{P}_0$ to $H$, i.e., a vertex $v$ of
    $\bar{P}_0$ is either already a vertex of $G$, or $v$ corresponds to a face
    $F$ in $H$, and then we connect the neighbors of $v$ (which are vertices of
    $G$) with a shortest path along the boundary of $F$. Since the neighbors of
    $v$ are on the same filling face, this operation increases the length by a
    factor of at most $k=\Oh(\delta)$, so $|\bar P|=\Oh(\delta^2\log n)$. We then proceed to remove any loops (i.e., intervals of the walk with the same start and end vertex) to get the final path $\bar{P}$.  See \cref{fig:path-sep} (iii) for an illustration. 

    Recall that all edges incident to $\root$ are in $\Tt$, and one of the incident faces to $\root$ is the outer face of $H_\Delta$. Let $e_C$ be the unique edge of $\Tt^*$ that ends in the dual of the outer face of $H_\Delta$.
    
    We add a curve $\gamma$ of the path $u_0\,\root\, v_0$ from $u_0$ to $v_0$ inside the face $C$ to create a closed curve $P^\circ=\bar P \cup \gamma$. We fix an orientation on $P^\circ$, and let us orient $e^\ast$ to point towards $e_C$ inside $\Tt^*$. Observe moreover that if a vertex $F^\ast$ of $H^\ast_\Delta$ is behind $e^\ast$ in $\Tt^\ast$, then the face $F$ in the primal graph is typically in the bounded face of $P^\circ$; similarly, if $F^\ast$ is after $e^\ast$ in $\Tt^\ast$, then $F$ is typically in the unbounded face of $P^\circ$. The only possible exception to this is when $F$ is inside a face of $H$ that is incident to $\bar{P}_0$, as the lifting step may place such faces on the other side of $P^\circ$. (Note that when $u_0=v_0$, then this curve has length $0$, and one of its sides is empty.) 
    
    We will now show that the path $\bar{P}$ serves as a $1/5$-balanced
    separator in $G$. Let $\Ff$ denote the set of filling faces in $H$ that
    share an edge with $\bar{P}_0$. \cref{claim:dual-tree-balanced} implies that
    each side of $e^\ast$ has weight at most $3n/4$. By the above
    observation, we know that face weights and vertex weights can differ between
    the $e^\ast$-separation and the $\bar P$-separation only when the face is in
    $\Ff$ or the vertex is incident to some face in $\Ff$. Notice that
    $|\Ff|\leq 2|\bar P_0| = \Oh(\delta^2\log n)$, thus the total face weight and
    vertex weight of $\Ff$ is at most $(rn+k)\cdot \Oh(\delta^2\log n)<\Oh(n/c)<n/10$ when $c$ is large enough, so each side of $\bar P$ has weight at most $3n/4+n/20$, i.e., $\bar P$ has balance at least $n/5$.
    
    Let $P$ be the shortest path between $u_0$ and $v_0$ in $G$.
    By~\cref{cl:isoperimetry} applied to $\bar P$ and $P$ we have that the shortest path $P$ is a separator of $G$ with balance $n/5-n/20>n/7$.
\end{claimproof}

It is important to observe that the construction of the BFS tree $\Tt$ can be
done in linear time. Similarly, triangulating a planar graph, finding the dual
graph, and obtaining the tree $\Tt^\ast$ can also be achieved in linear time. To
find an edge $e^\ast$ that serves as a balanced separator in $\Tt^\ast$, we can
precompute the subtree weights of $\Tt^\ast$ and greedily walk to a balanced
edge, doing $\Oh(1)$ weight checks per traversed vertex, requiring a total time
complexity of $\Oh(n)$. The shortest path between $u_0$ and $v_0$ can be found
in $\Oh(n \log(n))$ time. This concludes the proof of~\cref{lem:out-3}.    
\end{proof}

\begin{remark}
    If we choose $c$ large enough, then we can achieve a balance arbitrarily close to $1/3$ for the path separator case of \cref{lem:separator-2-connected}.
\end{remark}

\subsection{Proof of~\cref{thm:separator}}

 Recall that \cref{lem:out-1} and~\cref{lem:out-3} together conclude the proof
of~\cref{lem:separator-2-connected}.
This gives us a construction of a balanced
separator when the input graph is $2$-connected. Now, we conclude the proof
of~\cref{thm:separator} by reducing the case of connected graphs to the case of $2$-connected graphs.

Let $\mathbb{T}$ be the block-cut tree of $G$. The vertices in this tree
correspond to maximal 2-connected components of $G$ or cut-vertices of $G$. If
$x \in V(\mathbb{T})$ corresponds to a 2-connected component of $G$, then it
identifies a set $B_x \subseteq V(G)$ that induces a maximal 2-connected
component of $G$. Otherwise, $x$ simply corresponds to a cut-vertex of $G$. For
an edge $uv \in E(\mathbb{T})$, it corresponds to a bridge of $G$ if in the
original graph $G$, there was a bridge connecting $G[B_u]$ to $G[B_v]$.
Otherwise, an edge $uv \in E(\mathbb{T})$ corresponds to the 2-connected
component $G[B_v]$ and a cut-vertex $u$ or component $G[B_u]$ and a cut-vertex
$v$, see \cref{fig:2conn_reduction}. Note that a cut-vertex may be inside several 2-connected components at once.

Next, we direct the edges of $\mathbb{T}$ as follows. For an edge $uv \in
E(\mathbb{T})$, let $\mathbb{T}_{uv}^L$ and $\mathbb{T}_{uv}^R$ be two connected
components obtained by deleting $uv$ from $\mathbb{T}$, such that $u \in
\mathbb{T}_{uv}^L$ and $v \in \mathbb{T}^R_{uv}$. For $i \in \{L,R\}$, let
$\omega(\mathbb{T}_{uv}^i)$ be defined as $|\bigcup_{x \in \mathbb{T}_{uv}^i}
B_x|$. The edge $uv$ is directed from $u$ to $v$ if $\omega(\mathbb{T}_{uv}^L) >
\omega(\mathbb{T}_{uv}^R)$. Otherwise, we direct the edge $uv$ from $v$ to $u$. See \cref{fig:2conn_reduction}(ii) for an illustration.
It is important to observe that each edge of $\mathbb{T}$ is now directed, and
since $\mathbb{T}$ is a tree, it has exactly $|V(\mathbb{T})| - 1$ edges. The
sum of the outdegrees is also $|V(\mathbb{T})| - 1$. Consequently, there exists
a vertex $\alpha \in V(\mathbb{T})$ that has no outgoing directed edge.
If some $2$-connected component $B_v$ has more than $n/20$ vertices, then we can use~\cref{lem:separator-2-connected} directly on $G[B_v]$. The resulting path or cycle has balance at most $20$ times worse on $G$, i.e., we either have a geodesic cycle with balance $\frac{1}{2^{\Oh(\delta)}\log n}$ or a shortest path with balance $n/140$. For the rest of the proof, suppose that each component $B_v$ has size at most $n/20$.

\begin{figure}[t]
    \centering
    \includegraphics[width=\textwidth]{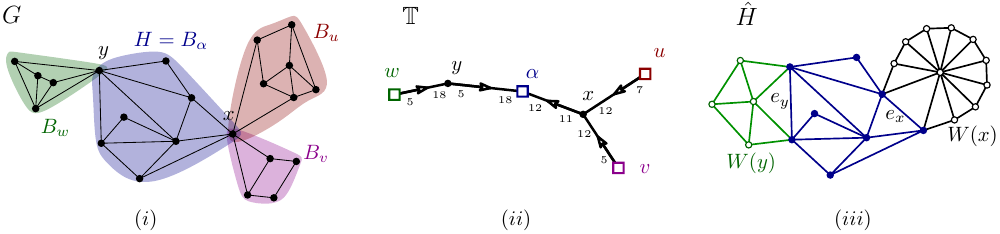}
    \caption{(i) A graph $G$ and its $2$-connected blocks with cut vertices $x,y$. (ii) The block-cut tree $\mathbb{T}$ of $G$ and the weights $\omega(\mathbb{T}^{L,R}_{uv})$ on each edge $uv$. (iii) The graph $\hat H$ that we get by attaching wheels. A vertex that cuts off $k$ vertices is associated with a wheel $W_{k+1}$.}
    \label{fig:2conn_reduction}
\end{figure}

Observe that when $\alpha$ corresponds to a cut-vertex in $G$, we are done
because the length-0 shortest path from $\alpha$ to $\alpha$ serves as a $1/3$-balanced
separator. In the other case, $\alpha$ corresponds to a $2$-connected component
of $G$. It is important to note that we can assume $|B_\alpha| > 1$, as if
$|B_\alpha| = 1$, a single vertex would act as a $1/3$-balanced separator. We
now need to analyze the case when $|B_\alpha| > 1$.

Let $H = G[B_\alpha]$. Observe that the graph $H$ is 2-connected;
however, we cannot directly apply~\cref{lem:separator-2-connected} as $H$ itself
may be small. The idea is to modify $H$ in such a way that the separator, after
applying~\cref{lem:separator-2-connected}, corresponds to the balanced separator
in $G$. More concretely, for any integer $k \ge 1$, let $W_k$ denote the
\emph{wheel graph} with $k$ vertices. In other words, $W_1$ consists of a single
vertex; $W_2$ is an edge; $W_3$ is a triangle; and for every $k \ge 4$, let
$W_k$ be a cycle with $k-1$ vertices and an additional universal vertex connected to all $k-1$ vertices of the cycle. It should be noted that $W_k$ is a planar graph for all $k \in \nat$, and we will consider a plane embedding where the universal vertex is drawn inside the cycle.

Let $N(\alpha) \subseteq V(H)$ be the set of cut vertices of $G$ that are in $V(H)$. For each vertex $x \in N(\alpha)$, we arbitrarily
select an edge $e_x \in E(H)$ that is adjacent to $x$. Next, for every $x \in
N(\alpha)$, we create a copy of $W_{k+1}$, where $k = \omega(\mathbb{T}_{x\alpha}^L)$, and
name it $W(x)$. We connect the aforementioned copy to $H$ by identifying two adjacent vertices of the $k+1$-length cycle of $W(x)$ with $e_x$, i.e., the resulting graph has $k-1$ new vertices (recall that the cut vertex has been shared) and $|E(W_{k+1})-1|$ new edges. This is illustrated in \cref{fig:2conn_reduction}(iii).

Observe that we can naturally define the bijection $\eta : V(G) \to V(\hat{H})$.
If $x \in V(H)$, then $\eta(x) = x$. Next, for every $x \in N(\alpha)$, we map
all vertices from $\bigcup_{x \in \mathbb{T}_{x\alpha}^L} B_x$ to the vertices of
$W(x)-e_x$ in the order of their labels. In particular, the number of vertices in
$\hat{H}$ is equal to the number of vertices in $G$. Now, we prove the
following properties of $\hat{H}$.

\begin{proposition}
    If $G$ is a connected planar $\delta$-slim graph, then  $\hat{H}$ is a 2-connected planar $(\delta+6)$-slim graph. 
\end{proposition}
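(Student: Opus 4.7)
I would handle these by standard graph-theoretic arguments. Starting from the planar embedding of $H$ inherited from $G$, each wheel $W(x)=W_{k+1}$ is drawn inside one of the two faces of $H$ incident to $e_x$; multiple wheels embedded in the same face can be nested without crossings. For 2-connectivity, $H$ is 2-connected as a block of $G$, each wheel $W_{k+1}$ for $k\ge 2$ is 2-connected, and gluing 2-connected subgraphs along a shared edge preserves 2-connectivity; the edge case $k=1$ simply reuses $e_x$ and adds no new vertex.

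\textbf{Isometric embedding and geodesic structure.}
The backbone of the slimness argument is the following structural claim. For $u,v\in V(H)$, any path in $\hat H$ that detours through a vertex of $W(y)\setminus V(H)$ must enter and exit through the two endpoints of $e_y$; since the interior wheel-path has length at least $2$ while $e_y$ itself is a single edge, the detour can never shorten the path, so $d_{\hat H}(u,v)=d_H(u,v)$. In particular, no $\hat H$-geodesic can enter $W(y)\setminus V(H)$ unless one of its endpoints lies in $W(y)$, and therefore every $\hat H$-geodesic $P$ decomposes as $P=W_a\oplus P^H\oplus W_b$, where $W_a,W_b$ are (possibly empty) paths of length at most $2$ inside the wheels containing the endpoints (wheels have diameter at most $2$), and $P^H$ is a geodesic in $H$. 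Moreover, $H$ is isometrically embedded in $G$ because any shortest path between two vertices of the block $B_\alpha$ stays in $B_\alpha$ (the block-cut tree is a tree, and leaving a block forces a cut-vertex detour), so $H$ inherits $\delta$-slimness from $G$.

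\textbf{Slimness bound.}
I would verify $(\delta+6)$-slimness by case analysis on a geodesic triangle $(a,b,c)$ in $\hat H$ with sides $P_1,P_2,P_3$ and point $p\in P_1=P_{ab}$. If $p$ lies in the wheel-prefix $W_a$ of $P_1$, then $d_{\hat H}(p,a)\le 2$ and $a\in P_3$, so $d_{\hat H}(p,P_3)\le 2\le\delta+6$; the case $p\in W_b$ is symmetric. Otherwise $p\in V(H)$ lies on $P_1^H$. I would choose projections $a^\star,b^\star,c^\star\in V(H)$ of $a,b,c$ (each at $\hat H$-distance at most $2$), aligned so that $a^\star,b^\star$ coincide with the wheel-exit vertices of $P_1^H$, and form the $H$-triangle $(a^\star,b^\star,c^\star)$ with geodesic sides $Q_1=P_1^H$, $Q_2$, and $Q_3$. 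By the $\delta$-slimness of $H$, $p$ is within $H$-distance $\delta$ of some $q\in Q_2\cup Q_3$; a small auxiliary triangle in $H$ (using that $Q_2$ and $P_2^H$ share one endpoint while the other two are within $1$ on the shared edge $e_{x_b}$) then transfers $q$ back to a point of $P_2\cup P_3$ within a bounded additive overhead, and the total adds up to at most $\delta+6$.

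\textbf{Main obstacle.}
The delicate part is tracking the additive constant. Two geodesics in $\hat H$ that both start at $a$ (such as $P_1$ and $P_3$) may naturally exit $W(x_a)$ at different endpoints of $e_{x_a}$, so the projection $a\mapsto a^\star$ cannot simultaneously agree with the exit vertex of every adjacent geodesic, and each such mismatch costs an extra unit of distance along $e_{x_a}$. Combined with the wheel diameter of $2$ at each of the three triangle corners, and the controlled discrepancy between a geodesic $Q_i$ in $H$ and the $H$-portion $P_i^H$ of the corresponding $\hat H$-geodesic (endpoints at $H$-distance at most $1$), the careful bookkeeping of these contributions is what yields the constant $6$.
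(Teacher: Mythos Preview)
Your proposal is correct and follows essentially the same approach as the paper: show that $H$ is isometrically embedded in both $G$ and $\hat H$, observe that every vertex of $\hat H$ lies within a small constant distance of $V(H)$, and transfer the $\delta$-slimness of $H$ to $\hat H$ with an additive overhead. Your decomposition of $\hat H$-geodesics into a wheel-prefix, an $H$-geodesic, and a wheel-suffix is more explicit than the paper's version, which simply asserts that each side of an $\hat H$-triangle lies in the $3$-neighborhood of the corresponding side of an $H$-triangle; your bound of $2$ on the wheel-prefix length is in fact tighter than the paper's $3$, but both lead to the same $(\delta+6)$ conclusion.
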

\begin{proof}
    First, observe that $\hat{H}$ is planar, because we simply add a copy of the
    planar graph $W_k$ and connect its outer-face vertices to the endpoints of
    some edges in the planar graph $H$.
    Similarly, notice that $\hat{H}$ is 2-connected because the original graph
    $H$ is 2-connected, and we add a 2-connected graph $W_k$ to two different
    edges of $H$. One can verify that this construction also works for $k=1$
    and $k=2$.

    Therefore, it remains to show that $\hat{H}$ is
    $(\delta+6)$-slim. To do this, observe that $H$ is
    $\delta$-slim because it is a $2$-connected component of $G$, consequently, any shortest path in $H$
    is also a shortest path in $G$.

    Next, observe that each vertex of $W_k$ is within distance $3$ from the vertices of $H$ that $W_k$ was connected to. Therefore, any
    shortest path in $\hat{H}$ lies within the $3$-neighborhood of some shortest
    path in $H$, and by the same argument, each side of a triangle of $\hat H$ is in the $3$-neighborhood of the corresponding side of a triangle in $H$. Hence, the graph $\hat{H}$ is
    $(\delta+6)$-slim.
\end{proof}

We observe that a separator of $\hat H$ that happens to fall into $H$ is a good separator for $G$.

\begin{observation}\label{obs:2-connectivity-case1}
    Let $\beta \in (0,1/3)$ be a fixed constant. Let $\hat{P}$ be a $\beta$-balanced
    separator in $\hat{H}$ such that $V(\hat{P}) \subseteq V(H)$. Then $\hat{P}$
    is also a $\beta$-balanced separator in $G$.
\end{observation}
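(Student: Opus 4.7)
The plan is to compare connected components of $G - V(\hat{P})$ with connected components of $\hat{H} - V(\hat{P})$ via the bijection $\eta$, and to show that every component of $G - V(\hat{P})$ has size at most the size of some component of $\hat{H} - V(\hat{P})$. Since $|V(G)| = |V(\hat{H})| = n$ and $\hat{P}$ is $\beta$-balanced in $\hat{H}$, this transfers the bound $(1-\beta)n$ on component sizes from $\hat{H}$ to $G$.

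First I would characterize components of $G - V(\hat{P})$. For every cut vertex $x \in N(\alpha)$, let $T_x = \bigcup_{y \in \mathbb{T}_{x\alpha}^L} B_y$, so $T_x \cap V(H) = \{x\}$ and $|T_x| = k_x = \omega(\mathbb{T}_{x\alpha}^L)$. Because $V(\hat{P}) \subseteq V(H)$, the set $T_x \setminus \{x\}$ is entirely intact in $G - V(\hat{P})$, and it is connected (being a union of blocks meeting along cut vertices inside the block-cut subtree rooted at $x$) and attached to $V(H)$ only through $x$. Consequently, any path in $G - V(\hat{P})$ between two $V(H)$-vertices that detours through some $T_x \setminus \{x\}$ must re-enter $V(H)$ through $x$ itself, so it can be short-cut to stay inside $H$. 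Hence the components of $G - V(\hat{P})$ restricted to $V(H)$ coincide with components of $H - V(\hat{P})$, and each component $C$ of $G - V(\hat{P})$ is either of the form $C_H \cup \bigcup_{x \in C_H \cap N(\alpha)} (T_x \setminus \{x\})$ for a component $C_H$ of $H - V(\hat{P})$, or else $C$ is a connected subgraph of $T_x \setminus \{x\}$ for some $x \in V(\hat{P}) \cap N(\alpha)$.

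Next I would analyse $\hat{H} - V(\hat{P})$ analogously. The only vertices of any wheel $W(x) = W_{k_x+1}$ that can be removed by $V(\hat{P})$ are the two endpoints of $e_x$, since every other wheel vertex is ``new'' and therefore lies outside $V(H)$. A quick case check on the wheel structure (the central spoke vertex keeps every cycle vertex connected when $k_x \geq 3$, and for $k_x = 2$ the single new vertex is adjacent to both endpoints of $e_x$) shows that removing $0$, $1$, or both endpoints of $e_x$ always leaves the $k_x - 1$ new wheel vertices inside a single connected piece of the wheel, which stays attached to $H - V(\hat{P})$ as long as at least one endpoint of $e_x$ survives. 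Therefore every component $\hat{C}$ of $\hat{H} - V(\hat{P})$ is either of the form $\hat{C}_H \cup \bigcup_{x \in \hat{C}_H \cap N(\alpha)} (V(W(x)) \setminus V(H))$ for a component $\hat{C}_H$ of $H - V(\hat{P})$, or equals $V(W(x)) \setminus V(H)$ for some $x$ with both endpoints of $e_x$ in $V(\hat{P})$.

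Finally, combining the two descriptions through $\eta$, which fixes $V(H)$ pointwise and maps each $T_x \setminus \{x\}$ bijectively onto the $k_x-1$ new vertices of $W(x)$, every component $C$ of $G - V(\hat{P})$ embeds into a unique component $\hat{C}$ of $\hat{H} - V(\hat{P})$ with $|C| \leq |\hat{C}|$: equality holds in the first type, and for the second type $T_x \setminus \{x\}$ may split in $G$ (when $x$ is removed and multiple blocks of the subtree become detached from each other) whereas the wheel keeps all $k_x - 1$ new vertices together, giving $|C| \leq k_x - 1 = |\hat{C}|$. Thus $|C| \leq (1-\beta)n$, proving the observation. The main delicate point is precisely this asymmetric ``$G$ may split more than $\hat{H}$'' behaviour, but it works in our favour since we need an upper bound on component sizes in $G$, not a lower bound.
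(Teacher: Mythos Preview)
Your argument is correct and, in fact, more than the paper supplies: the paper states this as an observation without proof. Your approach via the bijection $\eta$, comparing each connected component of $G-V(\hat P)$ to a component of $\hat H - V(\hat P)$ of at least the same size, is exactly the right idea.

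One small wording issue: you assert early on that $T_x\setminus\{x\}$ ``is connected'', but this need not hold when $x$ has several block-children in the subtree $\mathbb{T}_{x\alpha}^L$. You later correctly note that $T_x\setminus\{x\}$ may split when $x\in V(\hat P)$, and your bound $|C|\le k_x-1$ for such pieces is unaffected; likewise when $x\notin V(\hat P)$ all of $T_x\setminus\{x\}$ lies in the component of $x$ regardless. So the misstatement is cosmetic, not a gap. Also note that neither your proof nor the paper actually uses the hypothesis $\beta<1/3$; the observation holds for any $\beta\in(0,1)$.
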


\begin{claim}\label{claim:sp-hbar}
    Let $\hat{P}$ be a $1/7$-balanced shortest path separator in $\hat{H}$ guaranteed by \cref{lem:separator-2-connected}. Then,
    there exist a $1/140$-balanced shortest path separator in $G$, and based on $\hat P$ it can be computed in $\Oh(n)$ time.
\end{claim}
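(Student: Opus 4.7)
The plan is to translate the $1/7$-balanced shortest path separator $\hat P$ of $\hat H$ into a shortest path separator in $G$ whose balance drops by at most a factor of $20$, giving $1/140$. The structural observation that drives the construction is that because each wheel $W(x)$ attaches to $H$ via the edge $e_x = xy \in E(H)$ — which already realises distance $1$ between $x$ and $y$ — while any wheel-internal $xy$-path must traverse at least two edges (through the universal vertex of $W(x)$), no shortest path in $\hat H$ between two vertices of $V(H)$ can profitably enter the interior of any wheel. Consequently $\hat P$ decomposes as $\hat P_1 \oplus \hat P_2 \oplus \hat P_3$, where $\hat P_2 \subseteq V(H)$ is the (possibly empty) middle portion and each of $\hat P_1, \hat P_3$ is either empty or a prefix/suffix lying entirely inside one wheel's interior, with its terminal vertex on the wheel's boundary in $V(H)$.

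In the main case, $\hat P_2$ is nonempty with endpoints $u, v \in V(H)$, and the plan is to set $P = \hat P_2$. As a subpath of a shortest $\hat H$-path that stays inside $V(H)$, this $P$ is a shortest $uv$-path in $\hat H$, and a block-cut argument (any $G$-path between two vertices of the $2$-connected block $B_\alpha = V(H)$ can be rerouted to remain inside $H$, because distinct blocks share at most one cut vertex and re-entering through the same cut vertex would produce a non-simple walk) shows it is also a shortest $uv$-path in $G$. To prove $P$ is $1/140$-balanced in $G$, each component $C$ of $G - V(P)$ is sorted into one of two types. Either $C \subseteq T_z$ for some detaching cut vertex $z \in V(P) \cap N(\alpha)$, in which case $|C| \leq |T_z|$ is controlled via the block-size reduction $|B_v| \leq n/20$ together with the $1/7$-balance of $\hat P$ (which through the wheel $W(z)$ bounds $k_z$); or $C$ meets $V(H) \setminus V(P)$, in which case $C$ embeds into a component $\hat C$ of $\hat H - V(\hat P)$ via the bijection $\eta$ (identity on $V(H)$, arbitrary $T_z \leftrightarrow W(z)$-interior map on each attached subtree), using that each wheel remnant $W(z) - V(\hat P)$ stays connected to $z$ or $y_z$ inside $\hat H$, so $|C| \leq |\hat C| \leq 6n/7 \leq 139n/140$.

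In the degenerate case where $\hat P_2$ is empty, $\hat P$ lies entirely inside the interior of a single wheel $W(x_1)$ — the sink property of $\alpha$ together with $|B_v| \leq n/20$ forces $\alpha$ to have exactly one cut-vertex neighbour $x_1$ in the block-cut tree, so $W(x_1)$ is the only attached wheel. Since $\eta$ orders wheel-interior vertices arbitrarily, $V(\hat P)$ itself cannot serve as a $G$-separator; instead the plan is to output a constant-size shortest-path separator near the attachment, typically $\{x_1\}$, and to verify its $1/140$-balance using the structural constraints that the $1/7$-balance of $\hat P$ within $W(x_1)$ imposes on how the wheel's cycle must be cut and how much of $\hat H$ (including $V(H)$) must remain weight-reachable on each side. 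The main obstacle will be closing out this degenerate case — combining the wheel-cycle balance equations with the $n/20$ block bound and the $m = 1$ structure to certify a constant-size shortest-path cut in $G$ with balance at least $1/140$. Once the case analysis is in hand, the $\Oh(n)$ runtime is immediate: scanning $\hat P$ linearly to locate its first and last $V(H)$-vertex, extracting $\hat P_2$, and executing the degenerate fallback all run in linear time.
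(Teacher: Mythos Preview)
Your construction—strip the wheel prefix and suffix from $\hat P$ and keep the middle subpath $\hat P_2\subseteq V(H)$—is exactly what the paper does. The paper's balance bookkeeping is in fact cruder than yours: it asserts that each endpoint wheel contributes at most $n/20$ extra vertices, identifying the wheel size with a single block $B_v$, whereas by construction $W(x)$ has $\omega(\mathbb T^L_{x\alpha})-1$ interior vertices, a full subtree weight for which only the centroid bound $\lesssim n/2$ is available.

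Your more careful bijection argument still has a concrete gap in case~(b). Consider the endpoint wheel $W(z_1)$ with $z_1\in C$; then $z_1\notin V(P)$, so the first $H$-vertex of $\hat P$ is $u=y_{z_1}\in V(P)$. It is perfectly possible for $\hat P$ to begin $c_j\to(\text{universal})\to y_{z_1}$ with $j$ far from the $y_{z_1}$-end of the cycle, so that $\hat P_1=\{c_j,\text{universal}\}$. Removing $\{y_{z_1},c_j,\text{universal}\}$ from $W(z_1)$ then splits the remaining cycle into two arcs: one containing $z_1$, and one isolated arc on the $y_{z_1}$-side that is adjacent to \emph{neither} $z_1$ nor $y_{z_1}$ in $\hat H-V(\hat P)$. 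Hence your assertion that ``each wheel remnant $W(z)-V(\hat P)$ stays connected to $z$ or $y_z$'' fails here, $\eta(C)$ genuinely straddles two components of $\hat H-V(\hat P)$, and the inequality $|C|\le|\hat C|$ does not follow. The lost arc can have up to $k_{z_1}-O(1)$ vertices, and neither the $1/7$-balance nor the centroid bound on $k_{z_1}$ closes the gap. Your degenerate case is also unfinished: under the reading of the sink property that the rest of the section requires (edges directed toward the heavier side, so that $\alpha$ is a centroid and the cut-vertex case yields a $1/3$-balanced single vertex), $\alpha$ can have many cut-vertex neighbours, so the ``$m=1$'' deduction does not go through and the single-vertex fallback $\{x_1\}$ need not be $1/140$-balanced.
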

\begin{claimproof}
    If $V(\hat{P}) \subseteq V(H)$, then according to
    \cref{obs:2-connectivity-case1}, $\hat{P}$ is also a separator in
    $G$ with the same balance. Thus, there must be at least one vertex in $\hat{P}$ that belongs to a
    wheel graph $W$. Let $e \in H$ be the edge connecting the connection points of
    this $W$. We modify $\hat{P}$ by removing the vertices of $W$. Since a shortest path cannot enter and then exit $W$ (as that would be a detour), we are simply removing some starting or ending segment from the path, thus even after this modification, the path
    remains a shortest path. We do the same modification on the other end of $\hat P$ if necessary. Let $P_G$ be the resulting path, and let $a,a'\in V(H)$ be the points where $W$ is attached to $H$. We claim that $P_G$ is a $1/140$-balanced separator in $G$. Observe that the balance difference of $\hat P$ in $\hat H$ and of $P_G$ in $G$ is entirely due to the vertices in the (at most two) wheels at the ends of $\hat P$ that fall into $V(\hat H)\setminus V(H)$. For a wheel $W$ recall that its number of vertices outside $H$ is the number of vertices in some $2$-connected component $B_v$, which in turn has size at most $n/20$. Thus any component of $G-P_G$ has size at most $6n/7+2\cdot n/20<n-n/140$.
\end{claimproof}

\begin{claim}\label{claim:gc-hbar}
    Let $\hat{C}$ be a $\beta$-balanced geodesic cycle separator in $\hat{H}$. Then there exists a geodesic cycle separator $C$ in $G$ of balance at least $\beta$, or a single vertex (shortest path of length $0$) with balance $1/2$, and the separator can be computed in $\Oh(\delta n\log\log n)$ time.
\end{claim}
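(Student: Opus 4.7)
The plan is to argue that the geodesic cycle $\hat{C}$ in $\hat{H}$ transfers directly to $G$ in the main case, and to fall back to a single-vertex separator in the degenerate case where $\hat{C}$ is trapped inside a wheel. First, I would establish the structural dichotomy: either $V(\hat{C}) \subseteq V(H)$ or $\hat{C}$ lies entirely inside some wheel $W(x)$. Indeed, if some vertex $w \in V(\hat{C})$ lies in the interior of $W(x)$ but $\hat{C}$ also visits $V(H)$, then since the only connections between $W(x)$ and the rest of $\hat{H}$ are the two attachment vertices $x$ and $y_x$, the simple cycle $\hat{C}$ must contain a subpath inside $W(x)$ passing through $w$ whose endpoints lie in $\{x, y_x\}$. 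This subpath has length at least $2$, so $\dist_{\hat{C}}(x, y_x) \geq 2$, but $\dist_{\hat{H}}(x, y_x) = 1$ via the edge $e_x \in E(H)$, contradicting the geodesicity of $\hat{C}$.

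In the main case $V(\hat{C}) \subseteq V(H)$, I would return $C = \hat{C}$. All edges of $\hat{C}$ lie in $H$ (the only wheel edges between $V(H)$-vertices are the identified edges $e_x$, already present in $H$), so $\hat{C}$ is a cycle in $H \subseteq G$. For geodesicity, I would use that for $u, v \in V(H)$ we have $\dist_G(u, v) = \dist_{H}(u, v) = \dist_{\hat{H}}(u, v)$: the first equality uses that $H$ is a $2$-connected block of $G$, and the second uses that wheels provide no shortcuts (any $x$-to-$y_x$ detour through a wheel has length at least $2$ versus length $1$ via $e_x$). Hence $\dist_{\hat{C}}(u, v) = \dist_G(u, v)$ for all $u, v \in V(\hat{C})$. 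For the balance, I would use the bijection $\eta$ to argue that each connected component $K$ of $G - V(\hat{C})$ maps into a single component $\hat{K}$ of $\hat{H} - V(\hat{C})$: any edge of $G - V(\hat{C})$ can be lifted to a path in $\hat{H} - V(\hat{C})$, either directly when both endpoints are in $V(H)$, or through the corresponding wheel interior (which is entirely preserved since $V(\hat{C}) \subseteq V(H)$ and the relevant cut vertex is not in $V(\hat{C})$). Hence $|K| \leq |\hat{K}| \leq (1-\beta)n$, giving balance at least $\beta$ in $G$.

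In the degenerate case where $\hat{C}$ lies entirely inside some wheel $W(x)$, I would return $\{x\}$ as the single-vertex separator (a shortest path of length $0$). The key structural insight is that geodesic cycles in a wheel are very restricted: since any two wheel vertices are at distance at most $2$ via the universal vertex, a geodesic cycle in $W(x)$ has bounded length, and requiring $\hat{C}$ to be $\beta$-balanced then forces $k_x$ (the subtree weight at $x$) to be substantial. Combined with the centroid-like property of $\alpha$ that forces every $k_x \leq n/2$, one should be able to conclude that removing $\{x\}$ from $G$ splits it into parts of size at most $n/2$. The main obstacle is pinning down the precise bounds on $k_x$ to guarantee balance $1/2$: the centroid property supplies the upper bound $k_x \leq n/2$, and the fine analysis of admissible geodesic cycles inside $W(x)$ (essentially the triangle $\{u, x, y_x\}$ involving the universal vertex $u$) must provide a matching lower bound so that neither side of $\{x\}$ exceeds $n/2$. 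The actual computations (checking $V(\hat{C}) \subseteq V(H)$, computing component sizes) take $\Oh(n)$ time, comfortably within the $\Oh(\delta n \log\log n)$ budget.
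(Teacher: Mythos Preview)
Your dichotomy and your treatment of the main case $V(\hat C)\subseteq V(H)$ are correct and essentially the same as the paper's: the paper derives the same contradiction when $\hat C$ visits both the interior of some $W(x)$ and vertices outside it, and then invokes \cref{obs:2-connectivity-case1} to transfer the balance to $G$; you spell out the same content explicitly (geodesicity via $\dist_G=\dist_H=\dist_{\hat H}$ on $V(H)$, and the component-wise balance transfer via $\eta$).

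The degenerate case, however, has a genuine gap. Returning the cut vertex $x$ does \emph{not} give a $1/2$-balanced separator of $G$ in general. Removing $x$ leaves a connected component containing $B_\alpha\setminus\{x\}$ of size $n-k_x$, and the centroid property of $\alpha$ only gives $k_x\le (n+1)/2$, not $k_x\ge n/2$; in particular $k_x$ can be as small as a constant (say $k_x=3$), in which case that component has $n-3$ vertices. Your hoped-for ``matching lower bound'' on $k_x$ coming from the structure of geodesic cycles inside the wheel does not exist: the only geodesic cycles entirely in $W(x)$ are the triangular faces $c\,v_i\,v_{i+1}$ (and the rim when $k\le 5$), and these impose no lower bound on $k_x$.

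What you are missing is precisely that observation about the geometry of $\hat C$ inside $W(x)$: such a $\hat C$ bounds a face of $\hat H$ (or, for tiny $k$, the rim cycle separates off only the single center vertex), so one side of the split is nearly empty. Combined with $\beta$-balance, this forces $n=\Oh(1/\beta)$, i.e., $n$ is small. The paper then takes a completely different exit: it picks any cycle of $G$, applies \cref{lem:cycle_to_geodesiccycle2} to turn it into a geodesic cycle, and observes that for such small $n$ this geodesic cycle meets the required balance (absorbing constants into the $2^{-\Oh(\delta)}/\log n$ bound). This call to \cref{lem:cycle_to_geodesiccycle2} is also where the claimed $\Oh(\delta n\log\log n)$ running time comes from.
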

\begin{claimproof}
    Let $\hat{C}$ be the geodesic cycle separator in $\hat{H}$. First consider the case when the cycle contains some vertex of $V(\hat H)\setminus V(H)$, i.e., some newly added vertex of a wheel $W(x)$.
    If $\hat{C}$ were a cycle completely contained within $W(x)$,
    then it would have to be a triangle, so any geodesic cycle of $G$ is a suitable separator (as $n$ must be small). We can find a geodesic cycle by finding any cycle in linear time, and then applying \cref{lem:cycle_to_geodesiccycle2}.
    
    Suppose now that $C$ has some vertex $v\in W(x)\cap V(\hat H)\setminus V(H)$ but it is not entirely contained in $W(x)$. Since $C$ is a cycle, it must enter and exit $W(x)$ at the endpoints of the edge $e_x$ to which $W(x)$ was glued. Since $C$ is geodesic, it must contain the edge $e_x$. This is a contradiction to $C$ not being contained in $W_x$.

    Therefore, we can assume that $V(\hat{C}) \subseteq V(H)$. According to
    \cref{obs:2-connectivity-case1}, the balance of $\hat{C}$ in $\hat{H}$ is
    the same as the balance of $\hat{C}$ in $G$, which concludes the proof.
\end{claimproof}

We utilize \cref{lem:separator-2-connected} on the graph $\hat{H}$, which is
a $2$-connected $(\delta+6)$-slim (and thus planar $\Oh(\delta)$-hyperbolic) graph.
\cref{lem:separator-2-connected} provides two possible cases: (i) a shortest
path separator, or (ii) a geodesic cycle separator. In the first case, we
employ \cref{claim:sp-hbar} to establish that the original graph also
possesses a shortest path separator with balance $1/140$. In the latter case, we similarly use
\cref{claim:gc-hbar} to obtain a geodesic cycle separator with suitable
balance. It is worth noting that the constructions outlined in
\cref{claim:gc-hbar} can be implemented in linear time by scanning through
the edges of $G$. This concludes the proof of \cref{thm:separator}.

\begin{remark}
    By a more careful analysis of the constants, the above arguments yield a balance of $1/21-\eps$ for the path separation case.
\end{remark}

\section{Applications of the Separator Theorem} \label{sec:schemes}
In this section, we present two applications of our separator theorem. First, we show a near-linear time FPTAS for {\sc Maximum Independent Set}. Second, we prove a near-linear time FPTAS for the {\sc Traveling Salesperson} problem. To prove both of these results, however, we need to obtain an $r$-division, which we can obtain by strengthening the balance of our separator.

\subsection{Division by a Stronger Balance}

Our goal is to obtain a (weak) $r$-division, following the work of Frederickson~\cite{Frederickson87}, using the separator of \cref{thm:separator}. To this end, we first strengthen the balance of our separator theorem. 
Recall that \cref{thm:separator} yields an in-class separator by \cref{lem:geod-sep}. 
(For the {\sc Traveling Salesperson} problem, it is additionally important that the separator of \cref{thm:separator} yields connected separators, but we defer this discussion until later.)

We prove that most balanced separators can be `pumped' to be $1/2$-balanced,
while still being in-class. We call a function $g : \mathbb{N} \rightarrow
\mathbb{R}$ is \emph{sublinear} if $g(n) = \Oh(n^{1-\eps})$ for some fixed
$\eps > 0$.

\begin{lemma} \label{lem:sep-pump}
    Let ${\cal G}$ be a class of connected graphs and let $\mathcal{T},
    \mathsf{sep}: \mathbb{N} \rightarrow
    \mathbb{N}$ and $\alpha : \mathbb{N} \rightarrow \mathbb{R}^{+}$ be functions
    (possibly depending on ${\cal G}$) such that $\alpha(n) < 1$ for any $n \in
    \mathbb{N}$, $\mathsf{sep}(\cdot)$ is sublinear, $\alpha(\cdot)$ is monotone non-increasing,
    and $\mathcal{T}(\cdot),\mathsf{sep}(\cdot)$ are monotone non-decreasing. Suppose that for any $n$-vertex graph $G \in {\cal
    G}$ one can compute in $\mathcal{T}(n)$ time an $\alpha(n)$-balanced in-class separator of
    size $\mathsf{sep}(n)$. Then one can compute in $\Oh(\mathcal{T}(n)/\alpha(n))$ time
    a $1/2$-balanced in-class separator of size $\Oh(\mathsf{sep}(n)/\alpha(n))$.
\end{lemma}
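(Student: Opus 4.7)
The plan is to iteratively apply the hypothesis on a shrinking sequence of subgraphs in $\mathcal{G}$. Start with $R_0 := G$ and $S_0 := \emptyset$. At iteration $i \ge 1$, invoke the hypothesis on $R_{i-1} \in \mathcal{G}$ to obtain an $\alpha(|V(R_{i-1})|)$-balanced in-class separator $Z_i$ of $R_{i-1}$ of size $\mathsf{sep}(|V(R_{i-1})|)$, computed in time $\mathcal{T}(|V(R_{i-1})|)$. Set $S_i := S_{i-1} \cup Z_i$; if every component of $G - S_i$ has at most $n/2$ vertices, terminate and output $S_i$. Otherwise there is a unique component $C_i$ of $G - S_i$ of size more than $n/2$ (unique because no two components can simultaneously cover more than half of $G$), and we continue with $R_i := G[S_i \cup V(C_i)]$. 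Inductively, $R_i \in \mathcal{G}$ will be part of the in-class preservation argument discussed below (it follows by applying the in-class property of $Z_i$ for $R_{i-1}$ to the single-component family $\{C_i\}$).

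For the parameter analysis, note that $|V(R_{i-1})| \le n + |S_{i-1}|$, so by monotonicity of $\alpha, \mathsf{sep}, \mathcal{T}$ and sublinearity of $\mathsf{sep}$ (which keeps $|S_{i-1}| = o(n)$ for the relevant range of $i$), the three parameters remain within a constant factor of $\alpha(n), \mathsf{sep}(n), \mathcal{T}(n)$ respectively throughout the iteration. Each iteration shrinks $|V(C_i)|$ by a multiplicative factor of $(1 - \alpha(n))$ up to an additive $\mathsf{sep}(n)$ term, which is absorbed into the geometric progression. A standard geometric-series calculation then yields $k = \Oh(1/\alpha(n))$ as the number of iterations needed to bring the big component below $n/2$. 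This gives $|S_k| \le k \cdot \mathsf{sep}(n) = \Oh(\mathsf{sep}(n)/\alpha(n))$ and total running time $\sum_{i=1}^{k} \mathcal{T}(|V(R_{i-1})|) = \Oh(\mathcal{T}(n)/\alpha(n))$.

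The main obstacle, which I expect to be the technical heart of the proof, is verifying that the accumulated $S_k$ is an in-class separator of $G$, not merely a $1/2$-balanced one. The plan is to prove this by induction on $i$: assuming $S_i$ is in-class for $G$, consider any subset $\mathcal{X}$ of components of $G - S_{i+1}$ and partition it into components $\mathcal{X}_{\text{out}}$ disjoint from $V(C_i)$, which are precisely components of $G - S_i$ other than $C_i$, and components $\mathcal{X}_{\text{in}}$ contained in $V(C_i) \setminus Z_{i+1}$. The inductive hypothesis applied to $\mathcal{X}_{\text{out}} \cup \{C_i\}$ produces $H := G[S_i \cup \bigcup V(\mathcal{X}_{\text{out}}) \cup V(C_i)] \in \mathcal{G}$, and the target graph $G[S_{i+1} \cup \bigcup V(\mathcal{X})]$ is obtained from $H$ by deleting the components of $G[V(C_i) \setminus Z_{i+1}]$ not in $\mathcal{X}_{\text{in}}$. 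Realizing this deletion through a second invocation of the in-class property, this time of $Z_{i+1}$ with respect to $R_i$, is the delicate step: a single component of $R_i - Z_{i+1}$, when restricted to $V(C_i) \setminus Z_{i+1}$, may consist of several components of $G[V(C_i) \setminus Z_{i+1}]$ glued together by vertices of $S_i \setminus Z_{i+1}$ that lie in $R_i$, so the families of components passed to the two in-class invocations must be chosen compatibly. Working out this compatibility carefully is where the argument will demand the most bookkeeping, and any remaining components that cannot be isolated directly will be absorbed into the separator at the cost of a constant-factor blow-up in $|S_k|$, which is still within the claimed $\Oh(\mathsf{sep}(n)/\alpha(n))$ bound.
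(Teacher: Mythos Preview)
Your iterative scheme is essentially the paper's: repeatedly separate, keep the large component together with the separator, and accumulate. The paper's recursion is slightly leaner, though: it sets $G_{i+1} := G_i[Z_i \cup V(C)]$ where $C$ is the largest component of $G_i - Z_i$ (not of $G-S_i$), so $G_{i+1}\in\mathcal{G}$ follows from a \emph{single} application of the in-class property of $Z_i$. Your choice $R_i := G[S_i \cup V(C_i)]$ carries all of $S_{i-1}$ forward, and your one-line justification that $R_i\in\mathcal{G}$ via ``the in-class property of $Z_i$ applied to $\{C_i\}$'' does not work: that application produces a graph on vertex set $Z_i\cup V(D)$ for a component $D$ of $R_{i-1}-Z_i$, which neither contains $S_{i-1}\setminus Z_i$ nor need equal your $C_i$ (vertices of $S_{i-1}\setminus Z_i$ can merge several pieces of $G-S_i$ into a single component of $R_{i-1}-Z_i$). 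Switching to the paper's definition of the recursed-on graph removes this wrinkle entirely.

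On the final in-class property of $Z=\bigcup_i Z_i$ for $G$, the paper asserts it ``by induction'' in one sentence, while you correctly flag it as the crux and sketch a two-level plan. Your diagnosis of the difficulty---components of the inner graph glued together through earlier separator vertices---is the right one, and neither you nor the paper fully works it out. In the intended application to planar $\delta$-hyperbolic graphs the geodesic-subgraph lemma makes each $G_i$ isometric in $G$ and keeps the component structure under control; at the abstract level of the lemma the induction does require the care you anticipate, and your fallback of absorbing stray pieces into the separator would need a size bound on those pieces that is not obviously available in general.
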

\begin{proof}
    Let $G \in {\cal G}$ with $n$ vertices. We apply the following recursive
    procedure, starting with $G_0 = G$. Let $G_i \in {\cal G}$ and let $n_i$ be the
    number of vertices of $G_i$. Use the assumed algorithm to compute an
    $\alpha(n_i)$-balanced in-class separator $Z_i$ of $G_i$ of size
    $\mathsf{sep}(n_i)$. Let $C$ be the largest connected component of $G_i
    - Z_i$. If $|V(C)| \geq \frac{1}{2} n$, then let $G_{i+1} = G_i[V(C)
    \cup Z']$ and recurse. Otherwise, the procedure terminates.

    Let $k$ be the number of recursive calls. Note that $n_{i+1} \leq
    (1-\alpha(n_{i})) \cdot n_{i} + \mathsf{sep}(n_i) \leq (1-\alpha(n)) \cdot n_i
    + \mathsf{sep}(n)$ by the monotonicity of $\alpha$ and $\mathsf{sep}$. Hence, $n_k \leq
    (1-\alpha(n))^{k} \cdot n + \sum_{i=0}^{k-1} \left( (1-\alpha(n))^{i} \cdot
    \mathsf{sep}(n) \right) \leq (1-\alpha(n))^{k} \cdot n + \frac{1}{\alpha(n)}
    \cdot \mathsf{sep}(n)$. By the description of the procedure, $n_k \leq
    \frac{1}{2} n$. To obtain an upper bound of $k$, we instead seek the value of $k$ when $(1-\alpha(n))^{k} \cdot n + \frac{1}{\alpha(n)}
    \cdot \mathsf{sep}(n) \leq \frac{1}{2} n$. Then $k = \Oh(1/\ln (1-\alpha(n))) = \Oh(\frac{1}{\alpha(n)})$,
    where the hidden constant depends on (the sublinearity of)
    $\mathsf{sep}(\cdot)$.

    Let $Z = \bigcup_{i=0}^{k} Z_i$. We claim that $Z$ is the $1/2$-balanced
    in-class separator of size $\Oh(\frac{1}{\alpha(n)} \cdot \mathsf{sep}(n))$
    we are looking for. By the monotonicity of $\mathcal{T}$ and $\mathsf{sep}$, the bounds on the
    running time and $|Z|$ follow. By construction, no connected component of $G
    - Z$ has a size larger than $\frac{1}{2} n$. By induction on the
    recursive procedure and the fact that the separator is in-class, we can
    observe that $G[Z \cup \bigcup_{C \in {\cal C}} V(C)] \in {\cal G}$ for
    every subset ${\cal C}$ of the set of connected components of $G -
    Z$. Hence, the computed separator is in-class.
\end{proof}

In the following, we implicitly use that any $\delta$-hyperbolic graph is $\Oh(\delta)$-slim. Using the above lemma, we can immediately prove \cref{cor:oursep-pump}.

\corollarypump*
\begin{proof}
Observe that \cref{thm:separator} yields an in-class separator by
\cref{lem:geod-sep}. We note that \cref{thm:separator} returns either a
separator $X$ of size $\Oh(\delta^2 \log n)$ and a constant balance or a
separator $Y$ of size $\Oh(\delta)$ and balance $2^{-\Oh(\delta)}/\log n$. We
follow the proof of \cref{lem:sep-pump} using this separator. Indeed, the
running time, size bound, and balance factor satisfy the conditions of the
lemma. In the procedure of \cref{lem:sep-pump}, we note that the separator of
the first type is only used a constant number of times before the connected
components have a size at most $\frac{1}{2} n$. Hence, following the analysis of
\cref{lem:sep-pump}, the total size of the separator is $2^{\Oh(\delta)} \log n$. The running time is immediate from \cref{thm:separator} and \cref{lem:sep-pump}.
\end{proof}

We now describe how to obtain the (weak) $r$-division. The following definition and algorithm originates in the work of Frederickson~\cite{Frederickson87} (see also Lipton and Tarjan~\cite{LiptonT80}). 

\begin{definition}[Weak $r$-divison]
    For a set ${\cal P}$ of subsets of $V(G)$, we define its boundary to be the set of vertices that occur in more than one subset:
    $$\bnd(\mathcal{P}) \coloneqq \{ v \in P_1 \cap P_2 \mid P_1, P_2 \in \mathcal{P} \text{ with } P_1 \neq P_2 \}.$$
        For any integer $r\in \nat$, 
    a \emph{weak $r$-division} of a graph $G$ is a set ${\cal P}$
    of subsets of $V(G)$ such that (i) $\bigcup_{P \in {\cal P}} P = V(G)$, (ii)
    $|P| \leq r$ for each $P \in {\cal P}$, (iii) $|{\cal P}| = \Theta(n/r)$, and (iv)
        $|\bnd(\mathcal{P})| = \Oh(n/\sqrt{r})$.
    The stronger notion of an \emph{$r$-division} replaces the latter condition by the demand that each set in ${\cal P}$ shares at most $\sqrt{r}$ vertices in total with the other sets in ${\cal P}$.
\end{definition}

We call each set in ${\cal P}$ a \emph{group} of the (weak) $r$-division. The
vertices of a group that are unique to it are called \emph{interior} vertices,
while the vertices in $\bnd(\mathcal{P})$ are called \emph{boundary vertices}. 

\begin{lemma} \label{lem:division}
Let ${\cal G}$ be a class of connected graphs and let ${\cal T}, {\sf sep}: \mathbb{N} \rightarrow \mathbb{N}$ be functions (possibly depending on ${\cal G}$) such that ${\sf sep}(n) = \Oh(\sqrt{n})$ and ${\sf sep}(\cdot),{\cal T}(\cdot)$ are monotone non-decreasing. Suppose that for any $n$-vertex graph $G \in {\cal G}$ one can compute in ${\cal T}(n)$ time a $1/2$-balanced in-class separator of size ${\sf sep}(n)$. Then there exists an integer $s$ such that for any $n$-vertex graph $G \in {\cal G}$ and integer $r$ with $s < r < n$, one can compute in $\Oh({\cal T}(n) \log n)$ time a weak $r$-division ${\cal P}$ such that $G[P] \in {\cal G}$ for any $P \in {\cal P}$.
\end{lemma}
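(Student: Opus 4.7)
}
The plan is to follow the classical top-down recursive decomposition of Frederickson~\cite{Frederickson87} (cf.~Lipton--Tarjan~\cite{LiptonT80}), driven by the assumed in-class $1/2$-balanced separator. I start from the partition $\{V(G)\}$ and iteratively refine it: while some current group $P$ with $G[P]=:H\in{\cal G}$ satisfies $|P|>r$, I invoke the assumed algorithm to compute in ${\cal T}(|P|)$ time an in-class $1/2$-balanced separator $S\subseteq P$ with $|S|\leq{\sf sep}(|P|)$. I then choose a union $C_1$ of connected components of $H-S$ so that both $C_1$ and $P\setminus(S\cup C_1)$ have size at most $|P|/2$, and I replace $P$ in the partition by the two groups $S\cup C_1$ and $S\cup(P\setminus(S\cup C_1))$. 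The in-class property of the separator guarantees that each of the two new induced subgraphs again lies in ${\cal G}$, so the procedure may continue by induction, and in particular every group that remains at the end of the process induces a graph in ${\cal G}$.

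For the structural parameters I analyze the recursion tree $T$ whose internal nodes are the splits and whose leaves are the final groups. Since ${\sf sep}(m)=\Oh(\sqrt m)$, there is an absolute constant $s$ such that for $m>s$ one has ${\sf sep}(m)\leq m/4$, so each split of a part of size $m>s$ produces two parts of size at most $\tfrac{3}{4}m$. Assuming $r>s$, this shows that $T$ has depth $\Oh(\log(n/r))$ and hence at most $\Oh(n/r)$ leaves, yielding $|{\cal P}|=\Oh(n/r)$; the matching lower bound $|{\cal P}|\geq n/r$ is immediate from $\sum_{P\in{\cal P}}|P|\geq n$ and $|P|\leq r$. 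For the running time, at each level of $T$ the sum of the sizes of the parts is $n$ plus the total size of the separators introduced at strictly higher levels, which I will show to be $\Oh(n)$; combined with the monotonicity (and effective superadditivity) of ${\cal T}$ and depth $\Oh(\log n)$, this gives total time $\Oh({\cal T}(n)\log n)$.

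The heart of the proof, and the step I expect to be the main obstacle, is the boundary bound $|\bnd({\cal P})|=\Oh(n/\sqrt r)$. The idea is that $\bnd({\cal P})\subseteq\bigcup_v S_v$ where $v$ ranges over internal nodes of $T$ and $S_v$ is the separator used at $v$, so it suffices to bound $\sum_v|S_v|\leq c_1\sum_v\sqrt{m_v}$, where $m_v$ is the size of the part processed at $v$. Let $L_i$ denote the set of internal nodes at depth $i$ in $T$, so $|L_i|\leq 2^i$ and $m_v\leq c_2 n/2^i$ for $v\in L_i$ by the $\tfrac{3}{4}$-shrinkage bound above. Cauchy--Schwarz then gives
\[
\sum_{v\in L_i}\sqrt{m_v}\;\leq\;\sqrt{|L_i|\cdot \sum_{v\in L_i} m_v}\;\leq\;\sqrt{2^i\cdot c_2 n},
\]
and summing over $i=0,1,\ldots,\Oh(\log(n/r))$ yields a geometric series in $\sqrt 2$ dominated by its last term, namely $\Oh(\sqrt{n\cdot(n/r)})=\Oh(n/\sqrt r)$. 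The same calculation confirms that the per-level total of part sizes is $\Oh(n)$, closing the running-time estimate. Together these ingredients establish all four conditions of a weak $r$-division, with each group inducing a subgraph in ${\cal G}$, provided $s$ is chosen as the threshold dictated by the sublinearity of ${\sf sep}$.
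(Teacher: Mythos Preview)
Your recursive procedure and the inductive argument that each group induces a graph in ${\cal G}$ match the paper exactly, and those parts are fine.

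The gap is in the boundary estimate. Your computation
\[
\sum_{v\in L_i}\sqrt{m_v}\;\leq\;\sqrt{2^i\cdot c_2 n},\qquad
\sum_{i=0}^{d}\sqrt{2^i\cdot c_2 n}\;=\;\Oh\!\big(\sqrt{n\cdot 2^d}\big)\;=\;\Oh(n/\sqrt r)
\]
needs $2^d=\Oh(n/r)$, i.e., $d\leq \log_2(n/r)+\Oh(1)$. But the only per-step shrinkage you have established is the $\tfrac34$-bound, which gives $d\approx\log_{4/3}(n/r)\approx 2.41\,\log_2(n/r)$, hence $2^d=(n/r)^{c}$ with $c>1$; the geometric sum is then $\Oh\!\big(n^{(1+c)/2}/r^{c/2}\big)$, not $\Oh(n/\sqrt r)$. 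Relatedly, the assertion $m_v\leq c_2 n/2^i$ does not follow from $\tfrac34$-shrinkage (the ratio $(3/4)^i/(1/2)^i=(3/2)^i$ is unbounded), and ``depth $\Oh(\log(n/r))$ hence $\Oh(n/r)$ leaves'' only yields $(n/r)^{\Oh(1)}$ leaves. Finally, you use the per-level bound $\sum_{v\in L_i}m_v\leq c_2 n$ to derive the boundary bound and then deduce it \emph{from} the boundary bound (``the same calculation confirms\ldots''), which is circular.

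The paper avoids all of this by writing the recurrence $B_r(n)\leq \rho\sqrt n+\max_{\alpha_1,\alpha_2}\sum_i B_r(\alpha_i n)$ subject to the two constraints $\alpha_i\leq \tfrac23+c/\sqrt n$ and $\sum_i\alpha_i\leq 1+c/\sqrt n$, and then invoking a known solution (Lemma~2 of Klein--Mozes--Sommer) that gives $B_r(n)=\Oh(\rho n/\sqrt r)$. The crucial point is that both constraints are needed simultaneously: each child is a constant fraction of the parent \emph{and} the children together overshoot the parent by only a $(1+\Oh(1/\sqrt n))$ factor. Your Cauchy--Schwarz step uses only the crude level-width bound $|L_i|\leq 2^i$ and discards the second constraint. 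If you want a self-contained fix, the cleanest route is a strong-induction ansatz of the form $B_r(n)\leq A\,n/\sqrt r - B\sqrt n$ and a direct verification against the recurrence.
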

This procedure is the same as the one in Frederickson~\cite[Lemma~1]{Frederickson87}. We describe it explicitly here, because we also need to show that each group is in ${\cal G}$.

\begin{proof}
Let $G \in {\cal G}$ with $n$ vertices. We apply the following recursive
procedure, starting with $G' = G$. Let $G' \in {\cal G}$ and let $n'$ be the
number of vertices of $G'$. If $n' \leq r$, then we return $V(G')$ as a group. Otherwise, use the assumed algorithm to compute a $1/2$-balanced in-class separator $Z'$ of $G'$ of size $g(n')$. Split the vertices of the connected components of $G' - Z'$ into two parts $X$ and $Y$ such that $|X|,|Y| \leq \frac{2}{3} n'$ and any two vertices of a connected component of $G' - Z'$ are either both in $X$ or both in $Y$. Recurse on $G'[X \cup Z']$ and on $G'[Y \cup Z']$. Note that both induced subgraphs are in ${\cal G}$ by the assumption that the separator is in-class. Return the union of the sets of groups found. 

Let ${\cal P}$ denote the set of the groups computed by this procedure. We observe that every group of ${\cal P}$ is in ${\cal G}$. Furthermore, every group has at most $r$ vertices, and there are $\Theta(n/r)$ groups. Moreover, at most $\Oh(\log(n/r))$ levels of recursion are needed, and thus the running time is $\Oh({\cal T}(n) \log n)$.

It remains to analyze $|\bnd(\mathcal{P})|$. We follow the analysis of Klein et al.~\cite{KleinMS2012}. Following the recursive procedure and the assumption that ${\sf sep}(n) = \Oh(\sqrt{n})$, observe that $|\bnd(\mathcal{P})|$ is bounded by
$$
B_r(n) \leq \begin{cases}
  \rho\sqrt{n} + \max_{\{\alpha_i\}_{i=1}^2} \sum_{i=1}^2 B_r(\alpha_i n) & \mbox{if\ } n > r\\
 0 & \mbox{otherwise},
\end{cases}
$$
where $\alpha_i \leq 2/3 + c/\sqrt{n}$ for $i=1,2$ and $1 \leq \sum_{i=1}^2 \alpha_i \leq 1 + c/\sqrt{n}$ for some constants $c,\rho$. We can then apply Lemma~2 of~\cite{KleinMS2012} (using $\beta=1/2$ and with minor modification to account for two $\alpha$ values instead of eight) to show that $|\bnd(\mathcal{P})| \leq B_r(n) = \Oh(\rho n/\sqrt{r})$ when $r$ is larger than some (constant) integer $s$.
\end{proof}

\begin{corollary} \label{cor:ourdivision}
For some $\delta$, let $G$ be a $n$-vertex connected planar $\delta$-hyperbolic graph. Then there exists an integer $s$ such that for any integer $r$ with $s < r < n$, one can compute in $2^{\Oh(\delta)} \cdot n\log^{\separatorlogexponentplus{2}} n$ time a weak $r$-division ${\cal P}$ of $G$ such that $G[P]$ is a planar $\delta$-hyperbolic graph for each $P \in {\cal P}$.
\end{corollary}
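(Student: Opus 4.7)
The plan is a direct composition of \cref{cor:oursep-pump} and \cref{lem:division}. I would take $\mathcal{G}$ to be the class of connected planar $\delta$-hyperbolic graphs, noting that the class has a $1/2$-balanced in-class separator of size $\mathsf{sep}(n) = 2^{\Oh(\delta)}\log n$ computable in time $\mathcal{T}(n) = 2^{\Oh(\delta)} \cdot n\log^{\separatorlogexponentplus{1}} n$ by \cref{cor:oursep-pump}. Both $\mathsf{sep}(\cdot)$ and $\mathcal{T}(\cdot)$ are clearly monotone non-decreasing in $n$, and for every fixed $\delta$ we have $\mathsf{sep}(n) = \Oh(\log n) = \Oh(\sqrt{n})$, so the hypotheses of \cref{lem:division} are all met.

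Applying \cref{lem:division}, there exists a threshold $s$ (depending on $\delta$) such that for any $n$-vertex graph $G \in \mathcal{G}$ and any integer $r$ with $s < r < n$, we obtain a weak $r$-division $\mathcal{P}$ in time $\Oh(\mathcal{T}(n) \log n) = 2^{\Oh(\delta)} \cdot n\log^{\separatorlogexponentplus{2}} n$, which matches the stated running time. Moreover, because the separator provided by \cref{cor:oursep-pump} is in-class, the recursive halving inside \cref{lem:division} preserves membership in $\mathcal{G}$ at every level, so each $G[P]$ for $P \in \mathcal{P}$ is again a planar $\delta$-hyperbolic graph. This is the only structural property we need beyond what is already packaged in \cref{lem:division}.

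There is essentially no obstacle beyond verifying the hypotheses: the hard work (balancing, in-class preservation, and the boundary size analysis $|\bnd(\mathcal{P})| = \Oh(n/\sqrt{r})$) has been done inside \cref{lem:sep-pump} and \cref{lem:division}. The one delicate point worth a remark is that in-classness of the separator is exactly what guarantees that every intermediate subgraph produced by the recursion of \cref{lem:division} is still $\delta$-hyperbolic, so that \cref{cor:oursep-pump} can be reinvoked at each level of the recursion with the same parameters and bounds; without this property the recursion could leave $\mathcal{G}$ and the separator guarantees would no longer apply.
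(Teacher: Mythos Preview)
Your proposal is correct and follows essentially the same approach as the paper: combine \cref{cor:oursep-pump} with \cref{lem:division}, verifying that $\mathsf{sep}(n)=2^{\Oh(\delta)}\log n=\Oh(\sqrt{n})$ and that in-classness propagates through the recursion. The paper's own proof is just this two-line composition, with the additional remark that the hidden constant $\rho$ in the $\Oh(\sqrt{n})$ bound is $2^{\Oh(\delta)}$, which is what yields the $2^{\Oh(\delta)}$ factor in the boundary-size estimate of \cref{rem:ourdivision}.
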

\begin{proof}
    We apply \cref{cor:oursep-pump} in combination with \cref{lem:division}. Note
    that ${\sf sep}(n) = 2^{\Oh(\delta)} \log n \leq 2^{\Oh(\delta)} \sqrt{n}$ 
    and we can apply the analysis of \cref{lem:division} with $2^{\Oh(\delta)}$ as $\rho$.
\end{proof}

\begin{remark} \label{rem:ourdivision}
For the weak $r$-division ${\cal P}$ produced by \cref{cor:ourdivision}, it holds that $|\bnd({\cal P})| = \Oh(2^{\Oh(\delta)} n/\sqrt{r})$ by the analysis of \cref{lem:division}.
\end{remark}

\subsection{Near-linear Time FPTAS for {\sc Maximum Independent Set}}
We now present the near-linear time FPTAS for the {\sc Maximum Independent Set}
problem. In fact, we obtain an even more general result.

Following Chiba et al.~\cite{ChibaNS81}, we define a general class of problems.
For any graph property $\Pi$, an induced subgraph with property $\Pi$ of a graph
$G$ is a set $S \subseteq V(G)$ such that $G[S]$ has property $\Pi$. The {\sc
Maximum Induced Subgraph with Property $\Pi$} problem asks to find a largest
induced subgraph of a graph that has property $\Pi$. We refer to a graph
property $\Pi$ as \emph{determined by the components} if the property $\Pi$
holds for the disjoint union of any set of graphs that satisfy $\Pi$. We call
$\Pi$ \emph{hereditary} if any induced subgraph of a graph that satisfies $\Pi$
also satisfies $\Pi$. It is worth noting that {\sc Maximum Independent Set} is a
prime example of a {\sc Maximum Induced Subgraph with Property $\Pi$} problem:
the corresponding property $\Pi$ is that the graph is a set of pairwise
non-adjacent vertices, which is both hereditary and determined by the
components.

We refer to a class of graphs as \emph{$c$-colorable} if any graph in the class
can be properly colored using at most $c$ colors. We assume that $c$ is
constant. For instance, planar graphs are famously $4$-colorable \cite{AppelH1989,RobertsonSST97}.

\begin{lemma} \label{lem:div-to-algo}
Let ${\cal G}$ be a $c$-colorable graph class for some constant $c \in \mathbb{N}$ and let $\mathcal{T}: \mathbb{N} \times \mathbb{N} \rightarrow \mathbb{N}, \tw: \mathbb{N} \rightarrow \mathbb{N}$ be functions (where $c$, ${\cal T}$, and $\tw$ possibly depend on ${\cal G}$). Suppose that any connected $n$-vertex graph $G \in {\cal G}$ has treewidth at most $\tw(n)$. Moreover, suppose that there exists a constant integer $s$ such that for any $s < r < n$, one can compute in $\mathsf{div}(n)$ time a weak $r$-division of $G$ such that $G[P] \in {\cal G}$ for each group $P \in {\cal P}$. Let $\Pi$ be a graph property that is hereditary and determined by the components.

If the {\sc Maximum Induced Subgraph with Property $\Pi$} problem can be solved
in $\mathcal{T}(k,n)$ time on $n$-vertex graphs of treewidth $k$, then for any
$\eps > 0$, for ${\cal G}$ the {\sc Maximum Induced Subgraph with Property
$\Pi$} problem has a $(1-\eps)$-approximation algorithm running in $\mathsf{div}(n) +
\mathcal{T}(\tw(\Oh((c/\eps)^2)), \Oh((c/\eps)^2)) \cdot \Oh(n/(c\cdot\eps)^2)$ time.
\end{lemma}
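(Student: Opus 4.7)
The plan is to invoke the hypothesized weak $r$-division and solve the problem optimally inside each piece. I would fix $r=\Theta((c/\eps)^2)$, chosen large enough that $r>s$; if $n\leq r$ (only a constant-size instance), solve the problem directly on $G$. Otherwise, first compute a weak $r$-division $\mathcal{P}$ of $G$ in $\mathsf{div}(n)$ time with $G[P]\in\mathcal{G}$ for every $P$, and set $B=\bnd(\mathcal{P})$, so $|B|=\Oh(n/\sqrt{r})$ and $|\mathcal{P}|=\Theta(n/r)$. Then, for each group $P$, invoke the assumed treewidth-based algorithm on $G[P\setminus B]$: its treewidth is at most $\tw(r)$ because $|P|\leq r$ and treewidth does not grow on induced subgraphs. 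Output the union $S$ of these per-group optima. The stated running time follows by adding $\mathsf{div}(n)$ to $\mathcal{T}(\tw(r),r)$ summed over $\Theta(n/r)$ groups.

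For correctness, i.e., to show that $G[S]$ satisfies $\Pi$, I would use the fact that the separator-based construction in the proof of \cref{lem:division} produces divisions in which every edge of $G$ has both endpoints in some common group $P$. Consequently, after deleting $B$ every connected component of $G-B$ lies entirely inside a single $G[P\setminus B]$, so the components of $G[S]$ are exactly the components of the per-group solutions, and $\Pi$ is inherited by $G[S]$ because it is determined by components.

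For the approximation ratio, let $S^\star$ be an optimum. Because $\mathcal{G}$ is $c$-colorable, a largest color class in a proper $c$-coloring of $G$ has size at least $n/c$ and is edgeless; since any nontrivial hereditary property determined by components is satisfied by every edgeless graph, $|S^\star|\geq n/c$. By heredity, $S^\star\setminus B$ remains feasible in $G-B$, hence $|S|\geq |S^\star|-|B|\geq |S^\star|-\Oh(n/\sqrt{r})$. With $r=\Theta((c/\eps)^2)$, this gives $|S|\geq |S^\star|\bigl(1-\Oh(c/\sqrt{r})\bigr)\geq (1-\eps)|S^\star|$, as required.

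The step I expect to require the most care is the \emph{edge-coverage} property: the definition of a weak $r$-division given in the excerpt constrains only how vertices are shared and does not by itself guarantee that every edge of $G$ lies inside some $G[P]$. Without edge-coverage, $G[S]$ could pick up cross-group edges that merge per-group components into a component violating $\Pi$, and the additive loss against an optimum could also be undercounted. I would justify edge-coverage by a short induction on the recursion tree of \cref{lem:division}: at each split the separator is placed into both recursive subgraphs, so every edge ultimately has both endpoints inside a single leaf group. The remainder of the argument is then bookkeeping.
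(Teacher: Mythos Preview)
Your proposal is correct and essentially identical to the paper's proof: compute a weak $r$-division with $r=\Theta((c/\eps)^2)$, solve optimally on the interior of each group via the treewidth algorithm, and output the union, lower-bounding the optimum by $n/c$ via $c$-colorability. You are in fact more careful than the paper on the feasibility step---you correctly flag that the union satisfies $\Pi$ only because distinct groups' interiors are non-adjacent in $G$, and you justify this from the recursive separator construction in \cref{lem:division}, whereas the paper simply asserts it (your side remark that the approximation loss could also be undercounted without edge-coverage is not quite right, since heredity alone gives $|S|\geq |S^\star\setminus B|$, but this is immaterial).
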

\begin{proof}
The proof is adapted from Chiba et al.~\cite{ChibaNS81} and Lipton and Tarjan~\cite{LiptonT80}, strengthened to include the treewidth property.
Consider any $n$-vertex graph $G \in {\cal G}$. We may assume $G$ is connected. Otherwise, we run the algorithm on each connected component separately and
return the union of solutions. Since $\Pi$ is hereditary and determined by the components, this yields a valid solution. Moreover, the approximation ratio obtained for the instance is at most the maximum of the ratios obtained for each of the components. Finally, without loss of generality, $s < 1/\eps^2$ by slightly decreasing $\eps$ and $1/\eps^2 < n$ or we can just solve the entire instance directly using the assumed algorithm for graphs of bounded treewidth.

Compute a weak $(1/\eps^2)$-division ${\cal P}$ of $G$ in $\mathsf{div}(n)$ time
using the assumed algorithm. By definition, ${\cal P}$ has $\Oh(n/\eps^2)$
groups. Each group $P$ of ${\cal P}$ has $\Oh(1/\eps^2)$ vertices and $G[P] \in
{\cal G}$. By the assumption of the lemma, $G[P]$ has treewidth
$\tw(\Oh(1/\eps^2))$. Hence, the interior of $P$ has treewidth
$\tw(\Oh(1/\eps^2))$ as well. For each $P \in {\cal P}$, use the assumed
algorithm to solve the {\sc Maximum Induced Subgraph with Property $\Pi$}
problem on the subgraph induced by the interior of $P$. Finally, output the union of these solutions.

Since $\Pi$ is determined by the components, the output is an induced subgraph with property $\Pi$. Clearly, the running time of the above algorithm is as claimed. It remains to prove the approximation ratio.

Let $S^*$ be a solution to the {\sc Maximum Induced Subgraph with Property
$\Pi$} problem. Since $\Pi$ is hereditary and determined by the components, any independent set satisfies $\Pi$. Since $G$ is $c$-colorable, this implies that $|S^*| \geq n/c$. Since ${\cal P}$ is a weak $(1/\eps^2)$-division, $|\bnd {\cal P}| = \Oh(\eps n)$. Then $|S^*
\setminus \bnd {\cal P}| \geq (1-\Oh(c\cdot\eps)) \cdot |S^*|$. Since $\Pi$ is hereditary, $S^*
\setminus \bnd {\cal P}$ is also an induced subgraph with property $\Pi$, which only
consists of interior vertices. Since the output solution is at least as large as
$S^* \setminus \bnd {\cal P}$, it follows that the output is a $(1-\Oh(c\cdot\eps))$-approximation. The lemma follows by adjusting $\eps$ before the start of the algorithm.
\end{proof}

We now apply this lemma to planar $\delta$ hyperbolic graphs.

\begin{theorem} \label{thm:misp}
Let $\Pi$ be a graph property that is hereditary and determined by the components. Suppose {\sc Maximum Induced Subgraph with Property $\Pi$} can be solved in $\mathcal{T}(k,n)$ time on $n$-vertex graphs of treewidth $k$. For any $\delta \geq 0$ and any $\eps > 0$, the class of planar $\delta$-hyperbolic graphs has a $(1-\eps)$-approximation algorithm for {\sc Maximum Induced Subgraph with Property $\Pi$} running in time $2^{\Oh(\delta)} \cdot n \log^{\separatorlogexponentplus{2}} n + \mathcal{T}(\Oh(\delta^2 + \delta\log (1/\eps)), 2^{\Oh(\delta)}/\eps^2) \cdot 2^{\Oh(\delta)} \cdot n/\eps^2$ on $n$-vertex graphs in the class.
\end{theorem}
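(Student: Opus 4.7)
The plan is to derive \cref{thm:misp} as a direct application of \cref{lem:div-to-algo} instantiated with the weak $r$-division from \cref{cor:ourdivision}. The three ingredients that \cref{lem:div-to-algo} demands are all at hand: planar $\delta$-hyperbolic graphs are planar and hence $4$-colorable by the Four Color Theorem ($c=4$); \cref{prp:tw-hyper} supplies the treewidth bound $\tw(m) = \Oh(\delta \log m)$; and \cref{cor:ourdivision} supplies a weak $r$-division in time $\mathsf{div}(n) = 2^{\Oh(\delta)} \cdot n \log^{\separatorlogexponentplus{2}} n$ whose groups are themselves planar $\delta$-hyperbolic.

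First I would dispose of disconnected inputs by running the algorithm on each connected component separately and taking the union of the solutions; validity and the approximation ratio are preserved because $\Pi$ is hereditary and determined by the components. Hence assume $G$ is connected.

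The second step is the choice of $r$. The proof of \cref{lem:div-to-algo} sets $r = \Theta((c/\eps)^2)$ and derives the approximation ratio from the canonical boundary bound $|\bnd(\mathcal{P})|=\Oh(n/\sqrt{r})$ built into the definition of a weak $r$-division. \cref{rem:ourdivision} reveals, however, that our division only achieves $|\bnd(\mathcal{P})| = \Oh(2^{\Oh(\delta)}\, n/\sqrt r)$, so a verbatim invocation would yield only a $(1 - 2^{\Oh(\delta)}\eps)$-approximation. The fix is to pre-scale the accuracy parameter to $\eps' = \eps / 2^{\Theta(\delta)}$ with a sufficiently large constant in the exponent and invoke \cref{lem:div-to-algo} with $\eps'$. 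This inflates $r$ to $\Theta(2^{\Oh(\delta)}/\eps^2)$; re-running the one-line approximation argument in the proof of \cref{lem:div-to-algo} with the actual boundary bound then delivers the desired $(1-\eps)$-approximation. By \cref{prp:tw-hyper}, each group---which is a planar $\delta$-hyperbolic subgraph on at most $r$ vertices---has treewidth $\Oh(\delta \log r) = \Oh(\delta^2 + \delta \log(1/\eps))$, matching the treewidth in the theorem.

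The running time then splits into the division cost $\mathsf{div}(n)$ and a call to $\mathcal{T}$ on the interior of each group. There are $\Theta(n/r) \leq 2^{\Oh(\delta)} n/\eps^2$ groups, each inducing a planar $\delta$-hyperbolic graph of size $2^{\Oh(\delta)}/\eps^2$ and treewidth $\Oh(\delta^2 + \delta \log(1/\eps))$, yielding the total cost $\mathcal{T}(\Oh(\delta^2 + \delta \log(1/\eps)),\, 2^{\Oh(\delta)}/\eps^2) \cdot 2^{\Oh(\delta)} n/\eps^2$ that the theorem claims. The only substantive obstacle is the $2^{\Oh(\delta)}$ inflation of the boundary noted in \cref{rem:ourdivision}, which forces the pre-scaling of $\eps$; degenerate edge cases where $r$ exceeds $n$ or drops below the constant threshold from \cref{cor:ourdivision} are handled by running $\mathcal{T}$ directly on $G$ using \cref{prp:tw-hyper}, and the resulting cost is already dominated by the stated bound.
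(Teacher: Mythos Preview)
Your proposal is correct and follows essentially the same route as the paper: instantiate \cref{lem:div-to-algo} with $c=4$ from the Four Color Theorem, $\tw(m)=\Oh(\delta\log m)$ from \cref{prp:tw-hyper}, and the weak $r$-division from \cref{cor:ourdivision}, then absorb the extra $2^{\Oh(\delta)}$ factor in the boundary bound of \cref{rem:ourdivision} by pre-scaling $\eps$. Your write-up is in fact more explicit than the paper's about the pre-scaling, the resulting treewidth $\Oh(\delta\log r)=\Oh(\delta^2+\delta\log(1/\eps))$, and the degenerate edge cases, but the argument is the same.
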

\begin{proof}
Recall that by \cref{prp:tw-hyper}, the treewidth of any $n$-vertex planar $\delta$-hyperbolic graph is $\Oh(\delta \log n)$. Moreover, planar graphs are $4$-colorable by the Four Color Theorem~\cite{AppelH1989,RobertsonSST97}. Furthermore, a weak $r$-division ${\cal P}$ of which each group is planar $\delta$-hyperbolic can be computed in $2^{\Oh(\delta)} \cdot n\log^{\separatorlogexponentplus{2}} n$ time by \cref{cor:ourdivision}. Recall that according to~\cref{rem:ourdivision}, it holds that $|\bnd({\cal P})| = \Oh(2^{\Oh(\delta)} n/\sqrt{r})$. This affects the approximation factor in the analysis of \cref{lem:div-to-algo} and $\eps$ must be adjusted by a factor~$2^{\Oh(\delta)}c$. Adjusting its analysis accordingly, the theorem follows by~\cref{lem:div-to-algo}.
\end{proof}

This allows us to straightforwardly prove Theorem~\ref{thm:mis}.

\theoremmis*
\begin{proof}
Note that {\sc Maximum Independent Set} is a {\sc Maximum Induced Subgraph with Property $\Pi$} problem. The corresponding $\Pi$ is hereditary and determined by the components. It is well known that {\sc Maximum Independent Set} can be solved in $2^{\Oh(\tw)} n$ time on $n$-vertex graphs of treewidth $\tw$~\cite{ArnborgP89}. Additionally, a tree decomposition of width $\Oh(\tw)$ can be computed in $2^{\Oh(\tw)} n$ time~\cite{BodlaenderDDFLP16,Korhonen21} (or even faster on planar graphs~\cite{KammerT16,GuX19}). We then apply \cref{thm:misp} and notice that the running time is $2^{\Oh(\delta)} \cdot n \log^{\separatorlogexponentplus{2}} n + 2^{\Oh(\delta^2)} \cdot \frac{1}{\eps^{\Oh(\delta)}} \cdot 2^{\Oh(\delta)} \cdot \frac{1}{\eps^2} \cdot \Oh(n/\eps^2)$. The theorem follows.
\end{proof}

We note that \cref{thm:misp} has other applications. We refer to Chiba et al.~\cite{ChibaNS81} for further suggestions, but highlight one well-known problem here, namely {\sc Maximum Induced Forest}. This is the complementary problem to {\sc Minimum Feedback Vertex Set}.

\begin{theorem} \label{thm:mif}
For any $\delta \geq 0$ and any $\eps > 0$, the class of planar $\delta$-hyperbolic graphs has a $(1-\eps)$-approximation algorithm for {\sc Maximum Induced Forest} running in time $2^{\Oh(\delta)} \cdot n \log^{\separatorlogexponentplus{2}} n + 2^{\Oh(\delta^2)} n/\eps^{\Oh(\delta)}$.
\end{theorem}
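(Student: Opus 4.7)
The plan is to invoke \cref{thm:misp} essentially verbatim, just as was done for \cref{thm:mis}, with the role of {\sc Maximum Independent Set} replaced by {\sc Maximum Induced Forest}. So first I would verify that {\sc Maximum Induced Forest} fits the framework of \cref{thm:misp}, i.e., it is an instance of {\sc Maximum Induced Subgraph with Property $\Pi$} for a property $\Pi$ that is both hereditary and determined by the components. Take $\Pi$ to be ``$G$ is acyclic''. Any induced subgraph of an acyclic graph is acyclic, so $\Pi$ is hereditary; and the disjoint union of any family of forests is again a forest, so $\Pi$ is determined by the components.

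Next I would record the required single-shot running time on graphs of bounded treewidth. The standard dynamic programming for {\sc Maximum Induced Forest} on a tree decomposition of width $k$ runs in $2^{\Oh(k)}\cdot n$ time (the state at a bag tracks, for each subset of the bag taken into the forest, a partition of that subset into current tree-components, so the table has $k^{\Oh(k)}\cdot 2^k$ entries per bag; this can be sharpened to $2^{\Oh(k)}n$ via the standard ``rank-based'' or ``cut-and-count'' techniques for connectivity/acyclicity constraints, see e.g.\ the framework of Cygan et al.). Combined with the near-linear computation of a width-$\Oh(\tw)$ tree decomposition~\cite{BodlaenderDDFLP16,Korhonen21}, this gives $\mathcal T(k,n)=2^{\Oh(k)}n$.

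Having these two ingredients, I would simply apply \cref{thm:misp} with this $\mathcal T$. The running time becomes
\[
2^{\Oh(\delta)}\cdot n\log^{\separatorlogexponentplus{2}} n \;+\; 2^{\Oh(\delta^{2}+\delta\log(1/\eps))}\cdot 2^{\Oh(\delta)}\cdot n/\eps^{2},
\]
and since $2^{\Oh(\delta\log(1/\eps))}=\eps^{-\Oh(\delta)}$, the second term simplifies to $2^{\Oh(\delta^{2})}\cdot n/\eps^{\Oh(\delta)}$, matching the stated bound.

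No step is a serious obstacle; the only mildly non-routine point is producing the $2^{\Oh(\tw)}n$-time treewidth DP for {\sc Maximum Induced Forest} (as opposed to the naive $\tw^{\Oh(\tw)}n$), for which I would cite the standard connectivity-DP technology rather than reprove it. Everything else is an immediate consequence of \cref{thm:misp}.
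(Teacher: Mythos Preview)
Your proposal is correct and follows essentially the same route as the paper: verify that acyclicity is hereditary and determined by components, cite a $2^{\Oh(k)}\cdot\poly(n)$ treewidth algorithm for {\sc Maximum Induced Forest}, and plug into \cref{thm:misp}. The only cosmetic difference is that the paper cites a $2^{\Oh(k)}n^3$ algorithm via the reduction to {\sc Minimum Feedback Vertex Set}~\cite{BodlaenderCKN15} rather than the single-exponential connectivity DP you mention, but either bound is absorbed into the final $2^{\Oh(\delta^2)} n/\eps^{\Oh(\delta)}$ term.
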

\begin{proof}
We follow the same arguments as in the proof of \cref{thm:mis}. Note that the
{\sc Maximum Induced Forest} problem can be solved in $2^{\Oh(k)} n^3$ time on
$n$-vertex graphs of treewidth $k$ (by the straightforward reduction to the {\sc
Minimum Feedback Vertex Set} problem~\cite{BodlaenderCKN15}).
\end{proof}

\subsection{Near-linear Time FPTAS for the {\sc Traveling Salesperson} Problem}
Now, we present the near-linear time FPTAS for the {\sc Traveling Salesperson} problem.

Recall that a tour in a graph $G$ is a closed walk in $G$ that visits
every vertex of $G$ at least once. For a tour $R$, we use $|R|$ to denote the
length of the tour. By $\tau(G)$, we denote the length of a shortest tour of $G$.

We require the following notion in the analysis. Recall that, for any graph $G$
and a set $T \subseteq V(G)$ of even size, a \emph{$T$-join} is a multiset $F$
of edges in $G$ such that in the multigraph $(V,F)$, each vertex of $T$ has odd
degree, and all other vertices have even degree. As shown by
Edmonds~and~Johnson~\cite{EdmondsJ73}, any connected graph $G$ has a $T$-join,
and any minimum-size $T$-join is a set of edge-disjoint paths in $G$. We can
observe that if $G$ is a tree, then we can construct the paths as follows: root
the tree arbitrarily and iteratively add the path in the tree between the two
vertices of $T$ that have the lowest-common-ancestor and have not been
previously connected.

We say that a separator $Z$ of a graph $G$ is \emph{connected} if $G[Z]$ is
connected. Recall that the separator returned by \cref{thm:separator} is
connected. The following argument is reminiscent of ideas in Grigni et al.~\cite{GrigniKP95} and Klein~\cite{Klein08} as well as of the Christofides-Serdyukov algorithm~\cite{Christofides76,Serdyukov78} by its use of $T$-joins.

\begin{lemma} \label{lem:tour-sep}
Let $G$ be an $n$-vertex graph and let $Z \subset V(G)$ be a non-empty connected separator of $G$. Consider any bi-partition of the connected components of $G-Z$ and let $A, B \subseteq V(G)$ denote the sets of vertices of the two parts.
\begin{enumerate}[label=(\roman*)]
\item\label{lem:tour-sep-merge} If $R_A$ and $R_B$ are tours of $G[A \cup Z]$ and $G[B \cup Z]$ respectively, then the union of these tours is a tour of $G$.
\item\label{lem:tour-sep-split} If $R$ is a tour of $G$, then $G[A \cup Z]$ and $G[B \cup Z]$ have tours $R_A$ and $R_B$ respectively such that $|R_A|+|R_B| \leq |R| + 4\cdot |Z|$.
\end{enumerate}
\end{lemma}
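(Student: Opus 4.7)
The plan is to dispatch part (i) by a direct splicing argument, and to handle part (ii) via the standard $T$-join trick for completing an edge multiset to an Eulerian multigraph.

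For (i), I would pick any vertex $z \in Z$, which exists since $Z \neq \emptyset$. Both $R_A$ and $R_B$ visit $z$, so after cyclic rotation we may assume both start and end at $z$. Their concatenation is a closed walk of length $|R_A|+|R_B|$ that visits $(A \cup Z) \cup (B \cup Z) = V(G)$, hence a tour.

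For (ii), given $R$, I would first partition the edge multiset of $R$ according to the tripartition $V(G) = A \sqcup Z \sqcup B$: let $E_A^*$ be the $R$-edges with at least one endpoint in $A$, let $E_B^*$ be those with at least one endpoint in $B$, and let $E_Z$ be those with both endpoints in $Z$. Because $Z$ separates $A$ from $B$, the multisets $E_A^*$ and $E_B^*$ are disjoint and $|R| = |E_A^*| + |E_B^*| + |E_Z|$. I would then build a multigraph $M_A$ on vertex set $A \cup Z$ with edge multiset $E_A^* \cup E(T) \cup J_A$, where $T$ is a spanning tree of $G[Z]$ (which exists since $Z$ is connected) and $J_A$ is the minimum $T_A$-join inside $T$, with $T_A \subseteq Z$ being the set of vertices whose degree in $E_A^* \cup E(T)$ is odd. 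The tour $R_A$ will be an Eulerian circuit of $M_A$, and $R_B$ is built symmetrically from $M_B = E_B^* \cup E(T) \cup J_B$.

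The three standard checks are that $M_A$ is connected, spans $A \cup Z$, and has all degrees even. Even degrees hold by the choice of $J_A$ provided $|T_A|$ is even; this follows since $\sum_{a \in A} \deg_R(a)$ is even, so the number of $R$-edges between $A$ and $Z$ is even, and $\deg_T$ contributes a globally even amount. Spanning holds because every $a \in A$ is incident to some $R$-edge (hence to $E_A^*$) and every $z \in Z$ lies in $T$. Connectivity holds because $T$ connects all of $Z$, while every component of $E_A^*$ in $G[A \cup Z]$ must contain some vertex of $Z$: an all-$A$ component could have no $R$-edges leaving it (since $Z$ separates $A$ from $B$), contradicting the fact that $R$ visits the nonempty set $Z$.

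The quantitative step is the length bound, where the constant $4$ comes from a careful accounting. In a tree the minimum $T_A$-join is the unique edge subset containing those $e \in E(T)$ for which one side of $T-e$ holds an odd number of $T_A$-vertices; in particular $|J_A| \leq |E(T)| = |Z|-1$. Thus $|R_A| \leq |E_A^*| + 2(|Z|-1)$ and analogously $|R_B| \leq |E_B^*| + 2(|Z|-1)$, giving $|R_A|+|R_B| \leq |E_A^*|+|E_B^*|+4(|Z|-1) \leq |R|+4|Z|$ as desired. The delicate point is resisting the temptation to double the spanning tree to settle connectivity and parity simultaneously: that would cost roughly $3|Z|$ per side and yield only a $6|Z|$ bound, whereas using a single copy of $T$ for connectivity and spanning and a separate $T$-join on top of it is what delivers the stated constant $4$.
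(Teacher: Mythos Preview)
Your proof is correct and is essentially the same as the paper's: for part (ii) both arguments take the $R$-edges touching $A$ (your $E_A^*$ is exactly the edge multiset of the paper's walk collection $\mathcal{W}$), add one copy of a spanning tree of $G[Z]$ for connectivity, fix the parities with a $T$-join inside that tree, and take an Euler tour, yielding the $+2(|Z|-1)$ overhead per side. The only cosmetic difference is that you phrase things in terms of an explicit edge tripartition $E_A^*,E_B^*,E_Z$, whereas the paper speaks of walks, and your justification that $|T_A|$ is even is slightly roundabout (the handshake lemma on the multigraph $E_A^*\cup E(T)$ gives it directly).
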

\begin{proof}
To see~\ref{lem:tour-sep-merge}, note that since $Z$ is non-empty, the tours $R_A$ and $R_B$ share a vertex (namely, any vertex of $Z$). Merge the tours on this vertex to obtain a tour $R$ of $G$.

To see~\ref{lem:tour-sep-split}, note that since the separator is connected,
$G[Z]$ has a spanning tree $S$. By the same reasoning, $G[A \cup Z]$ and $G[B
\cup Z]$ are connected and thus each admits a tour. Apply the following
construction with respect to $A$. Let ${\cal W}$ be the set of walks that is
obtained from $R$ by removing all vertices of $B$ and all edges of $E(G[Z])$
from $R$. Let $H$ be the multi-graph on $A \cup Z$ induced by the walks of
${\cal W}$ and the edges of $S$. Let $T$ be the set of vertices in $H$ that have
odd degree. Since the ends of the non-closed walks in ${\cal W}$ are all in $Z$,
$T \subseteq Z$ and $|T|$ is even. Let $F \subseteq E(S)$ be a $T$-join in $S$
that is a union of edge-disjoint paths, and let ${\cal L}$ be this set of paths. Recall from the previous discussion that ${\cal L}$ exists, since $S$ is a tree. Let $H'$ be the multi-graph on $A \cup Z$ induced by the walks of ${\cal W}$, the edges of $S$, and the paths of ${\cal L}$. Note that $H'$ is connected, because $S$ is connected and all walks in ${\cal W}$ have a vertex in $Z$, and that every vertex in $H'$ has even degree by the definition of a $T$-join. Hence, $H'$ has an Eulerian tour $R_A$. Observe that $R_A$ is a tour of $G[A \cup Z]$ by construction. Moreover, $H'$ has at most $2\cdot |Z|$ more edges than ${\cal W}$, since ${\cal L}$ is a set of edge-disjoint paths in $S$.

By the same construction, we obtain a tour $R_B$ of $G[B \cup Z]$. Observe that the total length increases by at most $4 \cdot |Z|$, since the sets ${\cal W}$ of walks constructed with respect to $A$ and $B$ are edge disjoint.
\end{proof}

\begin{lemma} \label{lem:tour-analysis}
Let ${\cal G}$ be a graph class that admits a connected separator that satisfies the conditions of \cref{lem:sep-pump} with functions $\alpha, {\sf sep}$ for the balance and size of the separator respectively. Suppose that ${\sf sep}(n)/\alpha(n) = \Oh(\sqrt{n})$. Let $G \in {\cal G}$. For some integer $r$, let ${\cal P}$ be the weak $r$-division that is constructed for $G$ following \cref{lem:division} using \cref{lem:sep-pump} applied to the assumed separator. Then:
\begin{enumerate}[label=(\roman*)]
\item\label{lem:tour-analysis-merge} For each $P \in {\cal P}$, let $R_P$ be a tour of $G[P]$. The union of these tours is a tour for $G$;
\item\label{lem:tour-analysis-split} $G$ has a tour $R$ of length $\tau(G) + \Oh(n/\sqrt{r})$ such that $R$ contains a closed subwalk that is a tour of $G[P]$ for each $P \in {\cal P}$.
\end{enumerate}
\end{lemma}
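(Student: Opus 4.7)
The plan is to establish both parts of the lemma by induction along the recursion tree produced when constructing the weak $r$-division, exploiting that at each split we can invoke \cref{lem:tour-sep} with the underlying connected separator. To make this precise, I would view the composed procedure (\cref{lem:division} on top of \cref{lem:sep-pump}) as a single tree in which each internal node corresponds to the application of one connected separator $Z_i$ arising from the assumed base separator, and where the leaves are exactly the groups $P\in{\cal P}$. At each internal node with current subgraph $G_i$, the children are $G_i[A_i\cup Z_i]$ and $G_i[B_i\cup Z_i]$ for the bipartition dictated by the current level of the recursion.

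For part~\ref{lem:tour-analysis-merge}, I would proceed bottom-up along this tree. A leaf already has a tour $R_P$ of its group by hypothesis. At an internal node with separator $Z_i$ and (inductive) tours of the two children-subgraphs, I use \cref{lem:tour-sep}\ref{lem:tour-sep-merge} to glue them into a tour of $G_i$; this is a valid application because $Z_i$ is connected and non-empty, as the split actually happened. Iterating to the root yields a tour of $G$ whose edge set is the union of the $R_P$, as required.

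For part~\ref{lem:tour-analysis-split}, I would run the dual process: start with an optimal tour $R^*$ of $G$ of length $\tau(G)$, and at each internal node apply \cref{lem:tour-sep}\ref{lem:tour-sep-split} to split the current tour across $Z_i$, paying at most $4|Z_i|$ in additional total length. At the leaves this produces tours $R_P$ of each $G[P]$, with total length at most $\tau(G)+4\sum_i|Z_i|$. Applying the merging of part~\ref{lem:tour-analysis-merge} then produces a single tour $R$ of $G$; since merging only glues tours at common vertices, every $R_P$ survives as a closed subwalk of $R$, delivering the structural requirement of the statement.

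The main obstacle is the length bound $\sum_i|Z_i|=\Oh(n/\sqrt r)$. I would argue this sum obeys the same recurrence that bounds $|\bnd({\cal P})|$ in the proof of \cref{lem:division}: at any subproblem of size $n'$ one top-level step contributes $\Oh({\sf sep}(n')/\alpha(n'))=\Oh(\sqrt{n'})$ by the lemma's hypothesis that ${\sf sep}(n)/\alpha(n)=\Oh(\sqrt n)$, after which recursion is on two subproblems of sizes $\alpha_i n'$ with $\alpha_i\leq 2/3+c/\sqrt{n'}$ and $\sum\alpha_i\leq 1+c/\sqrt{n'}$, stopping once the size drops to $r$. The Klein et al.\ solution of this recurrence (Lemma~2 of~\cite{KleinMS2012}, invoked already in the proof of \cref{lem:division}) gives the $\Oh(n/\sqrt r)$ bound, so the final tour $R$ has length at most $\tau(G)+\Oh(n/\sqrt r)$, completing part~\ref{lem:tour-analysis-split}.
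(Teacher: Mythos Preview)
Your proposal is correct and follows the same approach as the paper, whose own proof is extremely terse (it simply appeals to \cref{lem:tour-sep} together with the recursive constructions of \cref{lem:sep-pump} and \cref{lem:division}, and notes that connectedness of each base separator keeps all intermediate pieces connected and hence each subsequent separator non-empty). One small point deserves care: when you unfold a single \cref{lem:division} step into its chain of base-separator applications from \cref{lem:sep-pump}, the natural caterpillar tree does not literally have $G'[X\cup Z']$ and $G'[Y\cup Z']$ as its two leaves, so the assertion that ``the leaves are exactly the groups $P\in\mathcal P$'' needs a short additional argument (re-merging the side branches appropriately via \cref{lem:tour-sep}\ref{lem:tour-sep-merge} before descending to the next \cref{lem:division} level); the paper glosses over the same detail.
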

\begin{proof}
This immediately follows from \cref{lem:tour-sep} and from the recursive constructions and analyses of \cref{lem:sep-pump} and \cref{lem:division}. Note that the fact that the separator is connected ensures that each of the produced parts in the recursive construction again induces a connected subgraph, and thus the separator is always non-empty.
\end{proof}

We now again apply the approach of Lipton and Tarjan~\cite{LiptonT80} in a similar fashion as in \cref{lem:div-to-algo}.

\begin{lemma} \label{lem:tsp-gen}
Let ${\cal G}$ be a graph class and let $\mathsf{div},\tw: \mathbb{N} \rightarrow \mathbb{N}$ be monotone non-decreasing functions possibly depending on ${\cal G}$). Suppose that for any $n$-vertex graph $G \in {\cal G}$ has treewidth at most $\tw(n)$. Moreover, suppose that ${\cal G}$ admits a connected separator that satisfies the conditions of \cref{lem:tour-analysis}. Then, by \cref{lem:division} and \cref{lem:sep-pump}, there exists a constant integer $s$ such that for any $s < r < n$, one can compute in $\mathsf{div}(n)$ time a weak $r$-division of $G$ such that $G[P] \in {\cal G}$ for each group $P \in {\cal P}$. If the {\sc Traveling Salesperson} problem can be solved in $\mathcal{T}(k,n)$ time on $n$-vertex graphs of treewidth $k$, then for any $\epsilon > 0$, the {\sc Traveling Salesperson} problem has a $(1+\eps)$-approximation algorithm running in time $\mathsf{div}(n) + \mathcal{T}(\tw(\Oh(1/\eps^2)),\Oh(1/\eps^2)) \cdot \Oh(n/\eps^2)$.
\end{lemma}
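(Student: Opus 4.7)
The plan is to combine \cref{lem:tour-analysis} with the hypothesized treewidth-based exact TSP algorithm, following the same recursive-separator framework used in \cref{lem:div-to-algo}. Let $G \in \mathcal{G}$ be the input on $n$ vertices; we may assume $G$ is connected (otherwise no tour exists). If $n$ is smaller than some constant multiple of $1/\eps^2$, solve the instance directly via the treewidth algorithm, which fits into the second term of the target running time. Otherwise, fix $r = \Theta(1/\eps^2)$ large enough that $r > s$ (so the weak $r$-division is available) and that the additive error from part~\ref{lem:tour-analysis-split} of \cref{lem:tour-analysis} is at most $\eps\, n$.

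Next, compute a weak $r$-division $\mathcal{P}$ of $G$ in $\mathsf{div}(n)$ time. Each group $G[P]$ lies in $\mathcal{G}$ and hence has treewidth at most $\tw(r) = \tw(\Oh(1/\eps^2))$. For every $P \in \mathcal{P}$, invoke the assumed algorithm to produce a shortest tour $R_P$ of $G[P]$ in $\mathcal{T}(\tw(r), r)$ time; summing over the $\Theta(n/r)$ groups accounts for the second term of the running time. Finally, use part~\ref{lem:tour-analysis-merge} of \cref{lem:tour-analysis} bottom-up along the recursion used to build $\mathcal{P}$ to glue the $R_P$'s into one tour $R$ of $G$; since the recursive separators are connected and non-empty, every pair of adjacent regions shares a merging vertex, and by \cref{lem:tour-sep}\ref{lem:tour-sep-merge} the concatenation satisfies $|R| = \sum_{P \in \mathcal{P}} |R_P|$.

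For the approximation analysis, apply part~\ref{lem:tour-analysis-split} of \cref{lem:tour-analysis} to an optimal tour of $G$. This supplies a tour $R^\ast$ of $G$ of length at most $\tau(G) + \Oh(n/\sqrt{r})$ together with a decomposition of $R^\ast$ into closed subwalks $W_P$, one per $P \in \mathcal{P}$, each of which is a tour of $G[P]$. Since $R_P$ is a shortest such tour, $|R_P| \leq |W_P|$, so
\[
|R| \;=\; \sum_{P \in \mathcal{P}} |R_P| \;\leq\; \sum_{P \in \mathcal{P}} |W_P| \;\leq\; |R^\ast| \;\leq\; \tau(G) + \Oh\!\bigl(n/\sqrt{r}\bigr).
\]
Using the trivial lower bound $\tau(G) \geq n$ (a closed walk visiting all $n$ vertices uses at least $n$ edges), the choice $r = \Theta(1/\eps^2)$ yields $|R| \leq (1 + \Oh(\eps))\,\tau(G)$, and rescaling $\eps$ at the start of the algorithm gives the claimed $(1+\eps)$-approximation.

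The main subtlety is the inequality $\sum_P |W_P| \leq |R^\ast|$, which requires the closed subwalks $W_P$ to lie essentially edge-disjointly inside $R^\ast$. This is implicit in the recursive construction behind part~\ref{lem:tour-analysis-split} of \cref{lem:tour-analysis}: each recursive step invokes part~\ref{lem:tour-sep-split} of \cref{lem:tour-sep}, which strips the current separator's walk segments from the tour and inserts only a short $T$-join before descending into the two sides, so edges on either side of each separator are charged exactly once. Once this accounting is verified, the rest of the argument is bookkeeping along the recursive division, exactly as in \cref{lem:div-to-algo}.
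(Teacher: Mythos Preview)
Your proposal is correct and follows essentially the same approach as the paper: compute a weak $(1/\eps^2)$-division, solve each group optimally via the treewidth algorithm, merge the tours via \cref{lem:tour-analysis}\ref{lem:tour-analysis-merge}, and compare against the near-optimal tour guaranteed by \cref{lem:tour-analysis}\ref{lem:tour-analysis-split}. If anything, you are more explicit than the paper about the key inequality $\sum_P |W_P| \leq |R^\ast|$; this is indeed an equality, since in the construction behind \cref{lem:tour-analysis}\ref{lem:tour-analysis-split} the tour $R^\ast$ is obtained precisely by merging the $W_P$'s back up, and merging at a shared vertex adds no edges.
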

\begin{proof}
Consider any $n$-vertex graph $G \in {\cal G}$. We may assume that $G$ is connected, because otherwise we can simply answer that $G$ has no tour. Next, compute a weak $(1/\eps^2)$-division ${\cal P}$ of $G$ in $\mathsf{div}(n)$ time using the assumed algorithm. By definition, ${\cal P}$ has $\Oh(n/\eps^2)$ groups. Each group $P$ of ${\cal P}$ has $\Oh(1/\eps^2)$ vertices and $G[P] \in {\cal G}$. By the assumption of the lemma, $G[P]$ has treewidth $\tw(\Oh(1/\eps^2))$. For each $P \in {\cal P}$, apply the assumed algorithm to $G[P]$ to obtain a tour $R_P$. Finally, apply \cref{lem:tour-analysis}\ref{lem:tour-analysis-merge} to merge the tours into a tour $R$ of $G$.

We first analyze the approximation ratio. Observe that $\tau(G) \geq n$ and, by definition, $|\bnd {\cal P}| = \Oh(\eps n)$. Hence, by \cref{lem:tour-analysis}\ref{lem:tour-analysis-split}, $G$ has a tour $R^*$ of length at most $\tau(G) + \Oh(\eps n) \leq (1+\Oh(\eps)) \cdot \tau(G)$ such that $R^{*}$ contains a closed subwalk that is a tour of $G[P]$ for each $P \in {\cal P}$. Observe that $R$ is obtained as a union of optimal tours of $G[P]$ for each $P \in {\cal P}$. Hence, $|R| \leq (1+\Oh(\eps)) \cdot \tau(G)$. The lemma follows by adjusting $\epsilon$ before the start of the algorithm.

It is immediate by \cref{lem:tour-analysis}\ref{lem:tour-analysis-merge} and the description of the algorithm that it computes a tour of $G$ in the stated time.
\end{proof}

Finally, we complete the proof of~\cref{thm:tsp}.

\theoremtsp*
\begin{proof}
Recall that by \cref{prp:tw-hyper}, 
the treewidth of any $n$-vertex planar $\delta$-hyperbolic graph is $\Oh(\delta \log n)$. It is known that the {\sc Traveling Salesperson} problem can be solved in $2^{\Oh(\tw)} \cdot n$ time on $n$-vertex graphs of treewidth $\tw$~\cite[Appendix~D]{Le18} (see also Dorn et al.~\cite[Lemma~2]{DornPBF10} for such an algorithm on planar graphs of bounded treewidth). Additionally, a tree decomposition of width $\Oh(\tw)$ can be computed in $2^{\Oh(\tw)} n$ time~\cite{BodlaenderDDFLP16,Korhonen21} (or even faster on planar graphs~\cite{KammerT16,GuX19}). We now apply the algorithm of \cref{lem:tsp-gen} with \cref{cor:oursep-pump}. The analysis is the same as in \cref{thm:mis}.
\end{proof}

\section{Lower Bound for \textsc{Independent Set}} \label{sec:lower}
The goal of this section is to prove the following theorem.

\theoremlower*

Let $\grid_n$ denote the grid of size $n\times n$ where we have added the
diagonal $(a,b)(a+1,b+1)$ in each cell. The starting point of our reduction will
be the following result of De Berg~\etal\cite{BergBKMZ20}.

\begin{theorem}[Implicit in Theorems 3.2 and 4.2 of \cite{BergBKMZ20}]
\label{thm:ETH_starter}
    There is no $2^{o(n)}$ algorithm for \textsc{Independent Set} in graphs given as (induced) subgraphs of $\grid_n$, unless ETH fails.
\end{theorem}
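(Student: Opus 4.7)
The plan is to reduce the standard version of 3-\textsc{Satisfiability} to \textsc{Independent Set} on an induced subgraph of $\grid_n$, keeping the grid side length linear in the square root of the 3-SAT size. By the Impagliazzo--Paturi--Zane sparsification lemma, we may assume ETH in the stronger form that there is no $2^{o(N+M)}$-time algorithm for 3-SAT on $N$ variables and $M=O(N)$ clauses. If a $2^{o(n)}$-time algorithm existed for \textsc{Independent Set} on subgraphs of $\grid_n$, then a reduction producing an instance with $n=O(\sqrt{N+M})=O(\sqrt{N})$ would solve 3-SAT in time $2^{o(\sqrt{N})}$, and since $o(\sqrt{N})\subseteq o(N)$ this contradicts ETH.

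The first step is to embed the variable--clause incidence structure of the input formula $\phi$ rectilinearly into a grid of side $O(\sqrt{N+M})$. Using a classical Knuth--Raghunathan-style orthogonal planarization of the incidence graph (the same tool underlying Lichtenstein-type planar 3-SAT reductions), one places each variable and each clause in its own grid cell and routes each variable--clause incidence as a rectilinear path of total length $O(N+M)$, so that the embedding fits inside an $n_0\times n_0$ grid with $n_0=O(\sqrt{N+M})$. We then scale the embedding by a constant factor so that every ``node'' of the embedding occupies a $c\times c$ tile of $\grid_n$ for an absolute constant $c$, keeping $n=O(\sqrt{N})$.

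The second step is to build constant-size \textsc{Independent Set} gadgets inside these tiles using the edge set of $\grid_n$. The required gadgets are: a wire gadget, realised as an even cycle or ladder laid out in the grid, whose two maximum independent sets encode the two Boolean values of the signal and force the signal to propagate; bend and fork gadgets that allow wires to turn and let a single variable feed several clauses; a clause gadget whose independent set value is maximised precisely when at least one of its three incoming wires carries the ``true'' value; and a crossover gadget that decouples two wires meeting at the same tile. The diagonals of $\grid_n$ are essential here: they supply the extra adjacencies needed to implement a constant-size crossover in the grid, and they also keep the other gadgets compact. One then fixes a target value $k^\star$ computed from the construction such that $\phi$ is satisfiable if and only if the resulting induced subgraph of $\grid_n$ has an independent set of size at least $k^\star$.

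The main obstacle is the gadget engineering and the accompanying case analysis: one must verify that every gadget has exactly the intended two maximum independent sets, that these configurations glue consistently across tile boundaries along wires, bends and forks, and, most delicately, that the crossover gadget using the diagonals genuinely decouples the two crossing signals so that no spurious correlation is introduced. Because each gadget occupies only $O(1)$ grid cells, the entire construction still lives in $\grid_n$ with $n=O(\sqrt{N+M})$, and the ETH transfer outlined in the first paragraph then delivers the claimed $2^{o(n)}$ lower bound for \textsc{Independent Set} on (induced) subgraphs of $\grid_n$.
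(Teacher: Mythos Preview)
The paper does not prove this statement; it is quoted as a black box from~\cite{BergBKMZ20}. So there is no ``paper's proof'' to compare against, only your sketch to evaluate on its own.

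Your overall strategy (sparsify, lay out variables and clauses in a grid, connect with wires, use constant-size wire/bend/fork/clause/crossover gadgets living in $\grid_n$, then transfer ETH) is the standard one and is fine in outline. However, the quantitative claim in your first and second paragraphs is wrong and cannot be salvaged as stated. You assert that the variable--clause incidence structure of an arbitrary (sparsified) 3-SAT instance can be laid out in an $n_0\times n_0$ grid with $n_0=O(\sqrt{N+M})$ and total wire length $O(N+M)$. This is false: the incidence (multi)graph of a general 3-CNF is not planar and, after sparsification, is essentially an arbitrary bounded-degree bipartite graph, which may be an expander. An expander on $\Theta(N)$ vertices has bisection width $\Theta(N)$, while any $O(\sqrt{N})\times O(\sqrt{N})$ host grid has bisection width $O(\sqrt{N})$; hence no embedding with bounded-length edges (equivalently, total wire length $O(N)$) exists. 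There is no ``Knuth--Raghunathan'' trick that circumvents this, and Lichtenstein-style planar 3-SAT only applies once you have already replaced crossings by crossover gadgets. In the standard reductions (including the one implicit in~\cite{BergBKMZ20}) one places the $N$ variables and $M=O(N)$ clauses along two parallel tracks; the wires then incur $\Theta(N^2)$ pairwise crossings, each replaced by a constant-size crossover gadget, and the whole picture fits in a grid of side $n=\Theta(N)$, not $\Theta(\sqrt{N})$.

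Fortunately this does not break your conclusion. You only need $n=O(N)$: a hypothetical $2^{o(n)}$ algorithm on subgraphs of $\grid_n$ would then solve 3-SAT in $2^{o(N)}$ time, which already contradicts ETH. So the fix is simply to replace the erroneous $O(\sqrt{N+M})$ by $O(N+M)$, drop the claim about $O(N+M)$ total wire length, and make explicit that the $\Theta(N^2)$ crossovers are what force the grid side to be $\Theta(N)$. With that correction (and assuming you actually exhibit the constant-size gadgets inside $\grid_n$, which is routine once the diagonals are available), your sketch is a valid proof of the statement.
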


Observe that the maximum independent set of $H$ can be associated with a maximum independent set of any graph $H'$ that is obtained by subdividing each edge of $H$ an even number of times~\cite{Po74}. Such graphs $H'$ are called even subdivisions of $H$. More precisely, $H$ has an independent set of size $k$ if and only if $H'$ has an independent set of size $k+\frac{|V(H')|-|V(H)|}{2}$.

Let $G$ be a given subgraph of $\grid_n$. Our goal now is to construct a corresponding $\delta$-hyperbolic graph $G'$ that has an independent set of size $k_0+k$ for some fixed $k$ if and only if $G$ has an independent set of size $k_0$. This will be achieved in several steps: first we construct a simple $\Theta(\delta)$-hyperbolic graph $B$, which is then shown to contain some subdivision $\grid_B$ of $\grid_n$. This implies that $B$ contains a subdivision $\Gamma$ of $G$. We will then modify $B$ with simple gadgetry to get the desired graph $G'$ where we can prove the correspondence between independent sets in $G'$ and $G$.

\subparagraph*{Constructing a simple $\Theta(\delta)$-hyperbolic graph $B_\grid$.}

\begin{figure}[t]
    \centering
    \includegraphics[scale=0.7]{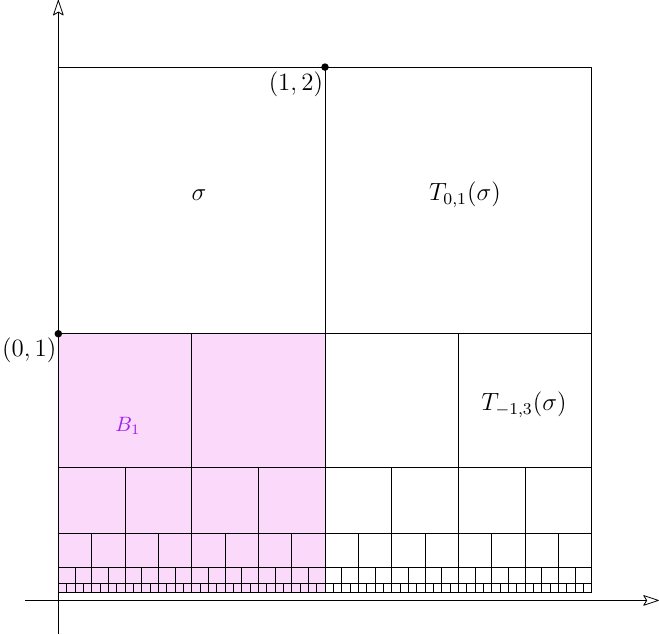}
    \caption{Binary tiling $B_0$ defined through the transformation $T_{a,b}$. Transforming the square $\sigma=[0,1]\times [1,2]$ with $T_{a,b}$ for all integers $a,b$ results in a tiling of the upper half-plane $\Uu$. The shaded part belongs to the finite subgraph $B_1$.}
    \label{fig:binarytiling}
\end{figure}

Consider the \emph{binary tiling}  of the hyperbolic plane introduced by B\"or\"oczky~\cite{Boro}, which we define next as seen in~\cite{cannon97}. 
The tiling can be defined in the open upper half-space $\Uu:=\{(x,y)\in \real^2
\mid y>0\}$ as follows. For $(a,b)\in \Z^2$ define the transformation $T_{a,b}:\Uu\rightarrow \Uu$ as
\[T_{a,b}(p) := 2^a \cdot (p+(b,0)),\]
that is, $T_{a,b}$ translates with a vector $(b,0)$ for some $b \in \Z$ and then scales all coordinates by $2^a$ for some $a \in \Z$. Let $\sigma=[0,1]\times [1,2]$ be the unit square with lower left corner $(0,1)$, see Figure~\ref{fig:binarytiling}. Observe that $\{T_{a,b}(\sigma)\mid (a,b)\in \Z^2\}$ is a tiling of $\Uu$ with squares. The drawing of this tiling gives rise to a graph where vertices are the tile corners and edges are either edges or half-edges of the tiles. We call this infinite planar graph the \emph{binary tiling graph}. One can manually prove (by checking the slim triangle property) that the binary tiling graph $B_0$ is $\delta_0$-hyperbolic for some constant $\delta_0$. Alternatively, one can use the fact that $B_0$ is \emph{quasi-isometric} to the \emph{hyperbolic plane} to show that the $B_0$ has some constant hyperbolicity $\delta_0$. We do not introduce these terms here, but refer the interested reader to Gromov's original work~\cite{gromov87}, the book chapter ~\cite[III.H]{BH99}, and the notes~\cite{cannon97}.

Observe now that in $B_0$ the shortest paths between any pair of vertices $(x,y)$ and $(x',y')$ where $0\leq x\leq x'\leq 1$ and $y,y'\leq 1$ stays in the rectangle $[0,1]\times [\min(y,y'),1]$. Consequently, the part of $B_0$ that falls in a rectangle $[0,1]\times [y_0,1]$ (with $y_0\in \{2^k\mid k\in \Z,\, k\leq 0\}$) forms a geodesic subgraph of $B_0$, thus the values of Gromov products are preserved and the part remains $\delta_0$-hyperbolic.

Let $\hat \delta$ denote the least integer multiple of $4$ that is greater than $\delta$. Consider the subgraph $B_1$ of $B_0$ induced by the vertices inside the rectangle $[0,1]\times [2^{-\lceil 9n/{\hat \delta} \rceil},1]$. (Recall that we are given a graph $G$ as a subgraph of $\grid_n$, so $n$ is the grid size.) The above argument shows that $B_1$ is a geodesic subgraph of $B_0$ and therefore it is $\delta_0$-hyperbolic.

Let us now subdivide each tile of $B_1$ into a grid of size $\hat \delta \times \hat \delta$, and adding diagonals to get a copy of $\grid_{\hat \delta}$ in each tile. More precisely, we subdivide each edge of $B_1$ with $\hat \delta -1$ vertices that are placed at equal distances to get a new graph $B_1^\delta$.
Consider some tile $\sigma$ of $B_1$. Notice that in $B_1$ the tile $\sigma$ had $5$ boundary edges (two edges along the bottom), so in $B_1^\delta$ its boundary cycle $\sigma_T$ consists of $5\hat \delta$ edges. We divide this square $\sigma$ into $\hat \delta^2$ small squares of the same size with $\hat \delta-1$ horizontal lines and $\hat \delta-1$ vertical lines. Notice that these lines land at every other vertex on the bottom of the boundary cycle $C_\sigma$. In each tile we also add a diagonal. Finally, for each edge of the resulting drawing that is not an edge of $B_1^\delta$ we subdivide the edge $2$ times. We apply the same procedure in all tiles $\sigma$ of $B_1^\delta$. 

Let $B_\grid$ denote the planar graph created this way. 
See \cref{fig:lowerbound_defs}.
We call the edges of $B_1^\delta$ \emph{boundary} edges, and all other edges of $B_\grid$ are called \emph{inner} edges. For technical reasons, we will need to work with some custom graph $B$ that arises from $B_\grid$ by subdividing the inner edges $E(B_\grid)\setminus E(B_1^\delta)$ \emph{at most once}. Let $\Bb_\grid$ denote the set of graphs that can be obtained from $B_\grid$ by subdividing inner edges at most once.
\begin{figure}
    \centering
    \includegraphics[width = \textwidth]{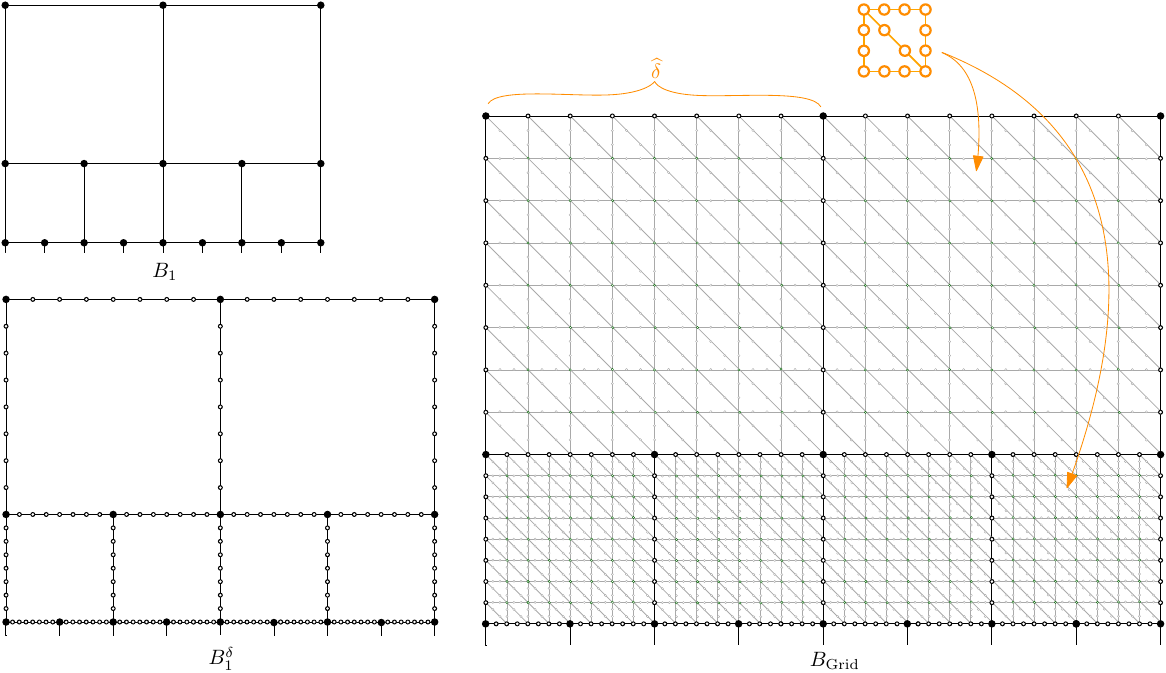}
    \caption{Partial view of $B_1$, $B_1^\delta$, and $B_\grid$ for $\hat \delta=8$.}
    \label{fig:lowerbound_defs}
\end{figure}

\begin{lemma}\label{lem:B_hyperbolicity}
    Each graph $B\in \Bb_\grid$ has hyperbolicity $\Theta(\delta)$.
\end{lemma}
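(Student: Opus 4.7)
I would prove the two inequalities of $\Theta(\delta)$ separately. Throughout I use the projection $\pi:V(B)\to V(B_0)$ that sends each vertex of $B$ to a fixed closest corner of a $B_0$-tile containing it. Note that $V(B_0)\subseteq V(B)$ and that $\pi$ restricts to the identity on $V(B_0)$, so $\pi$ is surjective onto $V(B_0)$, and $d_B(v,\pi(v))\le 2\hat\delta$ for every $v\in V(B)$.

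For the lower bound $\Omega(\delta)$, I would exhibit four vertices whose Gromov sums differ by $\Omega(\hat\delta)$. Take $x=(0,1)$, $y=(1,1)$, $z=(1,2)$, $w=(0,2)$, the four $B_0$-corners of the tile $\sigma=[0,1]\times[1,2]$. Since the bottom of $\sigma$ is two $B_0$-edges while each of the three remaining sides is a single $B_0$-edge, the pairwise $B$-distances among $x,y,z,w$ evaluate to $d_B(x,y)=2\hat\delta$, $d_B(y,z)=d_B(z,w)=d_B(w,x)=\hat\delta$, and $d_B(x,z)=d_B(y,w)=2\hat\delta$. In each case the $B_1^\delta$-boundary route is shortest: the interior grid cannot shortcut it because every inner edge is a path of at least $3$ sub-edges (and the anti-diagonal $x\,z'$-type pairs have no interior diagonal available at all), while exterior routes via neighbouring tiles are no shorter than $\hat\delta$ times the corresponding $B_0$-distance. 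The three Gromov sums then equal $3\hat\delta$, $4\hat\delta$, and $2\hat\delta$, so $\{x,y,z,w\}$ witnesses hyperbolicity at least $\hat\delta/2=\Omega(\delta)$.

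For the upper bound $\Oh(\delta)$, my plan is to show that after rescaling the edge lengths of $B$ from $1$ to $1/\hat\delta$, the map $\pi$ becomes a $(1,\Oh(1))$-quasi-isometry onto $B_0$. The key estimate is
\[
d_B(a,b) \;=\; \hat\delta\cdot d_{B_0}(a,b) \qquad \text{for all } a,b\in V(B_0).
\]
The upper inequality is immediate because $B_1^\delta\subseteq B$ is the $\hat\delta$-subdivision of $B_0$. For the lower inequality I would argue tile-by-tile: any $B$-path from $a$ to $b$ can be cut into maximal sub-paths each contained in a single tile $\sigma$, with entry and exit points $p,q\in\partial\sigma$; since within $\sigma$ every interior route between boundary points is longer than the $B_1^\delta$-boundary route of $\sigma$ (inner edges are $3$ or $4$ sub-edges while the full $B_1^\delta$-perimeter of $\sigma$ is only $5\hat\delta$), summing shows the $B$-length of the whole path is at least $d_{B_1^\delta}(a,b)=\hat\delta\cdot d_{B_0}(a,b)$. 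Combined with $d_B(v,\pi(v))\le 2\hat\delta$, this yields $|\tilde d_B(u,v)-d_{B_0}(\pi(u),\pi(v))|\le 4$ in the rescaled metric $\tilde d_B=d_B/\hat\delta$, i.e.\ a $(1,4)$-quasi-isometry. Since $B_0$ is $\delta_0$-hyperbolic for an absolute constant $\delta_0$, quasi-isometry invariance of hyperbolicity (Chapter~III.H of~\cite{BH99}) implies that the rescaled $B$ is $\Oh(1)$-hyperbolic, and hence $B$ in its unit metric is $\Oh(\hat\delta)=\Oh(\delta)$-hyperbolic.

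The main obstacle I foresee is carefully verifying the per-tile inequality for every configuration of $p,q\in\partial\sigma$ (same side, adjacent sides, opposite sides) and for paths that leave and re-enter a tile several times; this is a case check combined with an induction on the number of tile crossings. The rescaling step is likewise essential, since without it the quasi-isometry constants $(\Theta(\hat\delta),\Theta(\hat\delta))$ would yield through Bridson--Haefliger's theorem only an $\Oh(\hat\delta^2)$ hyperbolicity bound, a factor of $\hat\delta$ short of the claimed $\Theta(\delta)$.
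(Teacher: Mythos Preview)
Your overall strategy coincides with the paper's: both directions rest on the key estimate that $B$-distances between tile-corner vertices equal $\hat\delta$ times the corresponding distances in the underlying binary tiling, together with the fact that every vertex of $B$ is within $\Oh(\hat\delta)$ of such a corner. The paper carries this out by directly bounding Gromov products (and uses a slim triangle for the lower bound), while you package the same estimates as a $(1,\Oh(1))$-quasi-isometry after rescaling and invoke the Bridson--Haefliger invariance theorem (and use the four-point condition for the lower bound). These are equivalent routes to the same conclusion.

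There is, however, a genuine error threaded through your write-up: you conflate the infinite graph $B_0$ with its finite restriction $B_1$. The graph $B$ is built from $B_1$, so $V(B_0)\not\subseteq V(B)$ (the left-hand side is infinite), your projection $\pi$ actually lands in $V(B_1)$, and the image is certainly not cobounded in $B_0$, so you do not get a quasi-isometry onto $B_0$. You do get one onto $B_1$, and since the paper establishes that $B_1$ is a geodesic subgraph of $B_0$ and hence $\delta_0$-hyperbolic, this is exactly what you need. The same confusion breaks your lower bound as written: the tile $\sigma=[0,1]\times[1,2]$ has $y$-coordinates in $[1,2]$, but $B_1$ contains only vertices with $y\le 1$, so none of your four points $x,y,z,w$ is in the graph. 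The paper uses the bottom-right tile of $B_1$ instead, whose boundary cycle is geodesic because two of its sides lie on the outer boundary of $B_1$; any tile of $B_1$ would work for your four-point computation once you check that detours through neighbouring tiles are no shorter.

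With $B_0$ replaced by $B_1$ throughout, your argument is correct. Your per-tile claim (that interior routes between boundary points are never shorter than the $B_1^\delta$-boundary route) is precisely what the paper asserts in one sentence; the case-check you anticipate is routine given that inner edges have length at least $3$ while adjacent boundary sub-vertices are at distance $1$ or $2$.
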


\begin{proof}
    First we show that each $B\in\Bb_\grid$ has a triangle with slimness $\Omega(\hat \delta)=\Omega(\delta)$, which implies that the hyperbolicity of $B$ is $\Omega(\delta)$.
    Let $\sigma$ denote the tile in the bottom right corner of $B_1$, and let $a$ be the bottom right corner of $B_1$. Let $b,c$ be the neighbors of $a$ in $B_1$, and consider the unique triangle of geodesics given by $a,b,c$ in $B_\grid$, which is given by the boundary of $\sigma$. Notice that the boundary of $\sigma$ is also geodesic when some inner edges of $B_\grid$ are subdivided to get $B$: indeed, for any pair of vertices on the boundary the shortest path along boundary edges remains the shortest path overall, i.e., the boundary cycle is geodesic. The proposed triangle has a side of length $2\hat \delta$ and its slimness is therefore $\hat \delta$, implying hyperbolicity $\Omega(\hat \delta)=\Omega(\delta)$.
    
    To show that the hyperbolicity of $B$ is $\Oh(\delta)$, we will first prove
    that $B_1^\delta$ has hyperbolicity $\Oh(\delta)$.
    Since $B_1^\delta$ is obtained from $B_1$ by subdividing each edge $\hat
    \delta -1$ times, clearly, for all $a,b\in V(B_1)$, we have $\hat\delta \cdot \dist_{B_1}(a,b)=
    \dist_{B_1^\delta}(a,b)$. When $a \in V(B_1^\delta)\setminus V(B_1)$, then let
    $a_1,a_2\in V(B_1)$ be the endpoints of the edge in $B_1$ for which $a$ was
    a subdividing vertex. Observe that a shortest path from $a$ that has length
    at least $\hat\delta$ must also go through either $a_1$ or
    $a_2$. Thus for any $a,b\in V(B_1^\delta)\setminus V(B_1)$ we get
    \[|\dist_{B_1^\delta}(a,b) - \hat\delta \cdot \dist_{B_1}\!(a',b')| =
    \Oh(\hat \delta) = \Oh(\delta),\]
    where $a',b'\in V(B_1)$ are the nearest vertices of $B_1$ to $a$ and $b$, respectively. Using the same notation for $x\in V(B_1^\delta)$ and $x'\in V(B_1)$ and denoting the Gromov product in graph $G$ with $(.,.)_.^G$, we get for any $a,b,x\in V(B_1^\delta)$ the following:
    \[|(a,b)_x^{B_1^\delta} - \hat\delta \cdot (a',b')^{B_1}_{x'} | = \Oh(\delta).\]
    Consequently, by the $\delta_0$-hyperbolicity of $B_1$ for any quadruple $x,y,z,w\in V(B_1^\delta)$ we get:
    \begin{align*}
    \min\big( (x,y)_w^{B_1^\delta}, (y,z)_w^{B_1^\delta}\big)-
    (x,z)_w^{B_1^\delta} &\leq \hat\delta \cdot \Big( \min\big(
    (x',y')_{w'}^{B_1}, (y',z')_{w'}^{B_1}\big)- (x',z')_{w'}^{B_1}
\Big)+\Oh(\delta)\\ &\leq \hat \delta \delta_0 +\Oh(\delta)=\Oh(\delta),
    \end{align*}
    thus $B_1^\delta$ has hyperbolicity $\Oh(\delta)$.

    Moving on to some $B\in \Bb_\grid$, consider now some vertex $x\in V(B)$.
    The vertex $x$ falls in some tile $\sigma_x$ of $B_1$; let $x'$ denote the
    lower left corner of $\sigma_x$. (When $x$ is on the boundary of two tiles,
    i.e., on some edge of $B_1$, then select $x'$ so that it is to the left or
below $x$ on the same edge of $B_1$.)  Note that $\dist_{B}(x,x')=\Oh(\hat \delta)$. Due to the subdivisions in the construction of $B_\grid$ we have that for any $a,b\in V(B_1^\delta)$ any shortest path between $a$ and $b$ in $B$ is using only boundary edges, and thus
    $|\dist_B(x,y) - \dist_{B_1^\delta}(x',y')| = |\dist_B(x,y) -
    \dist_B(x',y')|  = \Oh(\hat \delta)$. It follows that the Gromov product
    $(x,y)^B_w$ in $B$ differs from the Gromov product
    $(x',y')^{B_1^\delta}_{w'}$ in $B_1^\delta$ by some additive $\Oh(\delta)$.
    Thus the hyperbolicity of $B$ is at most $\Oh(\delta)$ larger than the hyperbolicity of $B_1^\delta$, concluding the proof of the claim.
\end{proof}

\subparagraph*{Embedding a subdivision of $\grid_n$ into $B_\grid$.}

A \emph{row} of $B_1$ consists of tiles of $B_1$ that have the same $y$-projections. We number the rows top to bottom, starting with index $1$. By definition $B_1$ has $\lceil 9n/\hat \delta \rceil-1$ rows, where row $i$ has $2^i$ tiles. 
Let $i_0 = \lceil \log(4n/\hat \delta) \rceil$ be the first row with at least $4n/\hat\delta$ faces.
Our construction will use only rows $i_0+i$ for $i\in [2n]$. In each row we number the faces left to right, so face $j$ of row $i_0+i$ has lower left corner $\big((j-1)\cdot 2^{-i_0-i}, 2^{-i_0-i}\big)$; we denote this tile by $\sigma(i,j)$. 
Inside a tile $\sigma(i,j)$ the graph $B_\grid$ is further subdivided into $\hat\delta \times \hat\delta$ sub-tiles (with diagonals), where rows are also indexed top-down. 
We refer to the lower right corner of the subtile in row $x$ and column $y$ as $v^{\sigma(i,j)}_{x,y}$.

Observe that $B_1$ has $2^{\lceil 9n/\hat \delta \rceil}-2 = 2^{\Oh(n/\delta)}$
tiles, thus $B_\grid$ has $\Oh(\hat\delta^2)\cdot 2^{\Oh(n/\delta)} = 2^{\Oh(n/\delta)}$ vertices.

Consider now the graph $\grid_n$, and let $v_{a,b}$ be its vertex in row $a$ and
column $b$ for $a,b\in \{0,1,\dots,n-1\}$, 
and recall that the graph also has the diagonal
edges $v_{a,b},v_{a+1,b+1}$ for all $a,b\in \{0,1,\dots,n-2\}$.

We are now ready to define an embedding of the vertices of $\grid_n$ into the
vertices of $B_\grid$. We do this by subdividing the $n\times n$ grid into
square parts of size $\hat\delta/4 \times \hat\delta/4$, and put the
corresponding parts, spaced evenly, into some tiles of $B_\grid$. We pick tiles
only in rows of index $i_0+2i-1$
and also in the (first) row we skip even index columns. Since the
number of tiles increases exponentially as we move down the rows, we need to
increase the column indices as well so that the pictures of point $i,j$ and
$i+1,j$ are somewhat aligned vertically, see~\cref{fig:embedding_grid}.

\begin{figure}
    \centering
    \includegraphics[width = \textwidth]{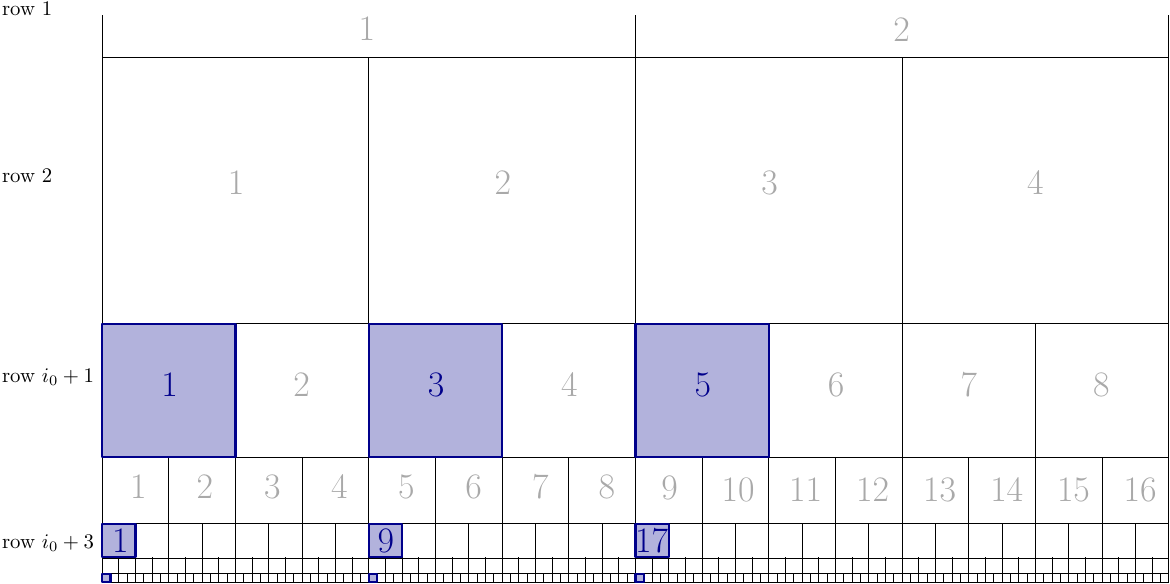}
    \caption{View of embedding $\grid_6$ into $B_\grid$ when $\hat     \delta=8$, together with the numbered faces in each row. The inner grid, edge subdivisions, and diagonal edges in $B_\grid$ are omitted for clarity. Parts of size $\hat \delta/4\times \hat \delta/4$ are assigned to the shaded tiles in $B_\grid$. }
    \label{fig:embedding_grid}
\end{figure}

\begin{figure}
    \centering
    \includegraphics[width = \textwidth]{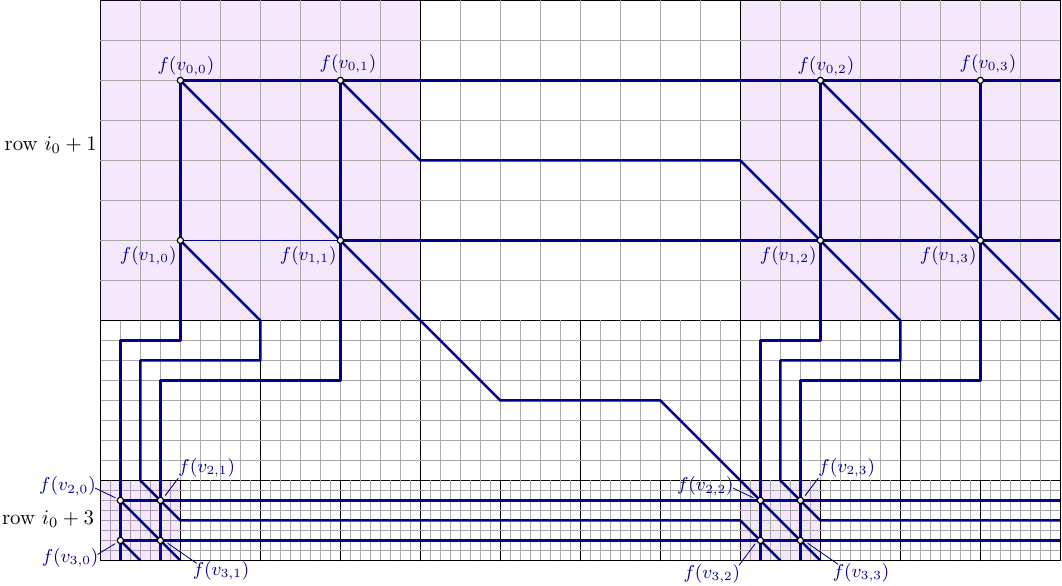}
    \caption{Partial view of embedding $\grid_n$ into $B_\grid$ when $\hat
    \delta=8$. The edge subdivisions and most diagonal edges in $B_\grid$ are omitted for clarity. Parts of size $\hat \delta/4\times \hat \delta/4$ are assigned to the shaded tiles in $B_\grid$. Inside the shaded tiles we have a natural drawing of the relevant part of $\grid_n$, and paths that leave these tiles are connected using the grids in the tiles of $B_\grid$.}
    \label{fig:embedding}
\end{figure}
Formally, let $f_V:V(\grid_n)\rightarrow V(B_\grid)$ as $f(v_{a,b}):= v^{\sigma(i,j)}_{x,y}$ where
\begin{align*}
    i &= i_0+1+2\lfloor a\cdot 4/\hat\delta \rfloor\\
    j &= 1+ 2^{i-i_0}\cdot \lfloor b\cdot 4/\hat\delta \rfloor\\
    x &= 4\cdot (a \bmod \hat\delta/4)+2\\
    y &= 4\cdot (b \bmod \hat\delta/4)+2\\
\end{align*}
and $x \bmod m = x - m\cdot \lfloor x/m \rfloor$.
Next, for each edge $uv$ of $\grid_n$ we need to assign a path $\pi_{uv}$ of
$B_\grid$ connecting $f(u)$ and $f(v)$ in a way so that the union of these
paths gives a subgraph of $B_\grid$ that is a subdivision of $\grid_n$. Such an
assignment is easy to make and is best illustrated with a figure
(see~\cref{fig:embedding}). For vertices $u,v$ in the same row of $\grid_n$ we can simply use a horizontal path to connect $f(u)$ and $f(v)$ in $B_\grid$. Similarly, when $u,v$ are vertical or diagonal neighbors and fall in the same $\hat\delta\times \hat\delta$ group, then we can use a vertical or diagonal paths to connect them inside the corresponding tile of $B_\grid$.

Diagonal neighbors that fall into two horizontally neighboring parts of $\grid_n$ are easily connected by a path that has $2$ diagonal edges, followed by a long horizontal segment and two more diagonal edges. As a result, tiles of $B_\grid$ that are between two horizontal edges will be crossed by $\hat \delta/2 -1$ horizontal paths.

For neighbors that fall into vertically neighboring parts of $\grid_n$, we can describe the corresponding paths as follows. Let $\sigma(i,j)$ and $\sigma(i+2,4j-3)$ be the corresponding tiles. Within $\sigma(i,j)$ we assign the vertical (and diagonal) edges from the bottom row of the $f(.)$ vertices along a straight vertical (respectively, straight diagonal) path, until we hit the boundary of the tile. Similarly, we use straight diagonals and verticals to represent these edges inside $\sigma(i+2,4j-3)$. 
It is then easy to see that in the two tiles $\sigma(i+1,2j-1)$ and $\sigma(i+1,2j)$ below $\sigma(i,j)$ we need to connect $\hat\delta/2 - 1$ 
points that are equally spaced along the union of their top boundary to $\hat\delta/2 - 1$ points that are placed equally along the shared boundary of $\sigma(i+1,2j-1)$ and $\sigma(i+2,4j-3)$, preserving left-to right orders. Here the odd index connections correspond to vertical edges of $\grid_n$ while the even ones correspond to diagonal connections. Clearly these connections can be achieved using $\hat \delta/2 - 1$ rectilinear paths inside the $2\hat\delta\times \hat \delta$ grid inside the tiles $\sigma(i+1,2j-1)$ and $\sigma(i+1,2j)$.
Finally, for diagonal neighbors that are in diagonally neighboring parts, let $\sigma(i,j)$ and $\sigma(i+2,j')$ be the tiles assigned to the diagonal parts. We connect the bottom right $f(.,.)$-vertex in $\sigma(i,j)$ with a straight diagonal to the middle of the tile $\sigma(i+1,2j+1)$, which is horizontally connected to the middle of $\sigma(i+1,(j'-1)/2)$, which is connected with a straight diagonal to the top left vertex in $\sigma(i+2,j')$. This concludes the description of the embedding. It is routine to check that the union of the described paths is a subdivision of $\grid_n$ which is a subgraph of $B_\grid$. Since $f$ assigns all vertices of $\grid_n$ to vertices in $B_\grid$ that are not on tile boundaries, we have proved the following observation.

\begin{observation}\label{obs:embedding}
We can construct an embedding of $\grid_n$ into $B_\grid$ such that each path $\pi_{uv}$ corresponding to edge $uv$ of $\grid_n$ contains at least one inner edge of $B_\grid$. The construction takes time that is polynomial in $|V(B_\grid)|$.
\end{observation}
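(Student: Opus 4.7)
The plan is to verify the three claims of the observation: (i) the construction yields a subgraph of $B_\grid$ that is a subdivision of $\grid_n$, (ii) each path $\pi_{uv}$ contains at least one inner edge, and (iii) the construction can be carried out in time polynomial in $|V(B_\grid)|$. The real content is combinatorial bookkeeping; there is no deep argument hidden here, only the need to check that the case-based routing described in the text is globally consistent.

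For claim (ii), the key observation is that every image vertex $f_V(v_{a,b}) = v^{\sigma(i,j)}_{x,y}$ lies strictly in the interior of its host tile. Since $\hat\delta$ is a multiple of $4$ and $a \bmod \hat\delta/4 \in \{0,1,\dots,\hat\delta/4-1\}$, we have $x = 4(a \bmod \hat\delta/4) + 2 \in \{2, 6, \dots, \hat\delta - 2\}$, and symmetrically for $y$. Thus $f_V(v_{a,b})$ is at sub-tile distance at least $2$ from every side of $\sigma(i,j)$, so all edges incident to it are inner edges of $B_\grid$ (no boundary edge of $B_1^\delta$ touches the interior of a tile). Consequently the first edge of every $\pi_{uv}$ starting at $f_V(u)$ is an inner edge, yielding (ii).

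For claim (i), I would go through the four cases used in the construction---(a) horizontal neighbors inside the same part, (b) vertical/diagonal neighbors inside the same part, (c) horizontal/vertical/diagonal neighbors in two horizontally or vertically adjacent parts, and (d) diagonal neighbors in diagonally adjacent parts---and in each case confirm that the path $\pi_{uv}$ lies inside $B_\grid$ and that different paths are internally vertex-disjoint. The delicate case is (c) for vertically adjacent parts, where one must route $\hat\delta/2 - 1$ equally spaced entry points on the top boundary of the two intermediate tiles $\sigma(i+1,2j-1),\sigma(i+1,2j)$ to $\hat\delta/2 - 1$ equally spaced exit points on their shared boundary with $\sigma(i+2,4j-3)$, in an order-preserving way. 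This can always be done by a standard staircase rectilinear routing inside the $2\hat\delta \times \hat\delta$ sub-grid, where each path uses a dedicated horizontal row for its middle segment; the resulting paths are pairwise internally disjoint. The diagonal bridges of case (d) pass through middles of distinct intermediate tiles, so they do not collide with paths of case (c), and horizontal paths of case (a)/(c) use only rows dedicated to horizontal traffic inside their tiles. Since $f_V$ is injective into the interiors of the chosen tiles, the union of all paths forms a subdivision of $\grid_n$ as a subgraph of $B_\grid$.

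For claim (iii), recall that $|V(B_\grid)| = 2^{\Oh(n/\delta)}$, while $|V(\grid_n)| + |E(\grid_n)| = \Oh(n^2)$ and each path $\pi_{uv}$ has length $\Oh(\hat\delta)$, so enumerating the construction takes $\Oh(\hat\delta \cdot n^2) = \poly(|V(B_\grid)|)$ time. The main (and only) obstacle is the planarity bookkeeping in case (c), but the order-preserving rectilinear routing in a $2\hat\delta \times \hat\delta$ grid resolves this by inspection, as such routings clearly exist for any $\hat\delta/2 - 1 \le \hat\delta$ order-preserving terminal pairs.
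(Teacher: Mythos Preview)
Your approach is essentially the same as the paper's: the paper also derives claim~(ii) from the fact that every image vertex $f_V(v_{a,b})$ lies in the interior of its tile (your computation of $x,y\in\{2,6,\dots,\hat\delta-2\}$ makes this explicit), and treats claim~(i) as a routine case check of the described routing.

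One small correction in your argument for~(iii): the bound ``each path $\pi_{uv}$ has length $\Oh(\hat\delta)$'' is not correct. For horizontally adjacent $u,v$ whose images lie in different parts, the column indices $j$ of the two host tiles differ by $2^{i-i_0}$, so the horizontal path crosses that many tiles and has length $\hat\delta\cdot 2^{\Oh(n/\delta)}$ in the bottom rows. This does not break the conclusion, since $|V(B_\grid)|=2^{\Oh(n/\delta)}$ and the total length of all paths is still at most $|V(B_\grid)|$ (the paths are internally vertex-disjoint inside $B_\grid$), hence the construction remains polynomial in $|V(B_\grid)|$; but your intermediate bound should be replaced accordingly.
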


\subparagraph*{Construction wrap-up and lower bound proof.}

Given the input subgraph $G$ of $\grid_n$, we can use \Cref{obs:embedding} to get a subgraph $G_1$ of $B_\grid$ that is a subdivision of $G$. Note that  
$G_1$ may not be an \emph{even} subdivision, but whenever a path $\pi_{uv}$ has an even length, we can subdivide some inner edge of $\pi_{uv}$. Subdividing the required set of inner edges in $B_\grid$ leads to a graph $B\in \Bb_\grid$, and a corresponding subgraph $G_B$ that is now an even subdivision of $G$.

Next, we will change $B$ further by adding new vertices to the parts that are not in $G_B$ as follows. For each vertex $v\in V(B)\setminus V(G_B)$ we assign two new neighboring vertices $v_1$ and $v_2$; these new vertices will be connected only to $v$ itself. The resulting graph $G'$ still has $G_B$ as a subgraph, which is an even subdivision of $G$, and since adding new degree-one vertices to a graph does not alter its hyperbolicity, we know by \Cref{lem:B_hyperbolicity} that $G'$ has hyperbolicity $\Theta(\delta)$.

\begin{lemma}\label{lem:construction_correct}
    The graph $G$ has an independent set of size $k$ if and only if $G'$ has an independent set of size $k+\frac{|V(G_B)|-|V(G)|}{2} + \frac{2}{3}|V(G')\setminus V(G_B)|$.
\end{lemma}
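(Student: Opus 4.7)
The plan is to decompose the statement into two standard building blocks: the well-known equivalence for even subdivisions and a simple pendant-gadget analysis. I first would prove that $G$ has an independent set of size $k$ if and only if $G_B$ has an independent set of size $k+\frac{|V(G_B)|-|V(G)|}{2}$. Since $G_B$ is an \emph{even} subdivision of $G$, this is precisely the statement attributed to Poljak and cited in the paper as \cite{Po74}; I would include a short constructive argument for completeness. For the forward direction, given an independent set $S \subseteq V(G)$, for each edge $uv \in E(G)$ that was subdivided into a path $u,w_1,\dots,w_{2t_{uv}},v$, I pick $t_{uv}$ vertices from $\{w_1,\dots,w_{2t_{uv}}\}$ by alternating (starting from $w_1$ if $u\notin S$, from $w_2$ if $u\in S$); the choices are consistent with $S$ on both endpoints because $u,v$ are not both in $S$. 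For the reverse direction, given an independent set $S_B$ in $G_B$, an averaging argument along each subdivision path shows that $S_B$ can be modified, without decreasing its size, so that it contains exactly $t_{uv}$ internal vertices of each subdivision path, after which $S_B\cap V(G)$ is an independent set of size $|S_B|-\sum_{uv} t_{uv} = |S_B|-\frac{|V(G_B)|-|V(G)|}{2}$ in $G$.

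Next I analyze the pendant gadget. Recall that $G'$ is obtained from $B$ by attaching, to every $v\in V(B)\setminus V(G_B)$, two new vertices $v_1,v_2$ joined only to $v$. In particular $|V(G')\setminus V(G_B)| = 3\,|V(B)\setminus V(G_B)|$, so $\tfrac{2}{3}|V(G')\setminus V(G_B)| = 2|V(B)\setminus V(G_B)|$. The key observation is: for every $v\in V(B)\setminus V(G_B)$, any independent set $S'$ of $G'$ satisfies $|S'\cap\{v,v_1,v_2\}|\le 2$, with equality attained by $\{v_1,v_2\}$ (which is always compatible with any choice on $V(G')\setminus\{v,v_1,v_2\}$, since $v_1,v_2$ have $v$ as their only neighbor).

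With these two ingredients, both directions of the lemma follow cleanly. For the ``only if'' direction, take an independent set $S \subseteq V(G)$ of size $k$, extend it inside $V(G_B)$ to $S_B$ of size $k+\frac{|V(G_B)|-|V(G)|}{2}$ by the subdivision argument, and then add all pendant pairs $\{v_1,v_2\}$ for $v\in V(B)\setminus V(G_B)$; the result is independent in $G'$ (no chosen pendant leaf has a neighbor in $S_B$), of the required size $k+\frac{|V(G_B)|-|V(G)|}{2}+2|V(B)\setminus V(G_B)|$. For the ``if'' direction, start with an independent set $S'$ in $G'$ of that size. Since each triple $\{v,v_1,v_2\}$ contributes at most $2$ to $S'$, the part inside $V(G_B)$ satisfies $|S'\cap V(G_B)| \ge k + \frac{|V(G_B)|-|V(G)|}{2}$; because $G_B$ is an induced subgraph of $G'$, $S'\cap V(G_B)$ is independent in $G_B$, and the reverse subdivision argument produces an independent set of size at least $k$ in $G$.

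The only subtle point is that in the forward direction, the extension $S_B$ can use vertices of $V(G_B)$ that already appear inside $B$ and might coincide with a $v\in V(B)\setminus V(G_B)$ when we reason about pendants; but by definition these sets are disjoint, so there is no interaction and the proof goes through. No part of the argument is an obstacle; this is essentially bookkeeping with the standard subdivision trick plus a trivial gadget, and the hyperbolicity/planarity properties already established earlier play no role here.
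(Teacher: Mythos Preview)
Your proposal is correct and follows essentially the same two-step decomposition as the paper's proof: first the even-subdivision equivalence between $G$ and $G_B$, then the pendant-triple analysis between $G_B$ and $G'$. One small correction: you do not need (and likely do not have) that $G_B$ is an \emph{induced} subgraph of $G'$; it suffices that $E(G_B)\subseteq E(G')$, which already guarantees $S'\cap V(G_B)$ is independent in $G_B$.
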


\begin{proof}
    Since $G_B$ is an even subdivision of $G$, it has an independent set of size $k+\frac{|V(G_B)|-|V(G)|}{2}$ if and only if $G$ has an independent set of size $k$. Therefore it is sufficient to show that $G_B$ has an independent set of size $\ell$ if and only if $G'$ has an independent set of size $\ell + \frac{2}{3}|V(G')\setminus V(G_B)|$. Indeed, if $X$ is an independent set of size $\ell$ in $G_B$, then adding the vertices $v_1,v_2$ for all $v\in V(G')\setminus V(G)$ gives an independent set of the desired size in $G'$.
    In the other direction, if $X$ is an independent set in $G'$, then we can change it to a set $X'$ that contains all vertices $v_1,v_2$ for all $v\in V(G_1)\setminus V(G)$ without decreasing the size of $X$: we simply remove all vertices from $X\cap V(G_1)$ and add all vertices of $V(G')\setminus V(G_1)$.
    The resulting independent set $X'$ and the part of $X'$ that falls in $G_B$ have a size difference of exactly $\frac{2}{3}|V(G')\setminus V(G_B)|$, which concludes the proof.
\end{proof}

We are now ready to prove \Cref{thm:lower_bound}.

\begin{proof}[Proof of \Cref{thm:lower_bound}]
    Suppose the contrary: there exists an $n^{o(\delta)}$ algorithm for \textsc{Independent Set} in $\delta$-hyperbolic graphs.
    Let $G$ be given as a subgraph of $\grid_n$, and in time that is polynomial
    in $|V(G')|=\Oh(V(B_\grid))=2^{\Oh(n/\delta)}$ we construct the graph $G'$ as described above.
    Since $G'$ is $\Oh(\delta)$-hyperbolic, we can solve \textsc{Independent
    Set} in $G'$ in $|V(G')|^{o(\delta)}=2^{\Oh(n/\delta)\cdot o(\delta)}=
    2^{o(n)}$ time. Altogether, the above algorithm solves \textsc{Independent
    Set} in $G$ in $2^{\Oh(n/\delta)} + 2^{o(n)} =2^{o(n)}$ time, which (by \Cref{thm:ETH_starter}) contradicts ETH.
\end{proof}

\section{Discussion and Open Problems} \label{sec:discussion}
We conclude this paper with a discussion of future directions that have sprung up from our work.

A first natural question is whether there is a $n^{o(\delta)}$ time lower bound to solve the {\sc Traveling Salesperson} problem on (planar) $\delta$-hyperbolic graphs under the Exponential Time Hypothesis (ETH). This would match the running time of the exact algorithm of \cref{cor:tsp-exact} and also imply that the running time of the approximation scheme of \cref{thm:tsp} is essentially best possible. One might expect that this could be possible by suitably adjusting the construction of \cref{thm:lower_bound}. However, we recall that a crucial aspect of this construction was to plant a subdivision of a grid inside a large $\delta$-hyperbolic graph, somehow without this affecting the hardness proof. While for {\sc Independent Set} this was possible by some local modifications, it seems much harder to achieve this for the {\sc Traveling Salesperson} problem.

A next algorithmic question is whether there exist fast approximation schemes (i.e., a near-linear time FPTAS) for other problems on planar $\delta$-hyperbolic graphs. This paper develops such a scheme for {\sc Maximum Independent Set} by employing ideas of the recursive separator approach of Lipton and Tarjan~\cite{LiptonT80}. It is known, however, that this approach can also lead to approximation schemes for e.g.\ {\sc Minimum Vertex Cover}~\cite{ChibaNS81} and {\sc Minimum Feedback Vertex Set}~\cite{KleinbergK01}. Both these algorithms rely on the contraction of edges to ensure the solution size is at least linear in the number of vertices. However, planar $\delta$-hyperbolic graphs are not closed under edge contraction\footnote{A well-chosen $n$-size patch of the binary tiling (call this $G$) is constant-hyperbolic, but has an $\Omega(\log n)$-sized grid subdivision as a subgraph (call this $H$). If we contract some of the edges of $H$, we get a graph $G'$ that has a grid $H'$ of size $\Omega(\log n)$ in it. $H'$ itself has hyperbolicity $\Omega(\log n)$, and it is not too hard to show that even in $G'$ the shortest paths among vertices of $H'$ stay inside $H'$. So the hyperbolicity of $G'$ is $\Omega(\log n)$. However, it arose as a contraction of $G$, which has constant hyperbolicity.} and thus this approach cannot succeed. The non-hereditary nature of (planar) $\delta$-hyperbolic graphs stands in the way of other approaches as well. For {\sc Minimum Vertex Cover} in particular, a Baker-style approach~\cite{Baker94} is unlikely to succeed for this reason. Moreover, a standard conversion trick (see e.g.~\cite{Bar-YehudaHR11,Har-Peled20} and references therein) from an approximation scheme for {\sc Maximum Independent Set} to one for {\sc Minimum Vertex Cover} does not seem feasible for the same reason, since it essentially requires reduction (by vertex deletion) to a parameterized problem kernel. A similar approach is also known for {\sc Minimum Feedback Vertex Set}~\cite{BorradaileLZ19} and thus is unlikely to translate to our setting. Hence, {\sc Minimum Vertex Cover} and {\sc Minimum Feedback Vertex Set} seem interesting problems on planar $\delta$-hyperbolic graphs.

In the same vein, we wonder about approximation schemes for other network design problems. This would extend our scheme for the {\sc Traveling Salesperson} problem.  Therefore, we ask about a near-linear time FPTAS for problems such as {\sc Subset TSP} and {\sc Steiner Tree}. We deem the existence of such schemes to be highly plausible.

Turning to our separator theorem, we notice that the geodesic cycle separator that is returned by \cref{thm:separator} has size $\Oh(\delta)$ but balance $2^{-\Oh(\delta)}/\log n$. While we know that the size bound cannot be improved by the example of a $\delta$-cylinder, we do not know of any example that shows that the balance factor should be $2^{-\Oh(\delta)}/\log n$. Therefore, we ask whether we can find, in near-linear time, a geodesic cycle separator of balance equal to some constant.

Last but not least, we consider how to determine the hyperbolicity of a graph.
This can be naively done in $\Oh(n^4)$ time by following the definition of hyperbolicity. Fournier et al.~\cite{FournierIV15} improved this to $\Oh(n^{3.69})$ and even gave a $2$-approximation that runs in $\Oh(n^{2.69})$ time. Considering the context of planarity, Borassi et al.~\cite{BorassiCH16} proved that the hyperbolicity of a general sparse graph cannot be computed in subquadratic time under the Strong Exponential Time Hypothesis (SETH). However, this general result does not exclude the existence of a subquadratic algorithm on planar graphs. A potentially encouraging sign in this direction is the known linear-time algorithm for outerplanar graphs~\cite{CohenCDL17}. Hence, we repeat the question of Cohen et al.~\cite{CohenCDL17} and ask for a linear-time algorithm that computes the  hyperbolicity of planar graphs.

\bibliographystyle{plain}
\bibliography{hyprbolicrefs}

\end{document}